\theoremstyle{plain}
\newtheorem{theorem}{Theorem}
\newtheorem{lemma}{Lemma}
\newtheorem{proposition}{Proposition}
\newtheorem{corollary}{Corollary}
\newtheorem{remark}{Remark}
\theoremstyle{definition}
\newtheorem{definition}{Definition} 
\newtheorem{example}{Example}
\newcommand{\T}{\mathcal{T}}
\newcommand{\C}{\mathcal{C}}
\newcommand{\D}{\mathbf{D}}
\renewcommand{\phi}{\varphi}
\newcommand{\tuple}[1]{{\langle#1\rangle}}
\def\punto{$\hspace*{\fill}\Box$}
\newcommand{\cust}{\mbox{Cust}}
\newcommand{\ord}{\mbox{Ord}}
\newcommand{\lin}{\mbox{Item}}
\newcommand{\cname}{name}
\newcommand{\okey}{okey}
\newcommand{\ckey}{ckey}
\newcommand{\nop}[1]{}
\title{Factorised Representations of Query Results\footnote{A preliminary version has been submitted for publication on March 1, 2011.}}
\author{Dan Olteanu and Jakub Z\'{a}vodn\'{y}\\
Computing Laboratory, University of Oxford\\
Wolfson Building, Parks Road, OX1 3QD, Oxford, UK}
\date{}
\begin{document}

\maketitle

\begin{abstract}
Query tractability has been traditionally defined as a function of
input database and query sizes, or of both input and output sizes,
where the query result is represented as a bag of tuples. In this
report, we introduce a framework that allows to investigate
tractability beyond this setting. The key insight is that, although
the cardinality of a query result can be exponential, its structure
can be very regular and thus factorisable into a nested representation
whose size is only polynomial in the size of both the input database
and query.

For a given query result, there may be several equivalent
representations, and we quantify the regularity of the result by its
{\em readability}, which is the minimum over all its representations
of the maximum number of occurrences of any tuple in that
representation. We give a characterisation of select-project-join
queries based on the bounds on readability of their results for any
input database. We complement it with an algorithm that can find
asymptotically optimal upper bounds and corresponding factorised
representations.
\end{abstract}

\section{Introduction}
\label{sec:introduction}

This paper studies properties related to the representation of results of select-project-join queries under bag semantics. In approaching this challenge, we depart from the standard flat representation of query results as bags of tuples and consider nested representations of query results that can be exponentially more succinct than a mere enumeration of the result tuples. The relationship between a flat representation and a nested, or {\em factorised}, representation is on a par with the relationship between logic functions in disjunctive normal form and their equivalent nested forms obtained by algebraic factorisation. When compared to flat representations of query results, factorised representations are both succinct and informative.

\begin{figure}[t]
\begin{center}
\begin{small}
\begin{tabular}{@{~}c@{\quad}c@{\quad}c@{\quad}c@{~}}
\begin{tabular}{@{}c@{~}||@{~}c@{~}@{~}c@{~}@{~}}
Cust & ckey & name  \\
\hline
$c_1$ & 1     & Joe    \\
$c_2$ & 2     & Dan    \\
$c_3$ & 3     & Li     \\
$c_4$ & 4     & Mo     
\end{tabular}
&
\begin{tabular}{@{~}c@{~}||@{~}c@{~}c@{~}c@{~}}
Ord & ckey & okey & date\\
\hline
$o_1$ & 1     & 1     & 1995\\
$o_2$ & 1     & 2     & 1996\\
$o_3$ & 2     & 3     & 1994\\
$o_4$ & 2     & 4     & 1993\\
$o_5$ & 3     & 5     & 1995\\
$o_6$ & 3     & 6     & 1996
\end{tabular}
&
\begin{tabular}{@{~}c@{~}||@{~}c@{~}c@{~}}
Item & okey & disc \\
\hline
$i_1$ & 1     & 0.1\\
$i_2$ & 1     & 0.2\\
$i_3$ & 3     & 0.4\\
$i_4$ & 3     & 0.1\\
$i_5$ & 4     & 0.4\\
$i_6$ & 5     & 0.1
\end{tabular}
\end{tabular}
\end{small}
\end{center}\vspace*{-1em}

\caption{A TPC-H-like database.}
\label{fig:db}
\vspace{-1em}
\end{figure}

\begin{example}\label{ex:f-representation}
Consider a simplified TPC-H scenario with customers, orders, and discounted line items, as depicted in Figure~\ref{fig:db}. Each tuple is annotated with an identifier. The query $\cust\Join_{\ckey}\ord\Join_{\okey}\lin$ reports all customers together with their orders and line items per order. A flat representation of the result is presented below:
\begin{center}
\begin{small}
\begin{tabular}{@{}c@{~}||@{~}c@{~}c@{~}c@{~}c@{~}c@{~}c@{~}}
 $Q$ & ckey & name & okey & date &  disc \\\hline
 $c_1o_1i_1$ & 1     & Joe &  1  & 1995 & 0.1  \\
 $c_1o_1i_2$ & 1     & Joe &  1  & 1995 & 0.2  \\
 $c_2o_3i_3$ & 2     & Dan &  3  & 1994 & 0.4  \\
 $c_2o_3i_4$ & 2     & Dan &  3  & 1994 & 0.1  \\
 $c_2o_4i_5$ & 2     & Dan &  4  & 1993 & 0.4 \\
 $c_3o_5i_6$ & 3     & Li  &  5  & 1995 & 0.1
\end{tabular}
\end{small}
\end{center}
For each result tuple, the identifiers of tuples that contributed to it are shown. For instance, the input tuples with identifiers $c_1$, $o_1$, and $i_1$ contribute to the first result tuple. Our factorised representation is based on an algebraic factorisation of a polynomial that encodes the result. This encoding is constructed as follows. Each result tuple is annotated with a product of identifiers of tuples contributing to it. The whole result is then a sum of such products. For this example, the sum of products of identifiers is:
\begin{small}$$\psi_1 = c_1o_1i_1+c_1o_1i_2+ c_2o_3i_3+c_2o_3i_4+c_2o_4i_5+
c_3o_5i_6.$$\end{small}
An equivalent nested expression would be:
\begin{small}$$\psi_2 = c_1o_1(i_1+i_2)+ c_2(o_3(i_3+i_4)+o_4i_5)+ c_3o_5i_6.$$\end{small}
A factorised representation of the result is an extension
of this nested expression with values from the result tuples:
\begin{small}
\begin{align*}
& {\color{blue}c_1}\tuple{1,Joe}{\color{blue}o_1}\tuple{1,\mbox{1995}}{\color{blue}(i_1}\tuple{0.1}+{\color{blue}i_2}\tuple{0.2}{\color{blue})}+\\
& {\color{blue}c_2}\tuple{2,Dan}{\color{blue}(o_3}\tuple{3,\mbox{1994}}{\color{blue}(i_3}\tuple{0.4}+{\color{blue}i_4}\tuple{0.1}{\color{blue})}+
  {\color{blue}o_4}\tuple{4,\mbox{1993}}{\color{blue}i_5}\tuple{0.4}{\color{blue})}+\\
& {\color{blue}c_3}\tuple{3,Li}{\color{blue}o_5}\tuple{5,\mbox{1995}}{\color{blue}i_6}\tuple{0.1}.
\end{align*}
\end{small}
To correctly interpret this representation as a relation, we also need
a mapping of identifiers to schemas. For instance, the identifiers
$c_1$ to $c_3$ are mapped to $(\ckey,\cname)$, which serves as schema
for tuples $\tuple{1,Joe}$, $\tuple{2,Dan}$, and $\tuple{3,Li}$.\punto
\end{example}

We can easily recover the result tuples from the factorised
representation with {\em polynomial delay}, i.e., the delay between
two successive tuples is polynomial in the size of the representation.
For this, consider the parse tree of the representation. The inner
nodes stand for product or sum, and the leaves for identifiers with
tuples. A result tuple is a concatenation of the tuples at the leaves
after choosing one child for each sum and all children for each
product. We assume here that from a user perspective, iterating over
the result with small delay is more important than presenting the
whole result at once.

Factorised representations can be more informative than flat
representations in that they better explain the result and spell out
the extent to which certain input fields contribute to result tuples
either individually or in groups with other fields. This enables a
shift in the presentation of the result from a tuple-by-tuple view to
a {\em kernel} view, in which commonalities across result tuples are
made explicit by exploiting the factorised representation. We can
depict it graphically as its parse tree or textually as a
serialisation of this tree in tabular form. 

\begin{example}
The textual presentation of our factorised representation in
Example~\ref{ex:f-representation} could be the left one below:
\begin{center}
\begin{small}
\begin{tabular}{@{~}c@{~}c@{~}c@{~}c@{~}c@{~}c@{~}}
ckey & name & okey & date & disc \\\hline
1    & Joe &  1  & 1995 & 0.1  \\
     &     &     &      & 0.2\\
2    & Dan & 3   & 1994 & 0.4\\
     &     &     &      & 0.1\\
     &     & 4   & 1993 & 0.4\\
3    & Li  & 5   & 1995 & 0.1
\end{tabular}%
\hspace*{2em}
\begin{tabular}{@{~}ccc@{~}}
name &    & items \\\hline
     &  & \\
\cline{1-1} \cline{3-3}  Joe  &    & LCD  \\
Dan  & \texttt{x}   & LED \\
\cline{3-3} Li   &    & \\
\cline{1-1}      &    & \\
Mo   &    & BW
\end{tabular}
\end{small}
\end{center}
It is easy to see that two discounted line items (with discount 0.1
and 0.2) are for the same order 1 of customer Joe.

Consider now the following factorised representation
\begin{small}
\begin{align*}
& {\color{blue}(s_1}\tuple{Joe}+{\color{blue}s_2}\tuple{Dan}+
{\color{blue}s_3}\tuple{Li}
{\color{blue})(p_1}\tuple{LCD}+{\color{blue}p_2}\tuple{LED}{\color{blue})}+\\
& {\color{blue}s_4}\tuple{Mo}{\color{blue}p_3}\tuple{BW}
\end{align*}
\end{small}
where $s_1$ to $s_4$ identify suppliers, and $p_1$ to $p_3$ identify
items. This representation encodes that Joe, Dan, and Li supply both
LCD and LED TV sets, and Mo supplies BW TV sets. A textual
presentation of this result could be the right one above. The blocks
between the horizontal lines encode tuples obtained by combining any
of the names with any of the items.  This relational product is
suggested by the \texttt{x} symbol between the blocks. (We skip the
details on the mapping between the parse trees of factorised
expressions and their tabular presentations.)\punto
\end{example}

In the factorised representation $\psi_2$ and in contrast to its
equivalent flat representation $\psi_1$, each identifier only occurs
once.  We seek good factorised representations of a query result in
which each identifier occurs a small number of times. The maximum
number of occurrences of any identifier in a representation, or in any
of its equivalent representations, defines the {\em readability} of
that representation. Readability implies bounds on the representation
size. In our example, the size of the factorised representation is at
most linear in the size of the input database, since its readability
is one.

Our study of readability is with respect to tuple identifiers and
aligns well with query evaluation under bag semantics. This is
different from readability with respect to values. For instance,
$\psi_2$ has readability one, yet a value may occur several times in
the tuples of $\psi_2$, e.g., the discount value of 0.1. Studying
readability with respect to values is especially relevant to query
evaluation under set semantics.

\section{Contributions}

The main contributions of this paper are as follows.
%\begin{compactitem}
\begin{itemize}

\item We introduce factorised representations, a succinct and complete 
representation system for (results of queries in) relational
databases. In contrast to the standard tabular representation of a bag
of tuples, factorised representations can be exponentially more
succinct by factoring out commonalities across tuples. They also allow
for an intuitive presentation, whereby commonalities across tuples are
made explicit.

\item We give lower and upper bounds on the readability of basic queries 
with equality or inequality joins.

\end{itemize}
%\end{compactitem}

\noindent The following holds for select-project-join queries with 
equality joins.

%\begin{compactitem}
\begin{itemize}
\item We introduce factorisation trees that define generic classes of 
factorised representations for query results. Such trees are
statically inferred from the query and are independent of the database
instance. A factorised representation $\Phi(\T)$ modelled on $\T$ has
the nesting structure of $\T$ for any input database.

\item We give a tight characterisation of queries based on their 
readability with respect to factorisation trees. For any query $Q$, we
can find a rational number $f(Q)$ such that the readability of $Q(\D)$
is at most $|Q|\cdot|\D|^{f(Q)}$ for any database $\D$, while for any
factorisation tree $\T$ there exist databases for which the
factorisation of $Q(\D)$ modelled on $\T$ has at least
$(|\D|/|Q|)^{f(Q)}$ occurrences of some identifier.

\item For any query $Q$, we present an algorithm that iterates over the 
factorisation trees of $Q$ and finds an optimal one $\T$. Given $\T$,
we present a second algorithm that computes in time $O(|Q|\cdot
|\D|^{f(Q)+1})$ for any database $\D$ a factorised representation
$\Phi(\T)$ of $Q(\D)$ with readability at most $|Q|\cdot |\D|^{f(Q)}$
and at most $|\D|^{f(Q)+1}$ occurrences of identifiers.

\item Our characterisation captures as a special case the known class of 
hierarchical non-repeating queries~\cite{dalvi07efficient} that have
readability one~\cite{OH2008}. We also show that non-hierarchical
non-repeating queries have readability $\Omega(\sqrt{|\mathbf{D}|})$
for arbitrarily large databases $\D$.
\end{itemize}
%\end{compactitem}

Section~\ref{sec:constants} shows how to extend the above results to
selections that contain equalities with constants. Proofs are deferred
to the appendix.

\section{Related Work}
\label{sec:relatedwork}

Our study has strong connections to work on readability of Boolean
functions, provenance and probabilistic databases, streamed query
evaluation, syntactic characterisations of queries with polynomial
time combined complexity or polynomial output size, and selectivity
estimation in relational engines. The present work is nevertheless
unique in its use of succinct nested representations of query results.

The notion of readability is borrowed from earlier work on Boolean
functions, e.g.,~\cite{Golumbic06a,Golumbic08,Elbassioni09}. Like in
our case, a formula $\Phi$ is \emph{read-m} if each variable appears
at most $m$ times in $\Phi$, and the readability of a formula or a
function $\Phi$ is the smallest number $m$ such that there is a
read-$m$ formula equivalent to $\Phi$. Checking whether a monotone
function in disjunctive normal form has readability $m=1$ can be done
in time linear in both the number of terms and number of
variables~\cite{Golumbic08}. This problem is open for $m=2$, and
already hard for $m>2$ or for $m=2$ and monotone nested
functions~\cite{Elbassioni09}. This strand of work differs from ours
in two key points. Firstly, we only consider \emph{algebraic}, and not
Boolean, equivalence; in particular, idempotence ($x\cdot x = x$) is
not considered since a reduction in the arity of any product in the
representation would violate the mapping between tuple fields and
schemas. Secondly, we only consider functions/formulas arising as
results of queries, and classify queries based on worst-case analysis
of the readability of their results.

The hierarchical property~\cite{dalvi07efficient} of queries plays a
central role in studies with seemingly disparate focus, including the
present one, probabilistic databases, and streamed query
evaluation. Our characterisation of query readability essentially
revolves around how far the query is from its hierarchical
subqueries. We show that, within the class of queries without
repeating relation symbols, the readability of any non-hierarchical
query is dependent on the size of the input database, while for any
hierarchical query, the readability is always one. This latter result
draws on earlier work in the context of probabilistic
databases~\cite{OH2008,OHK2009,FinkOlteanu:ICDT:2011}, where read-once
polynomials over random variables are useful since their exact
probability can be computed in polynomial time. Read-$m$ functions for
$m > 2$ are of no use in probabilistic databases, since probability
computation for such functions over random variables is
\#P-hard~\cite{Vadhan2001}. In our case, however, readability
polynomial in the sizes of the input database and query is acceptable,
since it means that the size of the result representation is
polynomial, too.

Mirroring the dichotomies in the probabilistic and query readability
contexts, it has been recently shown that the hierarchical property
divides queries that can be evaluated in one pass from those that
cannot in the finite cursor machine model of
computation~\cite{Grohe:TCS:2009}. In this model, queries are
evaluated by first sorting each relation, followed by one pass over
each relation. It would be interesting to investigate the relationship
between the readability of a query $Q$ and the number of passes
necessary in this model to evaluate $Q$.

Our study fits naturally in the context of provenance
management~\cite{Green:PODS:2007}. Indeed, the polynomials over tuple
identifiers discussed in Example~\ref{ex:f-representation} are
provenance polynomials and nested representations are algebraic
factorisations of such polynomials. In this sense, our work
contributes a characterisation of queries by readability and size of
their provenance polynomials.

Earlier work in incomplete databases has introduced a representation
system called world-set decompositions~\cite{OKA08gWSD} to represent
succinctly sets of possible worlds. Such decompositions can be seen as
factorised representations whose structure is a product of sums of
products.

There exist characterisations of conjunctive queries with polynomial
{\em time} combined complexity~\cite{AHV95}. The bulk of such
characterisations is for various classes of Boolean queries under set
semantics. In this context, even simple non-Boolean conjunctive
queries such as a product of $n$ relations would require evaluation
time exponential in $n$. Our approach exposes the simplicity of this
query, since its readability is one and the smallest factorised
representation of its result has linear size only and can be computed
in linear time. Factorised representations could thus lead to larger
classes of tractable queries.

Finally, there has been work on deriving bounds on the cardinality of
query results in terms of structural properties of
queries~\cite{Gottlob99,AGM08,Gottlob09a}. Our work uses the results
in \cite{AGM08} and quantifies how much they can be improved due to
factorised representations.

\section{Preliminaries}

{\noindent\bf Databases.} We consider relational databases as
collections of annotated
\emph{relation instances}, as in
Example~\ref{ex:f-representation}. Each relation instance $\mathbf{R}$
is a bag of tuples in which each tuple is annotated by an
identifier. We denote by $\mathcal{I}(\mathbf{R})$ the set of
identifiers in $\mathbf{R}$, by $\mathcal{S}(\mathbf{R})$ the schema
of $\mathbf{R}$, and call the pair
$(\mathcal{I}(\mathbf{R}),\mathcal{S}(\mathbf{R}))$ its
\emph{signature}.

The \emph{size} of a relation instance $\mathbf{R}$ is the number of
tuples in $\mathbf{R}$, denoted by $|\mathbf{R}|$. The number of
distinct tuples in $\mathbf{R}$ is denoted by $||\mathbf{R}||$. The
size $|\D|$ of a database $\D$ is the total number of tuples in all
relations of $\D$.

\begin{remark}\em
For the purpose of analysing the complexity of our algorithms, we
assume that the tuples in the input database are of constant size. In
many scenarios, this is however not realistic since even the encodings
of the tuple identifiers must have size at least logarithmic in
$\D$. If the maximal size of a tuple in $\D$ is $C(\D)$, the time
complexity increases by an additional factor $C(\D)$ or similar,
depending on the exact computation model used.\punto
\end{remark}

{\noindent\bf Queries.} We consider conjunctive or select-project-join
queries written in relational algebra but with evaluation under bag
semantics. Such queries have the form
$\pi_{\bar{A}}(\sigma_{\phi}(R_1\times\ldots\times R_n))$, where
$R_1,\ldots,R_n$ are relations, $\phi$ is a conjunction of equalities
of the form $A_1=A_2$ with attributes $A_1$ and $A_2$, and $\bar{A}$
is a list of attributes of relations $R_1$ to $R_n$. The size $|Q|$ of
the query $Q$ is the total number of relations and attributes in $Q$.

Let $Q = \pi_{\bar{A}} ( \sigma_\phi (R_1 \times \dots \times R_n))$
be a query and $\D$ be a database containing a relation instance
$\mathbf{R}_i$ of the correct schema for each relation $R_i$ in
$Q$. The result $Q(\D)$ of the query $Q$ on the database $\D$ is a
relation instance whose tuples are exactly those
$\pi_{\bar{A}}(t_1\times\dots\times t_n)$ for which $t_i \in
\mathbf{R}_i$ and $t_1\times\dots\times t_n \models \phi$. The tuple
$\pi_{\bar{A}}(t_1\times\dots\times t_n)$ is annotated by
$id_1id_2\dots id_n$, where $id_i$ is the identifier of $t_i$ in
$\mathbf{R}_i$.

Every query can be brought into an equivalent form where all relations
as well as all their attributes are distinct. To recover the original
query $Q_0$ from the rewritten one $Q$, we keep a function $\mu$ that
maps the relations in $Q$ to relations in $Q_0$, and the attributes of
$R$ in $Q$ to those of $\mu(R)$ in $Q_0$. For technical reasons, we
will only consider the rewritten queries in further text, the mapping
$\mu$ will carry the information about different relation symbols
representing the same relation. If a query $Q$ has two relations with
the same mapping $\mu(R)$, then $Q$ is \emph{repeating}; otherwise,
$Q$ is \emph{non-repeating}.

For any attribute $A$, let $A^*$ be its equivalence class, that is,
the set of all attributes that are transitively equal to $A$ in
$\phi$, and let $r(A)$ be the set of relations that have attributes in
$A^*$.

A query is {\em hierarchical}\footnote{The original
definition~\cite{dalvi07efficient} does not consider the output
attributes $\bar{A}$ when checking the hierarchical property.}, if for
any two attributes $A$ and $B$, either $r(A)\subseteq r(B)$, or
$r(A)\supset r(B)$, or $r(A)\cap r(B)=\emptyset$.

\begin{example}\label{ex:hierarchical}
The query from Example~\ref{ex:f-representation} in the introduction
is non-repeating and not hierarchical.

Consider the relations $R$, $S$, and $T$ over schemas $\{A_R\}$,
$\{A_S,B_S\}$, and $\{B_T,U\}$ respectively. The query
$\pi_{\bar{A}}[\sigma_{A_R=A_S,B_S=B_T} (R\times S\times T)]$ is not
hierarchical (independently of the set $\bar{A}$), since
$r(A_S)\not\subseteq r(B_S)$, $r(A_S)\not\supset r(B_S)$, but
$r(A_S)\cap r(B_S)=\{S\}$. The query
$\pi_{\bar{A}}[\sigma_{A_R=A_S,B_S=B_T,A_R=U} (R\times S\times T)]$,
equivalent to $R(A),S(A,B),T(B,A)$, is hierarchical, since
$r(A_R)=r(A_S)=r(U)=\{R,S,T\}$ $\supset r(B_S)=r(B_T)=\{S,T\}$.\punto
\end{example}

\section{Factorised Representations}
\label{sec:factorised-representations}

In this section we formalise the notion of factorised representations,
their algebraic equivalence, and readability. We also give tight
bounds on the readability of certain factorised representations that
are used in the next sections to derive bounds on the readability of
query results.

\begin{definition}
A {\em factorised representation}, or f-representa\-ti\-on for short, $\Phi$
over a set of signatures $\mathrm{Sign}$ is
\begin{compactitem}
\item $\Phi_1+\cdots+\Phi_n$, where $\Phi_1$ to $\Phi_n$ are f-representations 
over Sign, or

\item $\Phi_1\cdots\Phi_n$, where $\Phi_1$ to $\Phi_n$ are f-representations over $\mathrm{Sign}_1$ 
to $\mathrm{Sign}_n$, respectively, and these signatures form a
disjoint cover of $\mathrm{Sign}$, or

\item $id\langle t \rangle$, where $id\in{\cal R}_i$ and $t$ is a tuple over schema 
$\mathcal{S}_i$, and
$\mathrm{Sign}=\{(\mathcal{R}_i,\mathcal{S}_i)\}$.
\end{compactitem}
The {\em polynomial} of $\Phi$ is $\Phi$ without tuples on
identifiers. The \emph{size} of (the polynomial of) $\Phi$ is the
total number of occurrences of identifiers in $\Phi$.\punto
\end{definition}

Two examples of f-representations are given in
Section~\ref{sec:introduction}. A relational database can have several
algebraically equivalent f-representations, in the sense that these
f-representations represent the same tuples and
polynomials. Syntactically, we define equivalence of f-representations
as follows.

\begin{definition}
  Two f-representations are {\em equivalent} if one can be obtained
  from the other using distributivity of product over sum and
  commutativity of product and sum.\punto
\end{definition}

Each f-representation has an equivalent {\em flat} f-representation,
which is a sum of products. A product $i_1\tuple{t_1}\cdots
i_n\tuple{t_n}$ defines the tuple $\tuple{t_1\circ\cdots\circ t_n}$
over schema $\bigcup_i \mathcal{S}_i$, which is a concatenation of
tuples $\tuple{t_1}$ to $\tuple{t_n}$, and is annotated by the product
$i_1\dots i_n$.

\begin{definition}
\label{def:facrep1}
The relation encoded by an f-representation $\Phi$ consists of all
tuples defined by the products in the flat f-representation equivalent
to $\Phi$.\punto
\end{definition}

Since flat f-representations are standard relational databases annotated
with identifiers, it means that any relational database can be encoded
as an f-representation. This property is called completeness.
\begin{proposition}
  Factorised representations form a complete representation system for
  relational data.
\end{proposition}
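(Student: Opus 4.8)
The plan is to prove completeness by a direct construction: from any relational instance of the kind an f-representation is meant to encode, I exhibit a \emph{flat} f-representation whose encoded relation is exactly that instance. Fix a set of signatures $\mathrm{Sign}=\{(\mathcal{R}_1,\mathcal{S}_1),\dots,(\mathcal{R}_n,\mathcal{S}_n)\}$ and let $B$ be an arbitrary bag of identifier-annotated tuples of the shape handled by Definition~\ref{def:facrep1}: each element of $B$ is a concatenation $\tuple{t_1\circ\cdots\circ t_n}$ with $t_i$ a tuple over $\mathcal{S}_i$, annotated by a product $id_1\cdots id_n$ with $id_i\in\mathcal{R}_i$. (When $B$ is a plain identifier-annotated relation, this is just the case $n=1$.) The goal is to produce an f-representation $\Phi$ over $\mathrm{Sign}$ whose encoded relation is $B$.

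Take $\Phi$ to be the sum, with multiplicity --- one summand per occurrence of a tuple in $B$ --- of the products $id_1\tuple{t_1}\cdots id_n\tuple{t_n}$ ranging over $B$. First I would check that $\Phi$ is syntactically a legal f-representation over $\mathrm{Sign}$: each factor $id_i\tuple{t_i}$ is an f-representation over the singleton signature set $\{(\mathcal{R}_i,\mathcal{S}_i)\}$ by the leaf clause of the definition; since the signatures in $\mathrm{Sign}$ are pairwise distinct --- which we may assume, just as we assumed distinct relation symbols for queries --- these $n$ singleton sets form a disjoint cover of $\mathrm{Sign}$, so each product is an f-representation over $\mathrm{Sign}$ by the product clause; and a sum of f-representations over $\mathrm{Sign}$ is again one by the sum clause. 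Next I would read off the encoded relation: $\Phi$ is already a sum of products, so the flat f-representation equivalent to it is $\Phi$ itself up to commutative reordering, and by Definition~\ref{def:facrep1} its encoded relation consists of the tuples $\tuple{t_1\circ\cdots\circ t_n}$ annotated by $id_1\cdots id_n$ for exactly the products of $\Phi$ --- that is, $B$, with the correct multiplicities because the sum was taken with multiplicity. Since $B$ was arbitrary, this establishes completeness.

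There is no real mathematical obstacle here; the points that need care are conventions rather than arguments. One is bag bookkeeping: the sum defining $\Phi$ must be taken with multiplicity so that duplicate tuples of $B$ survive, which is consistent with the bag reading of $+$ used throughout. The other is the degenerate cases --- the empty relation is the empty sum, and a relation over an empty signature set is the empty product --- so the statement implicitly presumes that the definition of f-representation admits these nullary forms; I would make that reading explicit, or, if nullary sums and products are excluded, note that completeness then holds for all nonempty relations over a nonempty signature set, which is all the paper later needs.
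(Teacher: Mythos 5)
Your construction is correct and is exactly the argument the paper itself relies on: the paper justifies this proposition with the single observation that flat f-representations (sums of products of leaves) \emph{are} identifier-annotated relational databases, which is precisely the flat representation you build, so no further proof is given there. Your additional care about multiplicities and the nullary edge cases is sound but does not change the route.
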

In particular, this means that there are f-representations of the
result of any query in a relational database.

\begin{definition}
\label{def:facrep}
Let $Q = \pi_{\bar{A}} ( \sigma_\phi (R_1 \times \dots \times R_n))$
be a query, and $\mathbf{D}$ be a database.
An f-representation $\Phi$ encodes the result $Q(\mathbf{D})$ if its
equivalent flat f-representation contains exactly those products
$id_1\tuple{\pi_{\bar{A}}(t_1)} \cdot \ldots \cdot
id_n\tuple{\pi_{\bar{A}}(t_n)}$ for which
$\pi_{\bar{A}}(t_1\times\dots\times t_n)\in Q(\mathbf{D})$, and $id_i$
is the identifier of $t_i$ for all $i$.

The signature set of $\Phi$ consists of the signatures $({\cal
I}_i,{\cal S}_i)$ for each query relation $R_i$, such that ${\cal
I}_i$ is the set of identifiers of the relation instance in $\D$
corresponding to $R_i$, and ${\cal S}_i$ is the schema of $R_i$ in $Q$
restricted to the attributes in $\bar{A}$.\punto
\end{definition}

\nop{Example~\ref{ex:f-representation} shows an f-representation of a query
result.}

Flat f-representa\-ti\-ons can be exponentially less succinct
than equivalent nested f-representa\-ti\-ons, where the exponent is the
size of the schema.

\begin{proposition}
Any flat representation equivalent to the f-representation
$(x_1\alpha+y_1\beta)\cdot \ldots \cdot(x_n\alpha+y_n\beta)$ over the
signatures $(\{x_1,\ldots,x_n\},{\cal A})$ and
$(\{y_1,\ldots,y_n\},{\cal B})$ has size $2^n$.
\end{proposition}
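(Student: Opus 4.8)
The plan is to fully distribute the product and argue that the $2^n$ monomials one obtains are pairwise distinct, so that none of the corresponding products can be dropped or merged in any equivalent flat f-representation. I would start by observing that the equivalence of f-representations, being generated only by distributivity of product over sum and by commutativity, preserves the underlying polynomial: these are identities in the commutative polynomial semiring over the identifiers. Hence any f-representation equivalent to $\Phi=(x_1\tuple{\alpha}+y_1\tuple{\beta})\cdots(x_n\tuple{\alpha}+y_n\tuple{\beta})$ has the same polynomial as $\Phi$, namely $p=\prod_{i=1}^{n}(x_i+y_i)$. Expanding, $p=\sum_{S\subseteq\{1,\dots,n\}} m_S$ with $m_S=\prod_{i\in S}x_i\cdot\prod_{j\notin S}y_j$. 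The one point that needs care is that the $2^n$ monomials $m_S$ are pairwise distinct: since $x_1,\dots,x_n,y_1,\dots,y_n$ are $2n$ distinct identifiers, $S$ is recoverable from $m_S$ as the set of indices $i$ for which $x_i$ occurs in $m_S$, so $S\mapsto m_S$ is injective. Moreover no two of these terms combine and none cancels, since every coefficient in $p$ equals $1$; and since each $m_S$ is multilinear, idempotence (which the paper in any case excludes) would play no role even if it were allowed.

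Next I would take any flat f-representation $\Phi'$ equivalent to $\Phi$. By definition $\Phi'$ is a sum of products, and the polynomial of a single product $i_1\tuple{t_1}\cdots i_k\tuple{t_k}$ is the single monomial $i_1\cdots i_k$, so the polynomial of $\Phi'$ is a sum of exactly as many monomials, counted with multiplicity, as $\Phi'$ has products. Since this polynomial must be $p$, which is the sum of the $2^n$ pairwise distinct monomials $m_S$, each with coefficient $1$, the representation $\Phi'$ must contain exactly one product per subset $S$ — that is, $2^n$ products in total. Because $m_S$ has degree $n$, each of these products has $n$ identifier occurrences, so $\Phi'$ has size at least $2^n$ (precisely $n\cdot 2^n$). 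Contrasted with the size-$2n$ f-representation $\Phi$ itself, this exhibits the claimed exponential gap between flat and nested f-representations.

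I do not expect a genuine obstacle here; the only subtlety worth spelling out is that the restricted notion of equivalence — distributivity and commutativity only — can neither shrink a monomial nor merge two distinct monomials, so the $2^n$ distinct monomials of $p$ force $2^n$ distinct products. This is immediate from the definition of equivalence together with the earlier remark that idempotence is not used.
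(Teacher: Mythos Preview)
Your argument is correct. The paper does not actually supply a proof of this proposition; it is stated as an evident observation, so there is no ``paper's own proof'' to compare against beyond the implicit one-line expansion you carry out explicitly. Your reasoning---that equivalence (distributivity and commutativity only) preserves the underlying polynomial, that $\prod_{i=1}^n(x_i+y_i)$ expands to $2^n$ pairwise distinct multilinear monomials indexed by subsets $S\subseteq\{1,\dots,n\}$, and that any flat representation must therefore contain exactly $2^n$ products---is exactly the intended justification.

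You are also right to flag the size count: with the paper's definition of size as the total number of identifier occurrences, the flat form has size $n\cdot 2^n$, not $2^n$. The proposition as stated is loose on this point (likely counting products rather than identifier occurrences, or simply asserting the weaker $\Omega(2^n)$), and your parenthetical ``precisely $n\cdot 2^n$'' is the accurate figure.
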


In addition to completeness and succinctness, f-representa\-tions allow for efficient enumeration of their tuples.
\begin{proposition}\label{prop:tuple-enumeration}
The tuples of an f-representation $\Phi$ can be enumerated with $O(|\Phi|\log |\Phi|)$ delay and space.
\end{proposition}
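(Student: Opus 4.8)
The plan is to view an f-representation $\Phi$ as its parse tree $\tau$, whose internal nodes are labelled $+$ or $\cdot$ and whose leaves carry terms $id\langle t\rangle$, and to enumerate the encoded tuples by a depth-first traversal that maintains, at each node, a pointer into the list of ``local alternatives'' at that node. A tuple of the relation encoded by $\Phi$ is obtained by choosing, at every $+$-node, exactly one child, and at every $\cdot$-node, all children; the concatenation of the leaf terms reached this way is one output tuple, and distinct choice-vectors give exactly the products of the equivalent flat f-representation (Definition~\ref{def:facrep1}). So it suffices to enumerate all such choice-vectors in a fixed order without repetition and with small delay between consecutive ones.

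Concretely, I would keep a mutable state consisting of, for each $+$-node $v$, an index $c(v)$ pointing at the currently selected child. The first tuple is produced by setting every $c(v)$ to its first child and reading off the selected leaves. To advance to the next tuple, perform the standard ``odometer'' increment over the $+$-nodes in, say, right-to-left DFS postorder: find the last $+$-node (in that order) whose current subtree still has an untried continuation, advance its index $c(v)$ by one, and reset all later $+$-nodes inside the newly selected subtree to their first child; if no such node exists, enumeration is complete. The key structural point that makes this correct is that the set of reachable $+$-nodes depends on the choices made above them, so resetting must be confined to the subtree that was just re-entered — this is exactly what a recursive formulation of the odometer does automatically, and I would present the procedure recursively to avoid fiddly bookkeeping. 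Correctness (every encoded tuple output exactly once) follows by induction on the tree: a $\cdot$-node enumerates the ``product'' of the enumerations of its children in lexicographic order, a $+$-node concatenates the enumerations of its children, and a leaf has a single tuple.

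For the complexity bounds, note that $|\tau| = O(|\Phi|)$. After emitting one tuple, producing the next one costs: walking up to find the node to increment (cost $O(|\tau|)$ in the worst case), then walking down the affected subtree to reset pointers and collect the new selected-leaf set (again $O(|\tau|)$), and finally outputting the concatenated tuple. Each output tuple has total length $O(|\tau|)$ — at most one leaf term per leaf — and writing it, or comparing/storing identifiers, costs an extra $\log|\Phi|$ factor under the stated model (identifiers need $\Theta(\log|\Phi|)$ bits), giving delay $O(|\Phi|\log|\Phi|)$. The working space is just the array of pointers $c(v)$ plus the recursion stack plus the current tuple buffer, all $O(|\Phi|\log|\Phi|)$. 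The main obstacle, and the part deserving the most care in the write-up, is the correctness argument that the subtree-local reset both (i) never skips a choice-vector and (ii) never repeats one, despite the fact that the ``shape'' of the choice space is data-dependent (which $+$-nodes are live changes as we move the odometer); framing the whole enumeration recursively — ``enumerate subtree rooted at $v$, producing tuples in a canonical order, with a `advance' operation returning false when exhausted'' — reduces this to a clean induction and sidesteps the global-pointer pitfalls.
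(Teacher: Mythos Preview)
Your approach is essentially the paper's: maintain a pointer at each sum node, output the leaves reached under the current choices, and advance by an odometer-style increment over the (enabled) sums, with a DFS to read off the current monomial; the $\log|\Phi|$ factor comes from the bit-cost of pointers/identifiers. One small slip in your informal odometer: when you advance a sum $S$, you say to reset only the $+$-nodes \emph{inside the newly selected subtree}, but you must also reset the enabled sums you walked past on the way to $S$ (e.g.\ in $A\cdot B$ with $A=a_1+a_2$, $B=b_1+b_2$, after $a_1b_2$ you advance $A$ and must reset $B$, which is a product-sibling, not inside $a_2$); the paper handles this by resetting each exhausted sum to its first summand during the backward cascade, and your own recursive formulation (``advance last child; if exhausted, reset it and advance previous child'') does the same --- so lean on that version in the write-up rather than the global-pointer description.
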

 
Besides the size, a key measure of succinctness of f-representations is their
readability. We extend this notion to query results for any input
database in Section~\ref{sec:query-readability}.

\begin{definition}
An f-representation $\Phi$ is \emph{read-$k$} if the maximum number of
occurrences of any identifier in $\Phi$ is $k$. The \emph{readability}
of $\Phi$ is the smallest number $k$ such that there is a read-$k$
f-representation equivalent to $\Phi$.\punto
\end{definition}
Since the readability of $\Phi$ is the same as of its polynomial, we
will use polynomials of f-representations when reasoning about their
readability.

\begin{example}
In Example~\ref{ex:f-representation}, the polynomial $\psi_1$ is
read-3 and the polynomial $\psi_2$ is read-1. They are equivalent and
hence both have readability one.\punto
\end{example}

Given the readability $\rho$ and the number $n$ of distinct
identifiers of a polynomial, we can immediately derive an upper bound
$n\rho$ on its size. A better upper bound can be obtained by taking
into account the (possibly different) number of occurrences of each
identifier. However, for polynomials of query results, the bound
$n\rho$ is often dominated by the readability $\rho$.

In Section~\ref{sec:query-readability}, we define classes of queries
that admit polynomials of low readability, such as constant
readability. We next give examples of polynomials with readability
depending polynomially on the number of identifiers.

\begin{lemma}\label{lemma:readability-rst}
The polynomial $p_N = \sum_{i,j = 1}^N r_i s_{ij} t_j$ has readability $\frac{N}{2} + O(1)$.
\end{lemma}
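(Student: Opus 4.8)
The plan is to prove matching upper and lower bounds on the readability of $p_N = \sum_{i,j=1}^N r_i s_{ij} t_j$. The upper bound amounts to exhibiting a read-$(\lceil N/2\rceil + O(1))$ factorisation; the lower bound requires showing that no factorisation can do substantially better. I will treat the two directions separately.

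For the \emph{upper bound}, the natural idea is to factor out the $r_i$'s from one group of indices and the $t_j$'s from the complementary group. Split $\{1,\dots,N\}$ into two halves $I$ and $J$ of size roughly $N/2$ each. Write $p_N = \sum_{i\in I} r_i\bigl(\sum_{j=1}^N s_{ij} t_j\bigr) + \sum_{j\in J} t_j\bigl(\sum_{i\notin I} s_{ij} r_i\bigr)$ --- that is, for each row index in $I$ pull out $r_i$, and for the remaining terms (row index in $J$) pull out $t_j$ instead. In the first block each $r_i$ occurs once and each $t_j$ occurs $|I|\approx N/2$ times, while each $s_{ij}$ occurs once; in the second block each $t_j$ occurs once, each $r_i$ occurs $|J|\approx N/2$ times, and each $s_{ij}$ occurs once. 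Since every monomial $r_i s_{ij} t_j$ appears in exactly one block, the $s_{ij}$ each occur once overall, and the $r_i$ and $t_j$ each occur at most $\lceil N/2\rceil + 1$ times. One should double-check the exact constant and confirm no monomials are duplicated or omitted; this is routine bookkeeping.

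For the \emph{lower bound}, I expect this to be the main obstacle, and the argument should be the crux of the proof. The polynomial $p_N$ is essentially the provenance polynomial of the non-hierarchical query $R(A),S(A,B),T(B)$, which the introduction already flags as having readability $\Omega(\sqrt{|\D|})$; here $|\D| = \Theta(N^2)$ because of the $N^2$ variables $s_{ij}$, so $\Omega(\sqrt{|\D|}) = \Omega(N)$, consistent with the claimed $N/2 + O(1)$. To make this precise I would argue as follows: in any equivalent read-$k$ f-representation $\Phi$, consider the parse tree and, for each monomial $r_i s_{ij} t_j$, the leaves carrying $r_i$, $s_{ij}$, $t_j$ that together produce it. A counting argument on how $r_i$ and $t_j$ can be ``shared'' across the $N^2$ monomials, combined with the fact that $s_{ij}$ pairs up $r_i$ with $t_j$ in an all-pairs (bipartite-complete) fashion, should force $k \ge N/2 - O(1)$: roughly, if some $r_i$ occurred fewer than $\sim N/2$ times, the subtrees below those occurrences could not jointly ``reach'' all $N$ of the $t_j$'s without some $t_j$ being multiplied twice into the same product, violating the algebraic (non-idempotent) semantics. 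The key lemma to establish is that for the complete bipartite pairing structure, the product structure of any f-representation induces, for each $r_i$, a partition of $\{t_1,\dots,t_N\}$ across its occurrences, and symmetrically for each $t_j$; balancing these partitions against each other yields the bound. I would likely first prove a clean special case (e.g.\ assuming $\Phi$ is a sum of products, i.e.\ depth-2) to isolate the combinatorial core, then lift to arbitrary depth using distributivity to flatten or using an inductive argument on the parse tree.

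Finally I would combine the two bounds: the upper-bound construction gives readability $\le \lceil N/2\rceil + 1$, and the lower bound gives readability $\ge N/2 - O(1)$, so the readability is $N/2 + O(1)$ as claimed. The one subtlety to be careful about is that readability is defined as a minimum over \emph{all} equivalent f-representations, so the lower bound genuinely must quantify over every parse tree, not just the ``obvious'' factorisations --- which is exactly why the combinatorial sharing argument, rather than a direct inspection, is needed.
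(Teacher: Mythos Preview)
Your upper bound construction does not work as written. With $I,J$ a partition of $\{1,\dots,N\}$, the expression $\sum_{i\in I} r_i(\sum_{j} s_{ij} t_j) + \sum_{j\in J} t_j(\sum_{i\notin I} s_{ij} r_i)$ omits every monomial with $i\in J$ and $j\in I$, so it is not equal to $p_N$. If you repair this by letting the second outer sum range over all $j$, then each $r_i$ with $i\in J$ occurs once for every $j$, i.e.\ $N$ times, not $|J|\approx N/2$ times; the factorisation is only read-$N$. A row/column block split cannot achieve $N/2$: once you decide to factor out $t_j$ for the rows $i\in J$, every such $r_i$ is repeated across all $N$ columns. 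The paper instead uses a \emph{diagonal} (circulant) split: put the pair $(i,j)$ in the first block iff $j-i \bmod N \in \{0,\dots,\lfloor N/2\rfloor-1\}$, factoring out $r_i$ there, and in the second block otherwise, factoring out $t_j$ there. Now each $r_i$ is grouped with only $\lfloor N/2\rfloor$ columns in the first block and appears $\lceil N/2\rceil$ times in the second, and symmetrically for $t_j$, giving read-$(\lceil N/2\rceil+1)$.

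Your lower bound sketch is pointed in a plausible direction but misses the decisive observation, and the ``partition of $\{t_1,\dots,t_N\}$ across occurrences of $r_i$'' idea does not by itself yield $N/2$ without further structure. The paper's argument is short and avoids any induction on the parse tree: since each $s_{ij}$ occurs in exactly one monomial of $p_N$, it occurs exactly once in any equivalent expression $\psi$; a small case analysis on the (at most two) products above that unique occurrence shows that $s_{ij}$ must sit in a \emph{binary} product whose other factor is literally $r_i$ or literally $t_j$ (any extra summands next to $r_i$, $t_j$, or $s_{ij}$ would create monomials not in $p_N$). There are $N^2$ such $s_{ij}$ and only $2N$ candidates for the partner, so by pigeonhole some $r_i$ or $t_j$ occurs at least $N/2$ times. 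This is the missing key lemma; your ``sharing/partition'' heuristic does not substitute for it.
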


Lemma~\ref{lemma:readability-rst} can be generalised as follows.

\begin{theorem}
The readability of the polynomial $p_{N,M} = \sum_{i=1}^N\sum_{j=1}^M r_i s_{ij} t_j$ is $\frac{NM}{N+M} + O(1)$.
\end{theorem}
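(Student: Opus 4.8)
The plan is to mirror the structure of the proof of Lemma~\ref{lemma:readability-rst}, which is the case $M=N$, and generalise both the upper and lower bound arguments to a rectangular grid. For the \emph{upper bound}, I would exhibit an explicit read-$k$ f-representation of $p_{N,M}$ with $k = \frac{NM}{N+M} + O(1)$. The idea is to partition the $N\times M$ grid of products $r_i s_{ij} t_j$ into blocks: group the $r_i$'s into batches of size roughly $a$ and the $t_j$'s into batches of size roughly $b$, where $a$ and $b$ are chosen so that each block $\left(\sum_{i\in I} r_i\right)$-times-$\left(\text{stuff}\right)$-times-$\left(\sum_{j\in J} t_j\right)$ can be factored once. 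Within one such block of size $a\times b$ we cannot factor the $s_{ij}$ at all (each occurs once anyway, which is fine since they are distinct), but each $r_i$ in the block occurs $b$ times and each $t_j$ occurs $a$ times across the $ab$ monomials of that block before factoring; after pulling $r_i$ and $t_j$ out as far as possible we still need $r_i$ to appear once per block it participates in, i.e.\ $M/b$ times, and $t_j$ once per block, i.e.\ $N/a$ times. Balancing $M/b$ against $N/a$ subject to the constraint that lets the factorisation go through (roughly $ab \le$ something, or rather choosing the block shape to equalise the two occurrence counts) gives the optimum at $a/N = b/M$, yielding readability $\frac{NM}{N+M}+O(1)$. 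I would write out one clean choice of $a,b$ (e.g.\ $a = \lceil N(N+M)^{-1}\cdot c\rceil$ for a suitable scaling, or simply describe the block decomposition and count) and verify the occurrence count of each of $r_i$, $s_{ij}$, $t_j$ separately.

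For the \emph{lower bound}, I would adapt whatever counting/adversary argument establishes the $N/2$ bound in Lemma~\ref{lemma:readability-rst}. The natural approach: in any equivalent f-representation $\Phi$, consider for each $r_i$ the set of its occurrences; each occurrence, together with the product nodes above it, "covers" a rectangle of pairs $(i', j)$ of monomials, but because all monomials $r_i s_{ij} t_j$ are distinct and the $s_{ij}$ are all distinct variables, the occurrences of $r_i$ must collectively and disjointly cover all $M$ monomials containing $r_i$, and symmetrically for $t_j$. A weighting/double-counting argument over the $NM$ monomials, charging each monomial to the occurrence of $r_i$ and of $t_j$ above it and using that a single product subexpression containing both some $r_i$ and some $t_j$ can contain at most one of each (again by distinctness of the $s$'s, which forbid any genuine sharing inside a product), forces $\sum_i (\text{occurrences of }r_i) \cdot \sum_j (\text{occurrences of }t_j) \gtrsim NM$ or a similar inequality; combined with the read-$k$ assumption this yields $k \ge \frac{NM}{N+M} - O(1)$. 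I expect the precise bookkeeping here — correctly formalising "a product node sees at most one $r$ and one $t$" and turning the disjoint-cover condition into the quadratic inequality — to be the main obstacle, exactly as in the $M=N$ case; the rectangular generalisation itself adds no conceptual difficulty once the square case is in hand.

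Finally I would note that the $+O(1)$ absorbs the rounding in the block sizes and any low-order slack in the covering argument, and that setting $M=N$ recovers Lemma~\ref{lemma:readability-rst} with $\frac{N^2}{2N} = \frac{N}{2}$, so the statement is a strict generalisation and can reuse most of that lemma's proof verbatim with $N,M$ in place of the two copies of $N$.
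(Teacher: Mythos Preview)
Your upper-bound construction has a genuine gap. You propose to partition the $N\times M$ grid into $a\times b$ blocks and factor each block as $\bigl(\sum_{i\in I} r_i\bigr)\cdot(\text{stuff})\cdot\bigl(\sum_{j\in J} t_j\bigr)$, so that each $r_i$ and each $t_j$ appears once per block. No such factorisation exists: the block polynomial $\sum_{i\in I}\sum_{j\in J} r_i s_{ij} t_j$ cannot have both $\sum_{i\in I} r_i$ and $\sum_{j\in J} t_j$ as factors, because the middle factor would have to equal $\sum_{j} s_{ij} t_j$ for every fixed $i$ simultaneously, which is impossible since the $s_{ij}$ are pairwise distinct. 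Within a block you can pull out the $r_i$'s \emph{or} the $t_j$'s, not both; the former leaves each $t_j$ appearing $a$ times in that block, the latter leaves each $r_i$ appearing $b$ times. Your target counts ``$M/b$ for $r_i$ and $N/a$ for $t_j$'' are therefore unattainable by this scheme, and balancing $M/b=N/a$ does not give a valid read-$\frac{NM}{N+M}$ expression.

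The paper does not write out a proof of this theorem, but the intended argument is the direct rectangular generalisation of the Lemma~\ref{lemma:readability-rst} construction: partition the pairs $(i,j)$ into two sets $A$ and $B$ and write
\[
p_{N,M}=\sum_i r_i\Bigl(\sum_{j:(i,j)\in A} s_{ij}t_j\Bigr)+\sum_j\Bigl(\sum_{i:(i,j)\in B} r_i s_{ij}\Bigr)t_j.
\]
Then $r_i$ occurs $1+|\{j:(i,j)\in B\}|$ times and $t_j$ occurs $1+|\{i:(i,j)\in A\}|$ times. Choosing $A$ so that each row contributes about $\frac{M^2}{N+M}$ pairs to $A$, spread evenly across columns via a diagonal/modular assignment exactly as in Lemma~\ref{lemma:readability-rst}, balances both counts at $\frac{NM}{N+M}+O(1)$. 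This is a two-way split by \emph{orientation}, not a grid of blocks.

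For the lower bound your instinct is right but the displayed inequality is the wrong one: the product form $\bigl(\sum_i \mathrm{occ}(r_i)\bigr)\bigl(\sum_j \mathrm{occ}(t_j)\bigr)\gtrsim NM$ only yields $k\ge 1$. The correct count, verbatim from the Lemma~\ref{lemma:readability-rst} argument, is additive: each of the $NM$ identifiers $s_{ij}$ sits in a binary product with exactly one $r$- or $t$-identifier, so $\sum_i \mathrm{occ}(r_i)+\sum_j \mathrm{occ}(t_j)\ge NM$, whence $k(N+M)\ge NM$.
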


If we drop the set of identifiers $s_{ij}$, the readability becomes
one. However, if we restrict the relationship between the remaining
identifiers, the readability increases again.

\begin{theorem}\label{th:crown}
The readability of the polynomial $q_N = \sum_{i,j=1; i\neq j}^ N r_it_j$ is \\ $\Omega(\frac{\log{N}}{\log\log{N}})$ and $O(\log{N})$.
\end{theorem}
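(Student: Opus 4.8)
The plan is to translate readability of $q_N$ into a statement about partitions of a bipartite graph into complete bipartite pieces, and then prove (almost) matching bounds on that combinatorial quantity.

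\textbf{Reduction to biclique partitions.} Since only distributivity and commutativity are available (idempotence is forbidden) and $q_N$ is multilinear of degree two with monomials only of the form $r_it_j$, I would first show that every f-representation of $q_N$ can be rewritten, without increasing readability, into the shape $\sum_{k=1}^{m}\bigl(\sum_{i\in A_k}r_i\bigr)\bigl(\sum_{j\in B_k}t_j\bigr)$. The reason is that a product node cannot mix $r$- and $t$-content inside one factor (that would create forbidden monomials $r_ir_j$ or $t_it_j$, and there is no cancellation), so every product splits into a pure-$r$ factor times a pure-$t$ factor, and a pure-$r$ subexpression cannot itself contain a product either, hence is a plain sum of leaves. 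Matching coefficients with $q_N$ then forces $A_k\cap B_k=\emptyset$ for every $k$ (the diagonal monomials $r_it_i$ are absent) and forces $\{A_k\times B_k\}_k$ to be an exact partition of the off-diagonal pairs of $[N]\times[N]$. Consequently the readability of $q_N$ equals the minimum, over all such biclique partitions, of the largest ``local degree'' $d=\max_i|\{k:i\in A_k\}|\vee\max_j|\{k:j\in B_k\}|$.

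\textbf{Upper bound $O(\log N)$.} I would exhibit such a partition by divide and conquer: split $[N]$ into two halves $X$ and $Y$ and write $q_N=\bigl(\sum_{i\in X}r_i\bigr)\bigl(\sum_{j\in Y}t_j\bigr)+\bigl(\sum_{i\in Y}r_i\bigr)\bigl(\sum_{j\in X}t_j\bigr)+q_N[X]+q_N[Y]$, where $q_N[X]$ is the analogous polynomial on index set $X$; the two complete bicliques cover exactly the cross pairs, which are automatically off-diagonal. Each $r_i$ with $i\in X$ occurs once in the first product and then recursively inside $q_N[X]$, so the local degree obeys $f(N)\le f(\lceil N/2\rceil)+1$ with $f(1)=0$, hence $f(N)\le\lceil\log_2 N\rceil$.

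\textbf{Lower bound.} Here I would prove that any off-diagonal biclique partition with local degree $\le d$ forces $N<2d^{d}$, which gives $d=\Omega(\log N/\log\log N)$. The engine is a single peeling step: fix a vertex $i_0$; the $N-1$ pairs $(i_0,\cdot)$ are spread over the $\le d$ bicliques that contain $i_0$ on the left, so one of them, $k_0$, has $|B_{k_0}|\ge(N-1)/d$; since $A_{k_0}\cap B_{k_0}=\emptyset$, the restriction of the partition to the universe $B_{k_0}$ uses the biclique $k_0$ not at all, so on $B_{k_0}$ every surviving vertex has lost one unit of right-side degree while the left-side bound is unchanged. Iterating this $d$ times annihilates the right-degree, leaving a universe that admits a partition with no bicliques at all, hence of size $\le 1$. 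Bookkept by $g(a,b)\le 1+a\,g(a,b-1)$ with base $g(a,0)=1$ (no bicliques means no covered pairs), this unrolls to $N\le g(d,d)\le\sum_{i=0}^{d}d^{i}<2d^{d}$.

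The delicate part is the normal-form step: one must genuinely use the absence of idempotence to rule out every other parse-tree shape and must check that flattening to a sum of products never raises the occurrence count. The key insight in the lower bound is recognising that restricting to $B_{k_0}$ \emph{removes} $k_0$ from the right-degree of every surviving vertex, which is exactly what makes the single parameter $d$ — rather than merely the number of bicliques $m$ — decrease along the recursion; pigeonholing on $m$ alone would only bound $m$ from below, not the readability. The residual $\log\log N$ gap between the two bounds seems intrinsic to this counting argument and mirrors the gap stated in the theorem.
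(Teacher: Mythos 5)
Your proof is correct and follows the same overall strategy as the paper's: reduce the readability of $q_N$ to the minimum, over edge-disjoint biclique partitions of the crown graph $G_N\subset K_{N,N}$, of the maximum number of bicliques through a vertex; get $O(\log N)$ from the halving decomposition $\rho_N\le\rho_{\lceil N/2\rceil}+1$; and get the lower bound by restricting the partition to one side of a large biclique, which annihilates that biclique in the restriction and so lowers a degree parameter. The one genuine difference is the bookkeeping in the lower bound. The paper tracks a single parameter $M_k$ (the largest $N$ admitting a partition with every vertex in at most $k$ bicliques) and must peel from both sides at once: it picks $a_1$ and $b_1$, finds a large $B_j$ through $a_1$, intersects it with a large $\overline{A_i'}$ through $b_1$, and restricts to $A\times\overline{A}$ to drop the bound from $k$ to $k-1$, yielding $M_k<k^2M_{k-1}$ and hence $M_k\le 2(k!)^2$. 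You track left- and right-degrees separately and peel one side per step, getting $g(a,b)\le 1+a\,g(a,b-1)$ and $N<2d^d$. Both recurrences give $\Omega(\log N/\log\log N)$; yours is slightly cleaner (a one-sided peel, no double intersection needed) and sidesteps a small off-by-one the paper glosses over (to force a vertex of the restricted instance into $k$ bicliques one needs $|A|>M_{k-1}$, not merely $|A|\ge M_{k-1}$). Your normal-form step is also argued more carefully than in the paper, which simply asserts that every factorisation has the shape $\sum_i A_iB_i$; your degree argument (no cancellation, no idempotence, every monomial of $q_N$ is $r_it_j$, so products cannot mix or nest) is the right justification, and it does need to be said.
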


The polynomials $p_{N,M}$ and $q_N$ are relevant here due to their
connection to queries: $p_{N,M}$ is the polynomial of the query
$\sigma_{\phi}(R\times S\times T)$, where $\phi := (A_R=A_S\wedge
B_S=B_T)$ and the schemas of $R$, $S$, and $T$ are $\{A_R\}$,
$\{A_S,B_S\}$, and $\{B_T\}$ respectively, on the database where
$\mathbf{R}$, $\mathbf{S}$ and $\mathbf{T}$ are full relations with
$|\mathbf{R}|=n$ and $|\mathbf{T}|=m$. Also, $q_N$ is the polynomial
of the disequality query $\sigma_{A_R\neq B_T}(R\times T)$. If $i\neq
j$ is replaced by $i\leq j$ in $q_N$, the lower and upper bounds on
readability on this new polynomial $q'_N$ still hold, and we obtain
the result of an inequality query.

A lower bound of $\sqrt{\frac{\log{N}}{\log\log{N}}}$ on the
readability of $q'_N$ is already known even in the case when Boolean
factorisation is allowed~\cite{Golumbic06a}.

\section{Factorisation Trees}
\label{sec:ftrees}

\begin{figure}[t]
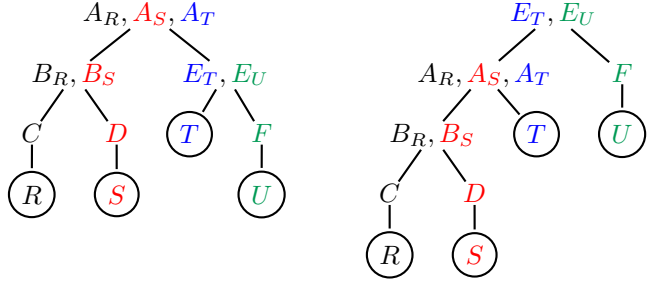

\begin{small}
\[
\psset{levelsep=8mm, nodesep=1pt, treesep=5mm}
\pstree{\TR{A_R,{\color{red}A_S},{\color{blue}A_T}}}
{
  \pstree{\TR{B_R,{\color{red}B_S}}}
  {
   \pstree{\TR{C}}
    {
      \Tcircle{R}
    }
   \pstree{{\color{red}\TR{D}}}
    {
      \Tcircle{{\color{red}S}}
    }
  }
  \pstree{\TR{{\color{blue}E_T},{\color{ForestGreen}E_U}}}
  {
    \Tcircle{{\color{blue}T}}
    \pstree{{\color{ForestGreen}\TR{F}}}
     {
       \Tcircle{{\color{ForestGreen}U}}
     }
  }
}%
\hspace*{3em}%
\psset{levelsep=8mm, nodesep=1pt, treesep=5mm}
\pstree{\TR{{\color{blue}E_T},{\color{ForestGreen}E_U}}}
{
  \pstree{\TR{A_R,{\color{red}A_S},{\color{blue}A_T}}}
  {
   \pstree{\TR{B_R,{\color{red}B_S}}}
    {
     \pstree{\TR{C}}
      {
        \Tcircle{R}
      }
     \pstree{{\color{red}\TR{D}}}
      {
        \Tcircle{{\color{red}S}}
      }
    }
    \Tcircle{{\color{blue}T}}
  }
  \pstree{{\color{ForestGreen}\TR{F}}}
   {
     \Tcircle{{\color{ForestGreen}U}}
   }
}
\]
\end{small}
\caption{F-trees for the query in Example~\ref{ex:f-trees}.}
\label{fig:f-trees}
\end{figure}

We next introduce a generic class of factorised representations for query results, constructed using so-called factorisation trees, whose nesting structure and readability properties can be described statically from the query only. We present an algorithm that, given a factorisation tree $\T$ of a query $Q$, and an input database $\D$, computes a factorised representation of $Q(\D)$, whose nesting structure is that defined by ${\cal T}$. Factorisation trees are used in Section~\ref{sec:query-readability} to obtain bounds on the readability of queries.

\begin{definition}
A \emph{factorisation tree (f-tree)} for a query $Q$ is a rooted unordered forest $\T$, where
\begin{compactitem}
\item there is a one-to-one mapping between inner nodes in $\T$ and equivalence classes of attributes of $Q$,
\item there is a one-to-one mapping between leaf nodes in $\T$ and relations in $Q$, and
\item the attributes of each relation only appear in the ancestors of 
its leaf.\punto
\end{compactitem}
\end{definition}

\begin{example}\label{ex:f-trees}
Consider the relations $R$, $S$, $T$, and $U$ over schemas
$\{A_R,B_R,C\}$, $\{A_S,B_S,D\}$, $\{A_T,E_T\}$, and $\{E_U,F\}$
respectively, and the query $Q = \sigma_{\phi}(R\times S\times T\times
U)$ with $\phi =
(A_R=A_S,A_R=A_T,B_R=B_S,E_T=E_U)$. Figure~\ref{fig:f-trees} depicts
two f-trees for $Q$.

Consider now the query $Q' = \sigma_{\phi}(R\times S\times T)$ with
$\phi = (A_R=A_S,A_R=A_T,B_R=B_S)$. Figure~\ref{fig:r-f-trees} on page~\pageref{fig:r-f-trees} shows
two f-trees for $Q'$ as well as a partial tree that cannot be extended
to an f-tree since the attributes $A_S$ and $D$ of $S$ lie in
different branches.\punto
\end{example}

Each f-tree for $Q$ is a recipe for producing an f-representation of
the result $Q(\D)$ for any database $\D$. For a given query $Q$ and
database $\D$, this f-representation is called the $\T$-factorisation
of $Q(\D)$ and is denoted by $\Phi(\T)$. Figure~\ref{fig:algorithm1}
gives a recursive function $\llbracket\cdot\rrbracket$ that computes
the $\T$-factorisation of $Q(\D)$. A more detailed implementation of
this function, including an analysis of its time and space complexity,
is given in Section~\ref{sec:falgorithm}.

\begin{figure}[t!]
\framebox[\columnwidth]{
\parbox{8.7cm}{
\begin{align*}
\llbracket \mbox{siblings } \{{\cal T}_1,\ldots,{\cal T}_n\} \rrbracket(\gamma) &= 
\llbracket {\cal T}_1\rrbracket(\gamma)\cdots\llbracket {\cal T}_n\rrbracket(\gamma)\\
\llbracket \mbox{inner node } A^*(\mathcal{U}) \rrbracket(\gamma) &= 
\sum_{a \in \mbox{Dom}_{A^*}} (\llbracket \mathcal{U}\rrbracket(\gamma,A^* = a))\\
\llbracket \mbox{leaf } R \rrbracket(\gamma) &= 
\sum_{t_j \in \sigma_\gamma(\mathbf{R})} id_j\tuple{\pi_{\mbox{head}(Q)}(t_j)}
\end{align*}
}}
\caption{The $\T$-factorisation of a query result $Q(\mathbf{D})$ is computed as $\Phi(\T)=\llbracket\T\rrbracket(\top)$, where $\top$ is the constant true (an empty conjunction). For a relation $R$ in $Q$, $\mathbf{R}$ is the corresponding relation instance in the input database $\D$.}
\label{fig:algorithm1}
\end{figure}

The function $\llbracket\cdot\rrbracket$ recurses on the structure of
$\T$. The parameter $\gamma$ is a conjunction of equality conditions
that are collected while traversing the f-tree top-down. Initially,
$\gamma$ is an empty conjunction $\top$. In case $\T$ is a forest
 $\{\T_1,\ldots,\T_n\}$, we return the f-representation defined by the
product of f-representations of each tree in $\T$. If $\T$ is single tree $A^*(\mathcal{U})$ with root $A^*$ and children $\mathcal{U}$, we return the f-representation of a sum over all possible domain values $a$ of the attributes in $A^*$ of the f-representations of the children $\mathcal{U}$. To compute these, for each possible value $a$ we simply recurse on $\mathcal{U}$, appending to $\gamma$ the equality condition $A^* = a$. Finally, in case $\T$ is a leaf $R$, we return a sum of f-representations for result tuples in $R$, that is,
only those tuples that satisfy $\gamma$. (When evaluating the
selection with $\gamma$ on $R$, we only consider the equalities on
attributes of $R$.) In the f-representation we only include attributes
from $Q$'s projection list, along with the tuple identifier.

The symbolic products and sums in Figure~\ref{fig:algorithm1} are of course
expanded out to produce a valid f-representation. However, we will
often keep the sums symbolic, abbreviate $\sum_{a\in\mbox{Dom}_{A^*}}$
to $\sum_{A^*}$ and write $R$ instead of $\sum_{t_j \in
\sigma_\gamma(\mathbf{R})} id_j\tuple{\pi_{\mbox{head}(Q)}(t_j)}$ for
the expression generated by the leaves. The condition $\gamma$ can be
inferred from the position in the expression, so we can still recover
the original representation and write out the sums explicitly. Such an
abbreviated form is independent of the database $\D$ and conveniently reveals the structure of any $\T$-factorisation.

\begin{figure}[t]
\begin{center}
\begin{small}
\begin{tabular}{c@{\quad}c@{\quad}c@{\quad}c}
\begin{tabular}{@{}c@{~}||@{~}c@{~}@{~}c@{~}@{~}c@{~}@{~}}
 $R$ & $A_R$ & $B_R$ & $C$ \\
\hline
 $r_{111}$ & 1 & 1 & 1 \\
 $r_{122}$ & 1 & 2 & 2 \\
 $r_{212}$ & 2 & 1 & 2 \\
 $r_{221}$ & 2 & 2 & 1
\end{tabular}
&
\begin{tabular}{@{}c@{~}||@{~}c@{~}@{~}c@{~}@{~}c@{~}@{~}}
 $S$ & $A_S$ & $B_S$ & $D$ \\
\hline
 $s_{111}$ & 1 & 1 & 1 \\
 $s_{112}$ & 1 & 1 & 2 \\
 $s_{121}$ & 1 & 2 & 1 \\
 $s_{211}$ & 2 & 1 & 1
\end{tabular}
&
\begin{tabular}{@{}c@{~}||@{~}c@{~}@{~}c@{~}@{~}}
 $T$ & $A_T$ & $E_T$ \\
\hline
 $t_{12}$ & 1 & 2 \\
 $t_{21}$ & 2 & 1 \\
 $t_{22}$ & 2 & 2
\end{tabular}
&
\begin{tabular}{@{}c@{~}||@{~}c@{~}@{~}c@{~}@{~}}
 $U$ & $E_U$ & $F$ \\
\hline
 $u_{11}$ & 1 & 1 \\
 $u_{21}$ & 2 & 1 \\
 $u_{22}$ & 2 & 2
\end{tabular}
\end{tabular}
\end{small}
\end{center}
\vspace*{-1em}
\caption{Database used in Example~\ref{ex:f-trees2}.}
\label{fig:db2}
\end{figure}

\begin{example}\label{ex:f-trees2}
Consider the query $Q$ from Example~\ref{ex:f-trees} and the f-trees from Figure~\ref{fig:f-trees}. For any database, the left f-tree yields
\[\Phi(\T_1)=\textstyle\sum_A\big[\sum_B\big(\sum_C R \sum_D S\big)\sum_E \big(T \sum_F U\big)\big],\]
while the right f-tree yields
\[\Phi(\T_2)=\textstyle\sum_E\big(\sum_A \big(\sum_B\big(\sum_C R\sum_D S\big)T\big)\big(\sum_F U\big),\]
both in abbreviated form. A procedure to produce the explicit form of $\Phi(\T_1)$ is shown in Figure~\ref{fig:program-ftree}.

For the particular database $\D$ given in Figure \ref{fig:db2}, the f-representations $\Phi(\T_1)$ and $\Phi(\T_2)$ yield the polynomials
\begin{align*}
 P_1 = &(r_{111}(s_{111}+s_{112}) + r_{122}s_{121})t_{12}(u_{21}+u_{22}) + r_{212}s_{211}(t_{21}u_{11} + t_{22}(u_{21} + u_{22})), \\
 P_2 = &r_{212}s_{211}t_{21}u_{11} + ((r_{111}(s_{111}+s_{112}) + r_{122}s_{121})t_{12} + r_{212}s_{211}t_{22})(u_{21} + u_{22}).
\end{align*} 
They are equivalent to each other and to the polynomial $P$ of the flat f-representation of $Q(\mathbf{D})$,
\begin{align*}
 P =& r_{111}s_{111}t_{12}u_{21}+ r_{111}s_{111}t_{12}u_{22}+ r_{111}s_{112}t_{12}u_{21}+\\
     &r_{111}s_{112}t_{12}u_{22}+ r_{122}s_{121}t_{12}u_{21}+ r_{122}s_{121}t_{12}u_{22}+\\
     &r_{212}s_{211}t_{21}u_{11}+ r_{212}s_{211}t_{22}u_{21}+ r_{212}s_{211}t_{22}u_{22}.
\end{align*} 
Whereas $P$ is read-$6$, both $P_1$ and $P_2$ are read-$2$.\punto
\end{example}

\begin{figure}[t!]
\framebox[\columnwidth]{
\parbox{8.7cm}{
\begin{small}
\begin{align*}
&\mbox{{\bf foreach} value $a\in$ Dom$_A$ {\bf do} output sum of}\\
&\hspace*{1em} \mbox{{\bf foreach} value $b\in$ Dom$_B$ {\bf do} output sum of}\\
&\hspace*{2em} \mbox{{\bf foreach} value $c\in$ Dom$_C$ {\bf do output} sum of identifiers of $R$-tuples $(a,b,c)$}\\
&\hspace*{2em} \times\\
&\hspace*{2em}  \mbox{{\bf foreach} value $d\in$ Dom$_D$ {\bf do output} sum of identifiers of $S$-tuples $(a,b,d)$}\hfill\\
&\hspace*{1em} \times\\
&\hspace*{1em} \mbox{{\bf foreach} value $e\in$ Dom$_E$ {\bf do} output sum of}\\
&\hspace*{2em}  \mbox{{\bf output}  sum of identifiers of $T$-tuples $(a,e)$}\\
&\hspace*{2em} \times\\
&\hspace*{2em}  \mbox{{\bf foreach} value $f\in$ Dom$_F$ {\bf do output} sum of identifiers of $U$-tuples $(e,f)$}
\end{align*}
\end{small}
}}
\caption{A procedure for producing $\T_1$-factorisations in explicit form. The abbreviated form is {\small $\textstyle\sum_A\big[\sum_B\big(\sum_C R \sum_D S\big)\sum_E \big(T \sum_F U\big)\big]$}. $\T_1$ is the left f-tree in Figure~\ref{fig:f-trees}.}
\label{fig:program-ftree}
\end{figure}

\begin{remark}\em
For any query $Q$, consider the f-tree $\T$ in which the nodes
labelled by the attribute classes all lie on a single path, and the
leaves labelled by the relations are all attached to the lowest node
in that path. Such a tree $\T$ produces the $\T$-factorisation in
which we sum over all values of all attributes and for each
combination of values we output the product over all relations of the
sums of tuples which have the given values. If all the tuples in the
input relations are distinct, the $\T$-factorisation is just a sum of
products, that is, the flat f-representation of the result.

Thus, for a non-branching tree $\T$ we obtain a flat representation of
$Q(\D)$. The more branching the tree $\T$ has, the more factorised the
$\T$-factorisation of $Q(\D)$ is.\punto
\end{remark}

The correctness of our construction for a general query $Q$ and
database $\D$ is established by the following result.

\begin{proposition}\label{prop:correctness-ftree}
For any f-tree $\T$ of a query $Q$ and any database $\D$, $\Phi(\T)$
is an f-representation of $Q(\D)$.
\end{proposition}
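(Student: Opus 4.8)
The plan is to prove Proposition~\ref{prop:correctness-ftree} by structural induction on the f-tree $\T$, following exactly the recursive structure of the function $\llbracket\cdot\rrbracket$ in Figure~\ref{fig:algorithm1}. The statement to strengthen for the induction is the natural one: for any subtree or subforest $\T'$ occurring in $\T$, rooted (in the forest case) below a set of attribute classes whose values have been fixed by a condition $\gamma$, the expression $\llbracket\T'\rrbracket(\gamma)$ is an f-representation (over the appropriate restricted signature set) of the relation $\pi_{\bar A}\big(\sigma_{\phi\wedge\gamma}(\prod_{R\in\T'} \mathbf{R})\big)$, where the product ranges over exactly the relations whose leaves lie in $\T'$. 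Here $\gamma$ fixes the values of all attribute classes that are ancestors of $\T'$ in $\T$; by the definition of an f-tree, every attribute of every relation in $\T'$ lies either in such an ancestor class (hence is pinned by $\gamma$) or in an inner node inside $\T'$ itself, which is what makes the restricted join well-defined. Proposition~\ref{prop:correctness-ftree} is then the special case $\T'=\T$, $\gamma=\top$, together with Definition~\ref{def:facrep} identifying the signature set.

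I would organise the induction into the three cases of the definition. \textbf{Leaf case:} $\llbracket\mbox{leaf }R\rrbracket(\gamma)=\sum_{t_j\in\sigma_\gamma(\mathbf{R})} id_j\tuple{\pi_{\bar A}(t_j)}$ is already a flat f-representation; its encoded relation is by Definition~\ref{def:facrep1} exactly $\{\pi_{\bar A}(t_j) : t_j\in\mathbf{R},\ t_j\models\gamma\}$ with the correct identifier annotations, which matches the claim since the only attributes of $R$ relevant here are those constrained by $\gamma$ (the equalities in $\phi$ between two relations contribute nothing once one side is a single-relation selection). \textbf{Inner node case:} for $\T'=A^*(\mathcal{U})$, the induction hypothesis applied to $\mathcal{U}$ with condition $(\gamma, A^*=a)$ gives an f-representation of the join restricted to $A^*$-value $a$; summing over all $a\in\mathrm{Dom}_{A^*}$ yields an f-representation (sum case of the definition) whose encoded relation is the disjoint-over-$a$ union, i.e.\ the join restricted by $\gamma$ only — here I must check that these summands are over the same signature set (they are, since $A^*$ is an equivalence class of attributes, not itself an output identifier) and that every tuple of the target relation fixes $A^*$ to exactly one value, so no tuple is lost or double-counted. \textbf{Forest (siblings) case:} if $\T'=\{\T_1,\dots,\T_n\}$, the f-tree conditions guarantee that the relation sets of the $\T_i$ are disjoint and that no attribute class is shared between two different $\T_i$ (each class maps to a unique inner node), so the signature sets of $\llbracket\T_i\rrbracket(\gamma)$ form a disjoint cover and the product $\llbracket\T_1\rrbracket(\gamma)\cdots\llbracket\T_n\rrbracket(\gamma)$ is a legal f-representation; its encoded relation is the relational product of the pieces, which equals the join $\sigma_{\phi\wedge\gamma}(\prod_R \mathbf{R})$ precisely because every join equality of $\phi$ that links relations in different $\T_i$ must go through an attribute class that is a common ancestor — hence already pinned by $\gamma$ — so no cross-subtree equality remains to be enforced.

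The main obstacle, and the step I would spend the most care on, is this last point: arguing that after conditioning on $\gamma$ the residual join genuinely factorises as a Cartesian product across sibling subtrees. This is exactly where the f-tree axiom ``the attributes of each relation only appear in the ancestors of its leaf'' does the work, but one has to translate it into a statement about equivalence classes: if attributes $A$ and $B$ are in the same class (so $\phi$ forces $A=B$) and the relations carrying them lie in different sibling subtrees $\T_i,\T_j$, then the inner node for that class is an ancestor of both leaves, hence an ancestor of the whole forest $\{\T_1,\dots,\T_n\}$, hence its value is already fixed in $\gamma$. Spelling this out cleanly — including the bookkeeping that $\bar A$-projection commutes with all of these restrictions and that identifier annotations multiply correctly under the product — is routine but is the part where a sloppy argument would hide a bug, so I would write it in full. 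The rest is a direct unwinding of definitions.
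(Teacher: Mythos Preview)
Your inductive approach is correct, and it is a genuinely different route from the paper's own argument. The paper does not induct on $\T$; instead it takes $\Phi(\T)$ as a whole and flattens it to sum-of-products form by repeated distributivity. The single observation it needs is that whenever a factor is pushed inside a sum $\sum_{A^*}$, that factor contains no relation with an attribute in $A^*$ (since all such relations already sit in the subtree rooted at $A^*$), so extending the condition $A^*=a$ to the incoming factor is vacuous. After all sums are pulled to the outside one obtains $\sum_{A_1^*}\cdots\sum_{A_n^*}\prod_R \sum_{t_j\in\sigma_\gamma(\mathbf{R})} id_j\tuple{\pi_{\mathrm{head}(Q)}(t_j)}$, which is manifestly the flat f-representation of $Q(\D)$.

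Both proofs pivot on the same structural fact about f-trees --- that an attribute class relevant to a relation must label an ancestor of that relation's leaf --- but they exploit it in opposite directions: you use it in the forest case to argue that the residual join splits as a Cartesian product once $\gamma$ is fixed, while the paper uses it to argue that distributing a factor into a sum never changes the selections inside that factor. The paper's argument is shorter (three sentences) and avoids having to formulate and carry an explicit invariant; your argument is more informative, since the strengthened hypothesis $\llbracket\T'\rrbracket(\gamma)\equiv\pi_{\bar A}\sigma_{\phi\wedge\gamma}(\prod_{R\in\T'}\mathbf{R})$ actually tells you what each sub-expression means semantically, which could be reused elsewhere. One small point to tidy in your leaf case: rather than saying cross-relation equalities ``contribute nothing'', it is cleaner to note that every attribute of $R$ lies in an ancestor class and is therefore pinned by $\gamma$, so $\sigma_{\phi\wedge\gamma}$ restricted to $\mathbf{R}$ alone coincides with $\sigma_\gamma$.
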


We next introduce definitions concerning f-trees for later
use. Consider an f-tree ${\cal T}$ of a query $Q$. An inner node $A^*$
of ${\cal T}$ is \emph{relevant to a relation $R$} if it contains an
attribute of $R$.  For a relation $R$, let $\mbox{Path}(R)$ be the set
of inner nodes appearing on the path from the leaf $R$ to its root in
${\cal T}$, $\mbox{Relevant}(R)\subseteq \mbox{Path}(R)$ be the set of
nodes relevant to $R$, and $\mbox{Non-relevant}(R) = \mbox{Path}(R)
\setminus \mbox{Relevant}(R)$. For example, in the left f-tree of
Figure~\ref{fig:f-trees}, $\mbox{Non-relevant}(R)=\emptyset$ and
$\mbox{Non-relevant}(U)=\{A^*_R\}$. In the right f-tree,
$\mbox{Non-relevant}(U)$ $= \emptyset$, yet $\mbox{Non-relevant}(R) =
\mbox{Non-relevant}(S)=\{E^*_T\}$.  In fact, there is no f-tree for
the query in Example~\ref{ex:f-trees} such that
$\mbox{Non-relevant}(R)=\emptyset$ for each relation $R$.  This is
because the query is not hierarchical.

\begin{proposition}\label{prop:ftree-hierarchical}
A query is hierarchical iff it has an f-tree ${\cal T}$ such that $\mbox{\em Non-relevant}(R)=\emptyset$ for each relation $R$.
\end{proposition}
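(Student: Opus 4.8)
\textbf{Proof plan for Proposition~\ref{prop:ftree-hierarchical}.}

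The plan is to prove both directions by relating the hierarchical condition on the sets $r(A)$ to the structural constraints that an f-tree imposes on which attribute classes may appear above a given leaf. The key observation is that if $\mbox{Non-relevant}(R)=\emptyset$ for every relation $R$, then $\mbox{Path}(R)=\mbox{Relevant}(R)$, so the set of inner nodes on the root-to-leaf path of $R$ is exactly the set of attribute classes $A^*$ with $R\in r(A)$. Since in any rooted forest the paths from two leaves to their roots are ``compatible'' — one is a prefix of the other, or they share a common prefix and then diverge — this immediately forces, for any two attributes $A,B$, that the sets $\{A^*: R\in r(A)\}$ and the analogous sets interact in the hierarchical way. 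I will make this precise below.

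For the direction ``f-tree with all $\mbox{Non-relevant}(R)=\emptyset$ $\Rightarrow$ hierarchical'': fix attributes $A$ and $B$ and consider the inner nodes $A^*$ and $B^*$ in $\T$. If neither is an ancestor of the other and they lie in different trees of the forest (or in the same tree but on incomparable branches), then no leaf $R$ can have both $A^*$ and $B^*$ on its root-to-leaf path, so under the no-non-relevant-node assumption $r(A)\cap r(B)=\emptyset$. If $A^*$ is a (not necessarily proper) ancestor of $B^*$, then every leaf below $B^*$ has $A^*$ on its path as well, and since every node on a path is relevant, we get $r(B)\subseteq r(A)$; the proper-ancestor case gives $r(B)\subsetneq r(A)$, and equality of nodes gives $r(A)=r(B)$. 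Symmetrically if $B^*$ is an ancestor of $A^*$. In every case the hierarchical condition holds. I also need to note that the root-to-leaf paths are exactly the $\mbox{Relevant}$ sets, which follows directly from the definition of an f-tree (attributes of $R$ appear only in ancestors of its leaf) together with the assumption.

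For the converse ``hierarchical $\Rightarrow$ such an f-tree exists'': I would construct $\T$ explicitly. Since $Q$ is hierarchical, the relation $r(A)\subseteq r(B)$ on equivalence classes (picking one attribute per class) is such that the sets $\{r(A^*)\}$ form a laminar family under inclusion; order the classes by this forest order, and place each inner node $A^*$ so that its parent is the smallest class strictly containing $r(A^*)$ (classes with $\subseteq$-incomparable, hence disjoint, $r$-sets become roots of separate trees, or more precisely are placed in different subtrees). Then attach each leaf $R$ below the unique $\subseteq$-minimal class $A^*$ with $R\in r(A^*)$ — this is well-defined because the classes containing $R$ are totally ordered by inclusion, as $Q$ is hierarchical. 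It remains to check the three f-tree conditions (the one-to-one mappings are immediate; the ``attributes of $R$ appear only in ancestors of its leaf'' condition follows because the attribute classes of $R$ are exactly those $A^*$ with $R\in r(A^*)$, which by construction form precisely the ancestor chain of $R$'s leaf) and that $\mbox{Non-relevant}(R)=\emptyset$ — but by construction every node on the path is one of these $A^*$, hence relevant.

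The main obstacle I anticipate is bookkeeping in the converse direction: carefully defining the forest order on attribute classes from the laminar family $\{r(A^*)\}$, handling the case where several distinct classes share the \emph{same} set $r(A^*)$ (they must be stacked along a single path in some order — any order works), and verifying that attaching leaf $R$ below its $\subseteq$-minimal containing class really does make \emph{every} class containing $R$ an ancestor and no other class an ancestor of that leaf. None of this is deep, but it is the step where a sloppy construction could fail the third f-tree condition, so I would write it out with care, possibly isolating the claim ``for a hierarchical query, the classes containing any fixed relation $R$ are totally ordered by $\subseteq$ of their $r$-sets'' as a small sublemma and using it twice.
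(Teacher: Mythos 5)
Your proof is correct, and the backward direction (an f-tree with all $\mbox{Non-relevant}(R)=\emptyset$ implies hierarchical) is essentially identical to the paper's: two attribute classes are either in ancestor--descendant position, giving containment of their $r$-sets because every node on a root-to-leaf path is relevant, or in sibling subtrees, giving disjointness. The forward direction is where you diverge. The paper does not build the tree by hand; it invokes Proposition~\ref{prop:algo-hierarchical}, i.e.\ the fact that for a hierarchical query the pruned enumeration algorithm \textbf{iter-pruned} has a unique choice at every step and outputs a single reduced f-tree in which every child $B^*$ of a node $A^*$ satisfies $r(A)\supseteq r(B)$, so that $\mbox{Relevant}(R)$ is a contiguous root path and attaching the leaf under its lowest relevant node gives $\mbox{Non-relevant}(R)=\emptyset$. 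You instead construct the same tree directly from the laminar family $\{r(A^*)\}$: parent equals the smallest class strictly containing, ties stacked along a path, leaves attached under the $\subseteq$-minimal containing class. The two constructions produce the same forest; yours is self-contained and does not depend on a result proved later in the paper about the algorithm, while the paper's version gets the existence statement for free once the algorithmic proposition is established and avoids redoing the laminarity bookkeeping. Your anticipated obstacles are exactly the right ones, and your sublemma (the classes containing a fixed $R$ are totally ordered by inclusion of their $r$-sets, since they pairwise intersect in $R$) is precisely what makes both the well-definedness of the attachment point and the verification of the third f-tree condition go through.
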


The left two trees shown in Figure~\ref{fig:r-f-trees} are f-trees of
a hierarchical query. The first f-tree satisfies the condition in
Proposition~\ref{prop:ftree-hierarchical}, whereas the second does
not.

\section{Readability of Query Results}
\label{sec:query-readability}

The readability of a query $Q$ on a database $\mathbf{D}$ is the
readability of any f-representation of $Q(\D)$, that is, the minimal
possible $k$ such that there exists a read-$k$ representation of
$Q(\D)$. 

In this section we give upper bounds on the readability of arbitrary
select-project-join queries with equality joins in terms of the
cardinality $|\D|$ of the database $\bf{D}$. We then show that these
bounds are asymptotically tight with respect to statically chosen
f-trees. By this we mean that for any query $Q$, if we choose an
f-tree $\T$, there exist arbitrarily large database instances $\D$ for
which the $\T$-factorisation of $Q(\D)$ is read-$k$ with $k$
asymptotically close to our upper bound. In the next section we give
algorithms to compute these bounds. We conclude the section with a
dichotomy: In the class of non-repeating queries, hierarchical queries
are the only queries whose readability for any database is 1 and hence
independent of the size of the database.

A key result for all subsequent estimates of readability is the
following lemma that states the exact number of occurrences of any
identifier of a tuple $\tuple{t}$ in the $\T$-factorisation of $Q(\D)$
as a function of the f-tree $\T$, the query $Q =
\pi_{\bar{A}}(\sigma_\phi (R_1\times \dots
\times R_n))$, and the database $\D$. 

Let $R = R_i$ be a relation of $Q$, denote by the condition ${\cal
S}(R)=\tuple{t}$ the conjunction of equalities of the attributes of
$R$ to corresponding values in $\tuple{t}$, and denote $NR =
\mbox{Non-relevant}(R)$. In the $\T$-factorisation of $Q(\D)$, 
multiple occurrences of the same identifier from $R$ arise from the
summations over the values of attributes from
$NR$. Lemma~\ref{lemma:number-occurrences} quantifies how many
different choices of such values in the summations thus yield a given
identifier from $R$. Recall that the projection attributes $\bar{A}$
do not influence the cardinality of the query result and hence the
number of occurrences of its identifiers, since we consider bag
semantics.

\begin{lemma}\label{lemma:number-occurrences}
The number of occurrences of the identifier $r$ of a tuple $\tuple{t}$
from $R$ in the $\T$-factorisation of $Q(\D)$ is
\[ \left|\left| \big(\pi_{NR} ( \sigma_{{\cal S}(R)=\tuple{t}} \sigma_\phi ( R_1\times \dots \times R_n))\big)(\D) \right|\right|.\]
\end{lemma}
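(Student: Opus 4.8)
The plan is to unfold the recursive definition of $\llbracket\cdot\rrbracket$ from Figure~\ref{fig:algorithm1} along $\mbox{Path}(R)$ and count, for the fixed tuple $\tuple{t}\in\mathbf{R}$ with identifier $r$, exactly which recursive calls emit the summand $r\tuple{\pi_{\bar A}(t)}$. First I would observe that the leaf $R$ of $\T$ is reached by the recursion exactly once for each choice of values assigned to the inner nodes on $\mbox{Path}(R)$: traversing $\T$ top-down, every inner node $A^*$ on this path contributes a sum $\sum_{a\in\mathrm{Dom}_{A^*}}$, and the conjunction $\gamma$ arriving at the leaf is precisely $\bigwedge_{A^*\in\mbox{Path}(R)}(A^*=a_{A^*})$ for the chosen values; inner nodes off this path do not affect which tuples of $\mathbf{R}$ survive, because by the f-tree condition all attributes of $R$ occur only among ancestors of its leaf, i.e.\ within $\mbox{Path}(R)$, and in fact within $\mbox{Relevant}(R)$.

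Next I would split $\mbox{Path}(R)=\mbox{Relevant}(R)\,\dot\cup\,NR$. At the leaf, the call outputs $\sum_{t_j\in\sigma_\gamma(\mathbf{R})} id_j\tuple{\pi_{\bar A}(t_j)}$, so $r$ appears in this particular leaf-call iff $\tuple{t}$ satisfies $\gamma$, i.e.\ iff the chosen values on the relevant nodes agree with $\tuple{t}$ on the corresponding attributes. Since each node in $\mbox{Relevant}(R)$ is an equivalence class containing some attribute of $R$, there is exactly one such agreeing choice of values for the relevant nodes (the values dictated by $\tuple{t}$), while the values on the non-relevant nodes in $NR$ remain free. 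Therefore the total number of occurrences of $r$ equals the number of tuples of values $(a_{A^*})_{A^*\in NR}$ for which, after fixing the relevant-node values according to $\tuple{t}$, the overall condition is still globally consistent — i.e.\ for which there actually exist tuples in the other relations producing a nonempty contribution up that branch. I would make this precise by noting that a choice of $NR$-values yields an occurrence of $r$ iff the join $\sigma_\phi(R_1\times\dots\times R_n)$ restricted by ${\cal S}(R)=\tuple{t}$ and by those $NR$-values is nonempty, which is exactly the statement that this tuple of $NR$-values lies in $\pi_{NR}(\sigma_{{\cal S}(R)=\tuple{t}}\,\sigma_\phi(R_1\times\dots\times R_n))(\D)$; counting distinct such tuples gives $\big\|\pi_{NR}(\sigma_{{\cal S}(R)=\tuple{t}}\,\sigma_\phi(R_1\times\dots\times R_n))(\D)\big\|$, as claimed. (Bag semantics is harmless here: the projection list $\bar A$ and the multiplicities of the other relations do not change which $NR$-valuations are realised, and we count $NR$-valuations, hence the $\|\cdot\|$.)

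The step I expect to be the main obstacle is the careful bookkeeping that a single occurrence of $r$ is produced per \emph{realisable} valuation of $NR$ and not per realisable valuation of all of $\mbox{Path}(R)$ — in other words, showing that fixing the relevant nodes to the values from $\tuple{t}$ is forced, that the off-path nodes are genuinely irrelevant to the leaf call for $R$, and that two distinct $NR$-valuations that project to the same tuple over $NR$ are in fact identical (so no double counting and no under counting). This amounts to a clean induction on the structure of $\T$ following the unfolding of $\llbracket\cdot\rrbracket$, tracking the accumulated condition $\gamma$; once the invariant ``the multiset of summands of the form $r\tuple{\cdot}$ emitted within the subtree rooted at node $v$ is indexed by the realisable valuations, consistent with $\gamma$ and with $\tuple{t}$, of the $NR$-nodes below $v$'' is stated correctly, the inductive step for siblings (product, which concatenates independently) and for an inner node (sum over $\mathrm{Dom}_{A^*}$, adding one equality to $\gamma$) is routine, and the leaf case is the base case handled above.
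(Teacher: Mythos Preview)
Your proposal is correct and follows essentially the same approach as the paper: both arguments index the leaf calls $\llbracket\text{leaf }R\rrbracket(\gamma)$ by valuations of $\mathrm{Path}(R)$, observe that the values on $\mathrm{Relevant}(R)$ are forced by $\tuple{t}$, and then argue that a given $NR$-valuation contributes an actual occurrence of $r$ in $\Phi(\T)$ iff the full join restricted by $\mathcal{S}(R)=\tuple{t}$ and by that valuation is nonempty (equivalently, iff the sibling subtrees along the way are all nonempty). The only stylistic difference is that the paper dispatches the ``empty sibling'' issue by appealing to the sum-of-products expansion established in the proof of Proposition~\ref{prop:correctness-ftree}, whereas you propose to thread it through an explicit structural induction on $\T$; both routes yield the same count.
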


\nop{\begin{lemma}\label{lemma:number-occurrences}
Let $\T$ be an f-tree of a query $Q$. The number of occurrences of the
identifier $r$ of a tuple $\tuple{t}$ from $R$ in the
$\T$-factorisation $\Phi(\T)$ is
\[ \left|\left| \pi_{NR} ( \sigma_{{\cal S}(R)=\tuple{t}} \sigma_\phi ( R_1\times \dots \times R_n)) \right|\right|.\]
\end{lemma}}

For example, for the left f-tree in Figure~\ref{fig:f-trees}, all
identifiers in $R$, $S$, and $T$ occur once, whereas any identifier of
$U$ may occur as many times as distinct $A^*$ values in $R$, $S$, and
$T$. For the leftmost f-tree in Figure~\ref{fig:r-f-trees}, all
identifiers in all relations occur once, since no relation has
non-relevant nodes.

Lemma~\ref{lemma:number-occurrences} represents an effective tool to
further estimate the readability and size of $\T$-factorisations. Our
results build upon existing bounds for query result sizes and yield
readability bounds which can be inferred statically from the
query. Lemma~\ref{lemma:number-occurrences} can be potentially also
coupled with estimates on selectivities and various assumptions on
attribute-value
correlations~\cite{Muralikrishna:SIGMOD:1988,Poosala:VLDB:1997,Getoor:SIGMOD:2001,Re:Cardinality:2010}
to infer database-specific estimates on the readability.

\subsection{Upper Bounds}

Let $\mathbf{D}$ be a database, let $Q =
\pi_{\bar{A}}(\sigma_\phi(R_1\times \dots \times R_n))$ be a query,
let $\mathcal{T}$ be an f-tree of $Q$, and let $R$ be a relation in
$Q$. Denote $NR = \textrm{Non-relevant}(R)$, by $\phi_R$ the condition
$\phi$ restricted to the attributes of $NR$, by $Q_R$ the query
$\sigma_{\phi_R}(\pi_{NR}R_1\times \dots \times \pi_{NR}R_n)$, and by
$\mathbf{D}_R$ the database obtained by projecting each relation in
$\mathbf{D}$ onto the attributes of $NR$.

\begin{lemma}
\label{lem:number-occurrences2}
The number of occurrences of any identifier $r$ from $R$ in the
$\T$-factorisation of $Q(\D)$ is at most
$||Q_R(\mathbf{D}_R)||$.
\end{lemma}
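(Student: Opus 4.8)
The plan is to reduce Lemma~\ref{lem:number-occurrences2} to Lemma~\ref{lemma:number-occurrences} by a straightforward overcounting argument. Lemma~\ref{lemma:number-occurrences} tells us that the number of occurrences of the identifier $r$ of a tuple $\tuple{t}$ from $R$ is exactly
\[
\left|\left| \big(\pi_{NR} ( \sigma_{{\cal S}(R)=\tuple{t}} \sigma_\phi ( R_1\times \dots \times R_n))\big)(\D) \right|\right|,
\]
so it suffices to show that this quantity is at most $||Q_R(\mathbf{D}_R)||$ for every choice of $\tuple{t}$. First I would observe that projection commutes appropriately: each distinct tuple counted on the left is a tuple over the attributes of $NR$ obtained by picking tuples $t_i\in\mathbf{R}_i$ that jointly satisfy $\phi$ (and whose $R$-component agrees with $\tuple{t}$), and then projecting the whole combined tuple onto $NR$. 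I would then argue that any such tuple also arises as an element of $Q_R(\mathbf{D}_R)$: projecting each $\mathbf{R}_i$ onto the attributes of $NR$ first and then joining under $\phi_R$ (the part of $\phi$ that only mentions attributes in $NR$) yields a superset of the tuples obtained by joining under the full $\phi$ and projecting afterwards, because dropping the conditions in $\phi$ that involve non-$NR$ attributes, and dropping the extra restriction ${\cal S}(R)=\tuple{t}$, can only enlarge the result.

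More carefully, the key containment I need is
\[
\pi_{NR}\big(\sigma_{{\cal S}(R)=\tuple{t}}\,\sigma_\phi(R_1\times\dots\times R_n)\big)(\D)\ \subseteq\ Q_R(\mathbf{D}_R),
\]
as relations (equivalently, as sets of distinct tuples, which is all the double-bar norm sees). To prove this I would take a distinct tuple $u$ on the attributes of $NR$ in the left-hand side, witnessed by $t_1,\dots,t_n$, and check that $\pi_{NR}t_1,\dots,\pi_{NR}t_n$ witness membership of $u$ in $Q_R(\mathbf{D}_R)$: each $\pi_{NR}t_i$ lies in $\pi_{NR}\mathbf{R}_i$ by construction, their product projected onto $NR$ equals $u$, and they satisfy $\phi_R$ because the original tuples satisfied the stronger condition $\phi$ and $\phi_R$ only constrains attributes in $NR$, on which the projections agree with the originals. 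Since $||\cdot||$ is monotone under set inclusion, the containment gives $||\pi_{NR}(\sigma_{{\cal S}(R)=\tuple{t}}\sigma_\phi(\cdots))(\D)||\le ||Q_R(\mathbf{D}_R)||$, and combining with Lemma~\ref{lemma:number-occurrences} finishes the proof.

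I do not expect a serious obstacle here; the only point requiring care is bookkeeping the attribute sets—ensuring that $\phi_R$ is exactly the sub-conjunction of $\phi$ whose equalities mention only attributes of $NR$, and that $\pi_{NR}R_i$ really retains those attributes of $R_i$ that lie in $NR$—so that the projections of the witnessing tuples genuinely satisfy $\phi_R$ and project to the right tuple $u$. If an equality in $\phi$ links an $NR$-attribute to a non-$NR$-attribute, it simply gets dropped in $\phi_R$, which only weakens the constraint and hence preserves the inclusion direction we want; this asymmetry is exactly why the bound is an inequality rather than an equality.
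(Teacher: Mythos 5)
Your proposal is correct and follows essentially the same route as the paper: invoke Lemma~\ref{lemma:number-occurrences}, then bound the resulting cardinality by dropping the selection ${\cal S}(R)=\tuple{t}$ and pushing the projection inside while weakening $\phi$ to $\phi_R$. The paper phrases this as a two-step chain of inequalities on $||\cdot||$, whereas you prove the single set containment element-wise via witnessing tuples, but the underlying argument is identical.
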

\begin{proof}
By Lemma~\ref{lemma:number-occurrences}, the number of occurrences of
$r$ is equal to
\[\left|\left| \big(\pi_{NR} ( \sigma_{{\cal S}(R)=\tuple{t}} \sigma_\phi ( R_1\times \dots \times R_n))\big)(\D) \right|\right|,\]
from which we obtain the desired bound by straightforward estimates:
\begin{align*}
&|| \big(\pi_{NR} ( \sigma_{{\cal S}(R)=\tuple{t}} \sigma_\phi ( R_1\times
  \dots \times R_n))\big)(\D) || \\
  \leq &|| \big(\pi_{NR} ( \sigma_\phi ( R_1\times \dots \times R_n))
	\big)(\D) || \\
  \leq &|| \big(\sigma_{\phi_R} ( \pi_{NR} ( R_1\times \dots \times R_n))
	\big)(\D) || \\
  = &|| Q_R(\mathbf{D}_R)||. \qedhere
\end{align*}
\end{proof}

The number of distinct tuples in an equi-join query such as $Q_R$ can
be estimated in terms of the database size using the results in
\cite{AGM08}. Intuitively, if we can cover all attributes of the query
$Q_R$ by some $k$ of its relations, then $||Q_R(\D_R)||$ is at most
the product of the sizes of these relations, which is in turn at most
$|\mathbf{D}|^k$. This corresponds to an edge cover of size $k$ in the
hypergraph of $Q_R$. The following result strenghtens this idea by
lifting covers to a weighted version.

\begin{definition}
\label{def:fractional-cover-number}
For an equi-join query $Q = \sigma_\phi(R_1 \times \dots \times R_n)$, the \emph{fractional edge cover number} $\rho^*(Q)$ is the cost of an optimal solution to the linear program with variables $\{x_i\}_{i=1}^n$,
\begin{align*}
\hspace{3em}\textrm{minimising}\qquad&\textstyle\sum_i x_i \\
\textrm{subject to}\qquad&\textstyle\sum_{i : R_i \in r(A)} x_i \geq 1 \qquad\textrm{for all attributes $A$, and}\\
& x_i \geq 0 \qquad\qquad\qquad\:\:\:\,\textrm{for all $i$.\hspace{9em}\punto}
\end{align*} 
\end{definition}

\begin{lemma}[\cite{AGM08}]
\label{lem:fcover-upper}
For any equi-join query $Q$ and for any database $\D$, we have
$||Q(\D)|| \leq |\D|^{\rho^*(Q)}$.
\end{lemma}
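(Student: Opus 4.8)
Lemma~\ref{lem:fcover-upper}: for any equi-join query $Q = \sigma_\phi(R_1\times\dots\times R_n)$ and any database $\D$, we have $\|Q(\D)\| \le |\D|^{\rho^*(Q)}$.

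The plan is to follow the fractional-cover/entropy argument of Atserias, Grohe and Marx. First I would fix a database $\D$ and write $N = \|Q(\D)\|$ for the number of distinct tuples in the result; the goal is to bound $N$ from above. Let $\bar A = A_1,\dots,A_m$ be a list of the attribute equivalence classes of $Q$ (one value per class determines the join value on all attributes in that class), so that a distinct result tuple is exactly a distinct assignment $(a_1,\dots,a_m)$ to these classes that is consistent with some choice of a tuple from each $R_i$. Sample such a tuple $(X_1,\dots,X_m)$ uniformly at random from the $N$ distinct result tuples; then $H(X_1,\dots,X_m) = \log N$, where $H$ denotes binary Shannon entropy. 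The key observation is that for each relation $R_i$, if $r(R_i)$ denotes the set of attribute classes occurring in $R_i$, then the joint distribution of $(X_j)_{j : A_j \in r(R_i)}$ is supported on a set of size at most $\|R_i(\D)\|\le|\D|$ (the distinct projections of tuples of $R_i$ onto its own attributes), hence $H\big((X_j)_{A_j\in r(R_i)}\big) \le \log|\D|$.

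The second ingredient is Shearer's lemma (the entropy subadditivity inequality for fractional covers): if $\{x_i\}$ is any feasible solution to the linear program in Definition~\ref{def:fractional-cover-number} — that is, for every attribute class $A_j$ we have $\sum_{i : R_i\in r(A_j)} x_i \ge 1$ — then
\[
H(X_1,\dots,X_m) \le \sum_{i=1}^n x_i \cdot H\big((X_j)_{A_j \in r(R_i)}\big).
\]
Combining this with the per-relation bound $H\big((X_j)_{A_j\in r(R_i)}\big)\le\log|\D|$ gives $\log N \le \big(\sum_i x_i\big)\log|\D|$. Taking $\{x_i\}$ to be an optimal solution of the LP, so that $\sum_i x_i = \rho^*(Q)$, yields $\log N \le \rho^*(Q)\log|\D|$, i.e.\ $N \le |\D|^{\rho^*(Q)}$, as required. (If $|\D|\le 1$ the statement is trivial, and one can also note $\rho^*(Q)$ is well-defined and finite since $x_i = 1$ for all $i$ is always feasible.)

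The main obstacle is establishing Shearer's inequality itself, which is the real content: it is proved by expanding each $H\big((X_j)_{A_j\in r(R_i)}\big)$ via the chain rule along a fixed ordering of the attribute classes, using that conditioning reduces entropy to replace each conditional-entropy term by one conditioned on all earlier classes, and then reorganising the resulting sum so that the coefficient in front of $H(X_j \mid X_1,\dots,X_{j-1})$ is exactly $\sum_{i : R_i\in r(A_j)} x_i \ge 1$; summing the chain rule for $H(X_1,\dots,X_m)$ closes the argument. Everything else is bookkeeping: verifying the support-size bound for each relation's marginal (immediate from the definition of $\|\cdot\|$ and the fact that projecting cannot increase the number of distinct tuples) and checking that an optimal LP solution is available. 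I would either cite Shearer's lemma as standard or include its short entropy proof, and attribute the overall bound to \cite{AGM08}.
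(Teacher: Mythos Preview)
Your proposal is correct and reproduces the standard entropy/Shearer argument from \cite{AGM08}. Note, however, that the paper does not give its own proof of this lemma at all: it is stated with attribution to \cite{AGM08} and used as a black box, so there is nothing in the paper to compare against beyond the citation; your write-up is essentially the proof one would find in the cited reference.
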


Together with Lemma~\ref{lem:number-occurrences2}, this yields the
following bound.

\begin{corollary}
\label{cor:upper}
The number of occurrences of any identifier $r$ from $R$ in the
$\T$-factorisation of $Q(\D)$ is at most $|\D|^{\rho^*(Q_R)}$.
\end{corollary}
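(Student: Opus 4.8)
The plan is simply to chain the two bounds already in hand. By Lemma~\ref{lem:number-occurrences2}, the number of occurrences of the identifier $r$ from $R$ in the $\T$-factorisation of $Q(\D)$ is at most $||Q_R(\mathbf{D}_R)||$, the number of \emph{distinct} tuples in the result of the equi-join query $Q_R$ on the projected database $\mathbf{D}_R$. Now apply Lemma~\ref{lem:fcover-upper} to the equi-join query $Q_R$ and the database $\mathbf{D}_R$: it gives $||Q_R(\mathbf{D}_R)||\le |\mathbf{D}_R|^{\rho^*(Q_R)}$. Composing the two inequalities already yields the bound in terms of $|\mathbf{D}_R|$.

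It remains only to replace $|\mathbf{D}_R|$ by $|\D|$. Since $\mathbf{D}_R$ is obtained from $\D$ by projecting each relation onto the attributes of $NR$, and projection does not create new tuples, we have $|\mathbf{D}_R|\le|\D|$. Moreover $\rho^*(Q_R)\ge 0$, because every feasible point of the linear program in Definition~\ref{def:fractional-cover-number} has nonnegative coordinates and hence so does the optimum; so $x\mapsto x^{\rho^*(Q_R)}$ is monotone and $|\mathbf{D}_R|^{\rho^*(Q_R)}\le|\D|^{\rho^*(Q_R)}$, which closes the chain. There is essentially no obstacle here: the corollary is a direct composition of Lemmas~\ref{lem:number-occurrences2} and~\ref{lem:fcover-upper}. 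The only point worth a line of care is the degenerate case where some relation of $Q_R$ has empty schema after projection onto $NR$; such a relation contributes no constraint to the linear program and may be discarded from $Q_R$ without changing $\rho^*(Q_R)$ or affecting the estimate.
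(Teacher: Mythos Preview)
Your proof is correct and follows exactly the same route as the paper: chain Lemma~\ref{lem:number-occurrences2} with Lemma~\ref{lem:fcover-upper}, then pass from $|\D_R|$ to $|\D|$. One small sharpening: since the paper works under bag semantics, projecting a relation onto a subset of its attributes keeps every tuple, so in fact $|\D_R|=|\D|$ (not merely $\leq$); the paper uses this equality directly, making your monotonicity remark about $\rho^*(Q_R)\ge 0$ unnecessary, though harmless.
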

\begin{proof}
By Lemma~\ref{lem:number-occurrences2}, the number of occurrences of
$r$ in the $\T$-factorisation of $Q(\D)$ is bounded above by
$||Q_R(\D_R)||$. By Lemma~\ref{lem:fcover-upper}, this is bounded
above by $|\D_R|^{\rho^*(Q_R)}$, which is equal to
$|\D|^{\rho^*(Q_R)}$.
\end{proof}

Corollary \ref{cor:upper} gives an upper bound on the number of
occurrences of identifiers from each relation. Let $M$ be the maximal
number of relations which can contain the same identifier, that is,
the maximal number of relations in $Q$ mapping to the same relation
name by $\mu$. Defining $f(\mathcal{T}) = \max_R \rho^*(Q_R)$ to be
the maximal possible $\rho^*(Q_R)$ over all relations $R$ from $Q$, we
obtain an upper bound on the readability of the $\T$-factorisation of
$Q(\D)$.

\begin{corollary}
The $\T$-factorisation of $Q(\D)$ is at most
\mbox{read-$(M\cdot|\D|^{f(\T)})$}.
\end{corollary}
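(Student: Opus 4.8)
The plan is to combine Corollary~\ref{cor:upper} with the observation about shared identifiers. First I would recall that any identifier $id$ in the $\T$-factorisation of $Q(\D)$ is the identifier of some tuple $t$ in a relation instance $\mathbf{R}_i$ of $\D$, and that the occurrences of $id$ are split among those query relations $R$ for which $\mu(R)$ is the relation name of $\mathbf{R}_i$; by definition of $M$, there are at most $M$ such relations $R$. For each such $R$, Corollary~\ref{cor:upper} bounds the number of occurrences of $id$ coming from $R$'s leaf in the $\T$-factorisation by $|\D|^{\rho^*(Q_R)}$, which is at most $|\D|^{f(\T)}$ since $f(\T) = \max_R \rho^*(Q_R)$.

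Second I would sum these contributions: the total number of occurrences of $id$ across the whole $\T$-factorisation is at most $M \cdot |\D|^{f(\T)}$. Since $id$ was an arbitrary identifier, this shows every identifier occurs at most $M\cdot|\D|^{f(\T)}$ times, which is exactly the statement that the $\T$-factorisation is read-$(M\cdot|\D|^{f(\T)})$.

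I do not expect a serious obstacle here: the only point requiring a little care is making precise that occurrences of a fixed identifier in the factorised expression are partitioned according to which leaf of $\T$ generated them — this follows because in the function $\llbracket\cdot\rrbracket$ of Figure~\ref{fig:algorithm1} identifiers are only introduced at leaves, and a leaf labelled $R$ only ever emits identifiers from the instance of $R$. Once that bookkeeping is noted, the bound is just Corollary~\ref{cor:upper} applied $M$ times and added up. One should also remark that if $Q$ is non-repeating then $M = 1$ and the bound simplifies to $|\D|^{f(\T)}$.

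\begin{proof}
Let $id$ be any identifier occurring in the $\T$-factorisation of $Q(\D)$. By inspection of the function $\llbracket\cdot\rrbracket$ in Figure~\ref{fig:algorithm1}, identifiers are introduced only at the leaves, and the leaf labelled by a relation $R$ emits only identifiers of tuples of the instance of $R$ in $\D$. Hence every occurrence of $id$ originates from a leaf $R$ of $\T$ whose relation name $\mu(R)$ equals that of the relation instance containing the tuple annotated by $id$. By the definition of $M$, there are at most $M$ such leaves $R$. For each of them, Corollary~\ref{cor:upper} bounds the number of occurrences of $id$ arising from $R$ by $|\D|^{\rho^*(Q_R)} \leq |\D|^{f(\T)}$, using $f(\T) = \max_R \rho^*(Q_R)$. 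Summing over the at most $M$ relevant leaves, the total number of occurrences of $id$ is at most $M\cdot|\D|^{f(\T)}$. As $id$ was arbitrary, the $\T$-factorisation of $Q(\D)$ is read-$(M\cdot|\D|^{f(\T)})$.
\end{proof}
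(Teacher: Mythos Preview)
Your proof is correct and follows exactly the approach the paper takes: it observes that an identifier can arise from at most $M$ leaves (those $R$ with the same $\mu(R)$), applies Corollary~\ref{cor:upper} at each such leaf, and sums. The paper leaves this reasoning implicit in the text preceding the corollary, so your write-up simply makes explicit what the paper asserts in one sentence.
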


By considering the $\T$-factorisation with lowest readability, we obtain an upper bound on the readability of $Q(\D)$. Let $f(Q) = \min_\T f(\T)$ be the minimal possible $f(\T)$ over all f-trees $\T$ for $Q$.

\begin{corollary}
\label{cor:upperbound}
For any query $Q$ and any database $\D$, the readability of $Q(\D)$ is at most $M\cdot|\D|^{f(Q)}$.
\end{corollary}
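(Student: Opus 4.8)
The plan is to obtain this as an immediate consequence of Corollary~\ref{cor:upper} and the preceding corollary, simply by selecting the f-tree that minimises $f(\T)$. First I would observe that the set of f-trees of a fixed query $Q$ is finite and nonempty: its inner nodes are in bijection with the finitely many attribute equivalence classes of $Q$, its leaves with the finitely many relations of $Q$, and there are only finitely many forest shapes on a fixed finite vertex set; moreover at least one f-tree always exists, for instance the non-branching one described in the remark after Proposition~\ref{prop:correctness-ftree}. Hence $f(Q)=\min_\T f(\T)$ is genuinely a minimum, attained by some f-tree $\T_0$ with $f(\T_0)=f(Q)$.

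Next I would bound the readability of the particular f-representation $\Phi(\T_0)$. Fix an identifier of the input database $\D$. By Corollary~\ref{cor:upper}, for each relation $R$ of $Q$ this identifier occurs at most $|\D|^{\rho^*(Q_R)}\leq|\D|^{f(\T_0)}=|\D|^{f(Q)}$ times among the contributions of $R$ to $\Phi(\T_0)$, using that $f(\T_0)=\max_R\rho^*(Q_R)$ and $|\D|\geq 1$. In a repeating query, the same database identifier may be the identifier of tuples in several relations $R$ of $Q$, namely all $R$ with the same image under $\mu$, and there are at most $M$ such relations, so the total number of occurrences of the identifier in $\Phi(\T_0)$ is at most $M\cdot|\D|^{f(Q)}$. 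This is exactly the content of the corollary immediately preceding, instantiated at $\T=\T_0$: the $\T_0$-factorisation of $Q(\D)$ is read-$(M\cdot|\D|^{f(Q)})$.

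Finally, by Proposition~\ref{prop:correctness-ftree} the expression $\Phi(\T_0)$ is an f-representation of $Q(\D)$, so it witnesses that $Q(\D)$ admits a read-$(M\cdot|\D|^{f(Q)})$ representation; since the readability of $Q(\D)$ is by definition the least $k$ for which such a representation exists, it is at most $M\cdot|\D|^{f(Q)}$, as claimed. I do not expect any genuine obstacle here: the argument is pure bookkeeping over results already established, and the only points requiring mild care are the nonemptiness of the minimisation defining $f(Q)$ and the role of the factor $M$, which matters only for repeating queries (for non-repeating $Q$ one has $M=1$ and the bound reads $|\D|^{f(Q)}$). The substance of the whole chain of estimates lies upstream, in Corollary~\ref{cor:upper} and ultimately in the fractional edge cover bound of Lemma~\ref{lem:fcover-upper}, not in this corollary.
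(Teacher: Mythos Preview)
Your proposal is correct and follows exactly the paper's own (implicit) argument: pick an f-tree $\T_0$ attaining $f(Q)=\min_\T f(\T)$, apply the preceding corollary to get that $\Phi(\T_0)$ is read-$(M\cdot|\D|^{f(Q)})$, and conclude via Proposition~\ref{prop:correctness-ftree} that this witnesses the claimed readability bound. Your extra care about the finiteness and nonemptiness of the set of f-trees and about the role of $M$ in the repeating case is sound and only makes explicit what the paper leaves tacit.
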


Since $M \leq |Q|$, the readability of $Q(\D)$ is at most $|Q| \cdot
|\D|^{f(Q)}$.

\begin{example}\label{ex:upper-bound}
For the query $Q$ in Example~\ref{ex:f-trees} and the left f-tree in Figure~\ref{fig:f-trees}, the relation $U$ is the only one with a non-empty query $Q_U = \sigma_{\phi_U} (\pi_{A_R} R \times \pi_{A_S} S \times \pi_{A_T} T)$, where the condition $\phi_U$ is $A_R = A_S = A_T$. Since the other relations have empty covers (thus of cost zero), we conclude that their identifiers occur at most once in the query result. We can cover $Q_U$ with any subset of $R$, $S$, and $T$. A minimal edge cover can be any of the relations, and the number of occurrences of any identifier of $U$ is thus linear in the size of that relation. The fractional edge cover number is also 1 and we obtain the same bound.

For the right f-tree in Figure~\ref{fig:f-trees}, both $R$ and $S$ have non-empty queries $Q_R$ and $Q_S$ defining their non-relevant sub-query of $Q$: $Q_R = Q_S = \sigma_{\phi} (\pi_{E_T} T \times \pi_{E_U} U)$, where $\phi$ is $E_T = E_U$.  The attributes $E_T$ and $E_U$ can be covered by $U$ or by $T$. A minimal cover thus has size 1. The minimal fractional edge cover has also cost 1.

Now consider a different query over the relations $R(A_R,E_R)$,
$S(A_S,B_S,C_S)$, $T(A_T,B_T,D_T)$ and $U(C_U,D_U,E_U)$, given by
$\hat{Q} = \sigma_\phi (R\times S \times T \times U)$, with $\phi =
(A_R = A_S = A_T, B_S = B_T, C_S = C_U, D_T = D_U, E_R = E_U)$.

\begin{figure}[t]
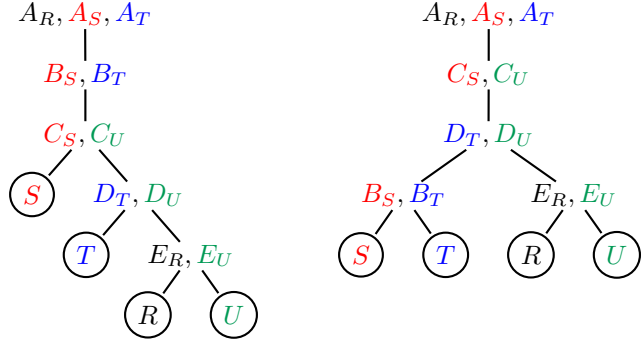

\begin{small}
\[
\psset{levelsep=8mm, nodesep=1pt, treesep=5mm}
\pstree{\TR{A_R,{\color{red}A_S},{\color{blue}A_T}}}
{
  \pstree{\TR{{\color{red}B_S},{\color{blue}B_T}}}
  {
    \pstree{\TR{{\color{red}C_S},{\color{ForestGreen}C_U}}}
    {
      \Tcircle{{\color{red}S}}
      \pstree{\TR{{\color{blue}D_T},{\color{ForestGreen}D_U}}}
      {
        \Tcircle{{\color{blue}T}}
        \pstree{\TR{E_R,{\color{ForestGreen}E_U}}}
        {
          \Tcircle{R}
          \Tcircle{{\color{ForestGreen}U}}
        }
      }
    }
  }
}%
\hspace*{3em}%
\psset{levelsep=8mm, nodesep=1pt, treesep=5mm}
\pstree{\TR{A_R,{\color{red}A_S},{\color{blue}A_T}}}
{
  \pstree{\TR{{\color{red}C_S},{\color{ForestGreen}C_U}}}
  {
    \pstree{\TR{{\color{blue}D_T},{\color{ForestGreen}D_U}}}
    {
      \pstree{\TR{{\color{red}B_S},{\color{blue}B_T}}}
      {
        \Tcircle{{\color{red}S}}
        \Tcircle{{\color{blue}T}}
      }
      \pstree{\TR{E_R,{\color{ForestGreen}E_U}}}
      {
        \Tcircle{R}
        \Tcircle{{\color{ForestGreen}U}}      
      }
    }
  }
}%
\]
\end{small}
\caption{F-trees $\T_1$ and $\T_2$ for the query in Example~\ref{ex:upper-bound}.}
\label{fig:f-tree-upper}
\end{figure}

Consider the left f-tree $\T_1$ shown in Figure~\ref{fig:f-tree-upper}. For the relation $R$, we have $\mbox{Non-relevant}(R) = \{B_S^*,C_S^*,D_T^*\}$, and hence the restricted query $Q_R$ will be $\hat{Q}_R = \sigma_{B_S=B_T,C_S=S_U,D_T=D_U}(\pi_{B_S,C_S}S \times \pi_{B_T,D_T} T \times \pi_{C_U,D_U}U)$. We need at least two of the relations $S,T,U$ to cover all attributes of $Q_R$, the edge cover number is thus 2. However, in the fractional edge cover linear program, we can assign to each relation the value $x_S = x_T = x_U = 1/2$. The covering conditions at each attribute are satisfied, since each attribute belongs to two of the relations. The total cost of this solution is only $3/2$. It is in fact the optimal solution, so $\rho^*(Q_R) = 3/2$. It is easily seen that $\rho^*(\hat{Q}_T) = \rho^*(\hat{Q}_U) = 1$ (since $\hat{Q}_T$ can be covered either by $S$ or $U$, and $\hat{Q}_U$ can be covered by either $S$ or $T$) and $\rho^*(\hat{Q}_S) = 0$ (since $\hat{Q}_S$ has no attributes), so $f(\T_1) = 3/2$. We obtain the upper bound $|\D|^{3/2}$ on the number of occurrences of identifiers from $R$, and hence on the readability of any $\T_1$-factorisation.

Note however that in the right f-tree $\T_2$ in Figure~\ref{fig:f-tree-upper}, each of $\hat{Q}_R$, $\hat{Q}_S$, $\hat{Q}_T$ and $\hat{Q}_U$ is covered by only one of its relations, and hence $f(\T_2) = 1$. Any $\T_2$-factorisation will therefore have readability at most linear in $\D$.

In fact, no f-tree $\T$ for $\hat{Q}$ has $f(\T) < 1$, so $\T_2$ is in this sense optimal and $f(\hat{Q}) = 1$.
\punto
\end{example}

\subsection{Lower Bounds}

We also show that the obtained bounds on the numbers of occurrences of identifiers are essentially tight. For any query $Q$ and any f-tree $\T$, we construct arbitrarily large databases for which the number of occurrences of some symbol is asymptotically as large as the upper bound.

The expression for the number of occurrences of an identifier, given
in Lemma~\ref{lemma:number-occurrences}, states the size of a specific
query result. As a first attempt to construct a small database $\D$
with a large result for the query $Q_R$, we pick $k$ attribute classes
of $Q_R$ and let each of them attain $N$ different values. If each
relation has attributes from at most one of these classes, each
relation in $\D$ will have size at most $N$, while the result of $Q_R$
will have size $N^k$.

This corresponds to an independent set of $k$ nodes in the hypergraph
of $Q_R$. We can again strenghten this result by lifting independent
sets to a weighted version. Since the edge cover and the independent
set problems are dual when written as linear programming problems,
this lower bound meets the upper bound from the previous
subsection. The following result, derived from results
in~\cite{AGM08}, forms the basis of our argument.

\begin{lemma}
\label{lem:fcover-lower}
For any equi-join query $Q$, there exist arbitrarily large databases $\D$ such that $||Q(\D)|| \geq (|\D|/|Q|)^{\rho^*(Q)}$.
\end{lemma}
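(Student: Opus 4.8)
The plan is to construct, for a given equi-join query $Q = \sigma_\phi(R_1\times\dots\times R_n)$ and any target size parameter, a database $\D$ that is essentially a "product construction" driven by an optimal solution to the \emph{dual} of the fractional edge cover linear program. Recall that the LP dual of Definition~\ref{def:fractional-cover-number} is the fractional independent set (or fractional matching) problem: maximise $\sum_A y_A$ over weights $y_A\geq 0$ on attribute classes, subject to $\sum_{A : R_i\in r(A)} y_A \leq 1$ for every relation $R_i$. By LP duality the optimal value of this dual equals $\rho^*(Q)$. I would fix an optimal dual solution $(y_A)_A$; by standard LP theory it can be taken rational, so after scaling there is an integer $N$ (which we let grow) such that each $N^{y_A}$ is a (large) integer — or, more carefully, I would take $N$ ranging over suitable powers so that all the exponents $N^{y_A}$ are integral, which is enough to get "arbitrarily large" databases.

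First I would build the database as follows: let each attribute class $A$ range over a domain $\mathrm{Dom}_A$ of size $N^{y_A}$, and let each relation instance $\mathbf{R}_i$ be the \emph{full} cross product of the domains of the attribute classes it touches, i.e.\ $\mathbf{R}_i = \prod_{A : R_i\in r(A)} \mathrm{Dom}_A$. Then $|\mathbf{R}_i| = \prod_{A : R_i\in r(A)} N^{y_A} = N^{\sum_{A:R_i\in r(A)} y_A} \leq N^1 = N$ by the dual feasibility constraint, so $|\D| = \sum_i |\mathbf{R}_i| \leq nN = |Q|\cdot N$ (here I use that $|Q|$ counts at least the $n$ relations; if $|Q|$ also counts attributes the inequality only gets easier). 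Next I would compute $||Q(\D)||$: since every relation is the full cross product over its attribute classes and all joins are equalities within classes, the result relation is exactly the full cross product $\prod_A \mathrm{Dom}_A$ over all attribute classes, so $||Q(\D)|| = \prod_A N^{y_A} = N^{\sum_A y_A} = N^{\rho^*(Q)}$. Combining, $||Q(\D)|| = N^{\rho^*(Q)} \geq (|\D|/|Q|)^{\rho^*(Q)}$, which is the claimed bound. Finiteness of the construction for infinitely many $N$ gives arbitrarily large $\D$.

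The main obstacle — and the point needing the most care — is the rationality/integrality bookkeeping: an optimal dual solution $y_A$ need not make $N^{y_A}$ an integer for a given $N$, so I would argue that the $y_A$ can be chosen rational with a common denominator $q$, and then restrict attention to $N$ of the form $N = M^q$ for integer $M\to\infty$, so that each $N^{y_A} = M^{q y_A}$ is a genuine integer; this preserves both the upper bound on $|\D|$ and the exact value of $||Q(\D)||$, and "arbitrarily large" is witnessed along this subsequence. A secondary point is handling attribute classes that appear in no relation (vacuous, assign $y_A=0$) and relations with no attributes (size-$1$ instance, contributing a harmless constant factor); neither affects the asymptotics. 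Everything else is the routine cross-product computation above, and the duality with the upper bound in Lemma~\ref{lem:fcover-upper} is exactly the complementary-slackness pairing between edge covers and independent sets, as the surrounding text already anticipates.
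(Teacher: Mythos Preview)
Your construction is essentially the paper's own: take an optimal rational dual solution $(y_{A^*})$, set domain sizes $N^{y_{A^*}}$, populate each relation with the full product of the relevant domains, and read off $|\mathbf{R}_i|\le N$ and $||Q(\D)|| = N^{\rho^*(Q)}$. The integrality bookkeeping via $N=M^q$ is exactly what the paper does too.

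There is one genuine gap. In this paper's setting a query may be \emph{repeating}: distinct relation symbols $R_i$, $R_j$ may map under $\mu$ to the same underlying relation name, and a database must interpret them by the \emph{same} relation instance. Your construction assigns to each $R_i$ its own product instance $\mathbf{R}_i$, and when $\mu(R_i)=\mu(R_j)$ but the attribute classes touched differ, these instances will in general be different, so the object you build is not a valid database for $Q$. The paper handles this explicitly: it first runs your argument for the non-repeating case, and then for repeating $Q$ replaces each group $\{\mathbf{R}_{i_1},\dots,\mathbf{R}_{i_k}\}$ of instances sharing a name by the single instance $\mathbf{R}=\bigcup_\ell \mathbf{R}_{i_\ell}$. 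One checks $|\D'|\le|\D|$, $|\D'|\ge N$ (the largest relation only grew), and $Q(\D')\supseteq Q(\D)$, so the bound survives. You should add this step; without it the lemma is proved only for non-repeating queries.
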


Now let $Q = \pi_{\bar{A}}(\sigma_\phi (R_1\times \dots \times R_n))$
be a query, let $\mathcal{T}$ be an f-tree of $Q$ and let $R$ be a
relation in $Q$. Define $NR$, $\phi_R$ and $Q_R$ as before. We can
apply Lemma~\ref{lem:fcover-lower} to the expression from
Lemma~\ref{lemma:number-occurrences} to infer lower bounds for numbers
of occurrences of identifiers in the $\T$-factorisation of $Q(\D)$.

\begin{lemma}
\label{lem:lower}
There exist arbitrarily large databases $\D$ such that each identifier
from $R$ occurs in the $\T$-factorisation of $Q(\D)$ at least
$(|\D|/|Q|)^{\rho^*(Q_R)}$ times.
\end{lemma}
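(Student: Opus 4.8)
The plan is to realise all the estimates in the proof of Lemma~\ref{lem:number-occurrences2} as equalities by a careful choice of database. By Lemma~\ref{lemma:number-occurrences}, the number of occurrences of the identifier $r$ of a tuple $\tuple t$ from $R$ in the $\T$-factorisation of $Q(\D)$ equals $||(\pi_{NR}(\sigma_{{\cal S}(R)=\tuple t}\sigma_\phi(R_1\times\dots\times R_n)))(\D)||$, and the chain of estimates from this quantity down to $||Q_R(\D_R)||$ loses precision at three points: (i)~the extra selection $\sigma_{{\cal S}(R)=\tuple t}$, (ii)~pulling $\pi_{NR}$ through the product, and (iii)~weakening $\phi$ to its restriction $\phi_R$. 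I would neutralise all three at once by building $\D$ so that $\mathbf R$ holds a single tuple and so that every attribute whose equivalence class (an inner node) does \emph{not} lie in $NR$ is assigned one fixed value, say $0$, in every relation of $\D$.

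Concretely, I would first apply Lemma~\ref{lem:fcover-lower} to the equi-join query $Q_R=\sigma_{\phi_R}(\pi_{NR}R_1\times\dots\times\pi_{NR}R_n)$, obtaining arbitrarily large databases $\D_R^0$ over the relations $\pi_{NR}R_1,\dots,\pi_{NR}R_n$ with $||Q_R(\D_R^0)||\geq(|\D_R^0|/|Q_R|)^{\rho^*(Q_R)}$; here $\pi_{NR}R$ is a nullary relation, and its instance is nonempty (the bound forces $||Q_R(\D_R^0)||>0$), so we may take it to be a single empty tuple. I would then lift $\D_R^0$ to a database $\D$ for $Q$: the instance of each $R_i$ is obtained from the instance of $\pi_{NR}R_i$ in $\D_R^0$ by extending every tuple to the full schema of $R_i$, setting every attribute not in $NR$ to $0$. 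In particular $\mathbf R$ is the single tuple $\tuple t=\tuple{0,\dots,0}$ with identifier $r$. Extending tuples does not change their number, so $|\D|=|\D_R^0|$; and $Q_R$ has the same relations as $Q$ but fewer attributes (the attributes of $R$ are never in $NR$), so $|Q_R|\leq|Q|$.

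The heart of the argument is a short verification. Since every inner node is an equivalence class of attributes, each equality in $\phi$ lies inside a single node, so $\phi$ splits into $\phi_R$ together with equalities whose node is outside $NR$; those, like all equalities in ${\cal S}(R)=\tuple t$, involve only attributes set to $0$ and hence hold for every tuple of $\mathbf R_1\times\dots\times\mathbf R_n$. Therefore a combination $(t_1,\dots,t_n)$ satisfies $\phi\wedge({\cal S}(R)=\tuple t)$ iff $(\pi_{NR}t_1,\dots,\pi_{NR}t_n)$ satisfies $\phi_R$, i.e.\ iff it is a tuple of $Q_R(\D_R^0)$; moreover $\pi_{NR}$ is injective on $\mathbf R_1\times\dots\times\mathbf R_n$ because the discarded coordinates are constant. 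Hence the quantity in Lemma~\ref{lemma:number-occurrences} equals $||Q_R(\D_R^0)||$, so the number of occurrences of $r$ is at least $(|\D_R^0|/|Q_R|)^{\rho^*(Q_R)}\geq(|\D|/|Q|)^{\rho^*(Q_R)}$ by monotonicity of $x\mapsto x^{\rho^*(Q_R)}$; since $\D_R^0$ may be taken arbitrarily large, so may $\D$.

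I expect the only real difficulty to be bookkeeping in the lifting step: checking that the constant assignment neither creates new witnesses for $\phi_R$ (it cannot, since $\phi_R$ constrains only $NR$-attributes, whose values are untouched) nor collapses previously distinct tuples of the $\pi_{NR}R_i$, and that the size accounting $|\D|=|\D_R^0|$, $|Q_R|\le|Q|$ is honest, including the benign handling of the nullary relation $\pi_{NR}R$. Given Lemmas~\ref{lemma:number-occurrences} and~\ref{lem:fcover-lower}, the rest is routine unfolding.
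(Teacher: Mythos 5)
Your proposal is correct and follows essentially the same route as the paper's proof: apply Lemma~\ref{lem:fcover-lower} to $Q_R$, lift the resulting database to one for $Q$ by padding all attributes outside $NR$ with a single constant (so that $\mathbf{R}$ collapses to one tuple and the selections $\sigma_{{\cal S}(R)=\tuple{t}}$ and the non-$NR$ part of $\phi$ become vacuous), and then read off the count via Lemma~\ref{lemma:number-occurrences}. Your explicit treatment of the nullary relation $\pi_{NR}R$ and of the injectivity of $\pi_{NR}$ on the padded product only makes precise steps the paper states more tersely.
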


We now lift the result of Lemma~\ref{lem:lower} from the identifiers
from $R$ to all identifiers in the $\T$-factorisation of $Q(\D)$.

\begin{corollary}
There exist arbitrarily large databases $\D$ such that the
$\T$-factorisation of $Q(\D)$ is at least read-$(|\D|/|Q|)^{f(\T)}$.
\end{corollary}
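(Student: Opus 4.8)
The plan is to combine Lemma~\ref{lem:lower} with the fact that $f(\T)$ is defined as a maximum over the finitely many relations of $Q$. First I would fix a relation $R_0$ of $Q$ with $\rho^*(Q_{R_0}) = f(\T)$; such an $R_0$ exists precisely because $f(\T) = \max_R \rho^*(Q_R)$. It is important to choose this maximiser \emph{before} invoking Lemma~\ref{lem:lower}, since the database that the lemma produces depends on the relation to which it is applied.

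Next I would apply Lemma~\ref{lem:lower} to this particular $R_0$. The lemma furnishes arbitrarily large databases $\D$ such that every identifier from $R_0$ occurs in the $\T$-factorisation $\Phi(\T)$ of $Q(\D)$ at least $(|\D|/|Q|)^{\rho^*(Q_{R_0})}$ times. Substituting $\rho^*(Q_{R_0}) = f(\T)$, each identifier of $R_0$ occurs at least $(|\D|/|Q|)^{f(\T)}$ times in $\Phi(\T)$.

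Finally I would unwind the definition of read-$k$: since $\Phi(\T)$ contains an identifier (indeed, every identifier coming from $R_0$) occurring at least $(|\D|/|Q|)^{f(\T)}$ times, the maximum number of occurrences of any identifier in $\Phi(\T)$ is at least $(|\D|/|Q|)^{f(\T)}$, so $\Phi(\T)$ is at least read-$(|\D|/|Q|)^{f(\T)}$; as the databases supplied by Lemma~\ref{lem:lower} can be taken arbitrarily large, this proves the corollary. There is no real obstacle here — the statement is essentially a repackaging of Lemma~\ref{lem:lower} — and the only point meriting care is the order of quantifiers just described.
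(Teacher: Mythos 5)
Your proposal is correct and matches the paper's intended argument: the corollary is stated as an immediate lifting of Lemma~\ref{lem:lower}, obtained by applying that lemma to a relation $R_0$ maximising $\rho^*(Q_{R_0})$, which is exactly what you do. Your remark about fixing the maximiser before invoking the lemma (since the constructed database depends on the chosen relation) is the one point of care, and you handle it correctly.
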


Finally, by minimising over all f-trees $\T$, we find a lower bound on
readability with respect to statically chosen f-trees.

\begin{corollary}
\label{cor:lowerbound}
Let $Q$ be a query. For any f-tree $\T$ of $Q$ there exist arbitrarily
large databases $\D$ for which the $\T$-factorisation of $Q(\D)$ is at
least read-$(|\D|/|Q|)^{f(Q)}$.
\end{corollary}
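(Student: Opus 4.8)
The plan is to reduce the statement to the corollary immediately preceding it (the one asserting that for a \emph{fixed} f-tree $\T$ there are arbitrarily large databases $\D$ with the $\T$-factorisation of $Q(\D)$ at least read-$(|\D|/|Q|)^{f(\T)}$), via a one-line monotonicity argument on the exponent. First I would fix an arbitrary f-tree $\T$ of $Q$ and invoke that corollary, which supplies an infinite family of arbitrarily large databases $\D$ such that the $\T$-factorisation of $Q(\D)$ is at least read-$(|\D|/|Q|)^{f(\T)}$; recall this corollary is itself obtained from Lemma~\ref{lem:lower} by choosing the relation $R$ of $Q$ attaining $\rho^*(Q_R)=\max_{R'}\rho^*(Q_{R'})=f(\T)$, so that the lower bound on the number of occurrences of $R$'s identifiers becomes a lower bound on the readability of the whole $\T$-factorisation.

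Second, I would compare $f(\T)$ with $f(Q)$. By definition $f(Q)=\min_{\T'}f(\T')\le f(\T)$, the minimum ranging over all f-trees of $Q$. Restricting the family of databases from the first step to those that additionally satisfy $|\D|\ge|Q|$ — still an arbitrarily large family, since the original family is unbounded in size — the base $|\D|/|Q|$ is at least $1$, so the map $x\mapsto(|\D|/|Q|)^{x}$ is non-decreasing, whence $(|\D|/|Q|)^{f(Q)}\le(|\D|/|Q|)^{f(\T)}$. Chaining this with the read-$(|\D|/|Q|)^{f(\T)}$ bound yields that the $\T$-factorisation of $Q(\D)$ is at least read-$(|\D|/|Q|)^{f(Q)}$ for arbitrarily large $\D$, which is exactly the claim.

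I expect no real obstacle here: the argument is essentially bookkeeping, the only point deserving a word of care being that intersecting the ``arbitrarily large'' family with the condition $|\D|\ge|Q|$ still leaves an arbitrarily large family (immediate, as the original family has unbounded size). All the substantive work has already been done in Lemmas~\ref{lem:fcover-lower} and~\ref{lem:lower}, which build the tight database instances by exploiting the linear-programming duality between the fractional edge cover number $\rho^*(Q_R)$ and a weighted independent set in the hypergraph of $Q_R$.
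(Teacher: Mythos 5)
Your proposal is correct and matches the paper's (implicit) argument exactly: the corollary is obtained from the preceding one by observing $f(Q)=\min_{\T'}f(\T')\le f(\T)$ and that the base $|\D|/|Q|$ is at least $1$ for sufficiently large $\D$, so the exponent may be lowered. Your explicit remark about restricting to $|\D|\ge|Q|$ is a small point of care the paper leaves unstated, but it changes nothing of substance.
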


\begin{example}
Let us continue Example~\ref{ex:upper-bound}. For the left f-tree in Figure~\ref{fig:f-trees}, an independent set of attributes covering the relations $R$, $S$, and $T$ of the query $Q_U$ is $\{A_R^*\}$. Since $Q_U$ only has one attribute, this is also the largest independent set, and the fractional relaxation of the maximum independent set problem has also optimal cost 1.

For the right f-tree in Figure~\ref{fig:f-trees} the situation is similar. A maximum independent set of attributes covering the relations $T$ and $U$ of the queries $Q_R$ and $Q_S$ is $\{E_T^*\}$ and has size 1.

The situation is more interesting for the query $\hat{Q}$. Recall that for the left f-tree $\T_1$ in Figure~\ref{fig:f-tree-upper}, $\hat{Q}_R = \sigma_{B_S=B_T,C_S=S_U,D_T=D_U}(\pi_{B_S,C_S}S \times \pi_{B_T,D_T} T \times \pi_{C_U,D_U}U)$, its attribute classes being $NR = \{B_S^*,C_S^*,D_T^*\}$. The maximum independent set for $\hat{Q}_R$ has size 1, since any two of its attribute classes are relevant to a common relation. However, the fractional relaxation of the maximum independent set problem allows to increase the optimal cost to $3/2$. In this relaxation, we want to assign nonnegative rational values to the attribute classes, so that the sum of values in each relation is at most one. By assigning to each attribute class the value $1/2$, the sum of values in each relation is equal to one, and the total cost of this solution is $3/2$. This is used in the proof of Lemma~\ref{lem:lower} to construct databases $\D$ for which the identifiers from $R$ appear at least $(|\D|/3)^{3/2}$ times in the $\T_1$-factorisation of $\hat{Q}(\D)$, thus proving the upper bound from Example~\ref{ex:upper-bound} asymptotically tight.

Since all f-trees $\T$ for $\hat{Q}$ have $f(\T) \geq 1$, the results
in this subsection show that for any such f-tree $\T$ we can find
databases $\D$ for which the readability of the $\T$-factorisation of
$Q(\D)$ is at least linear in $|\D|$.
\punto
\end{example}

\subsection{Characterisation of Queries by Readability}

For a fixed query, the obtained upper and lower bounds meet
asymptotically. Thus our parameter $f(Q)$ completely characterises
queries by their readability with respect to statically chosen
f-trees.

\begin{theorem}
\label{th:characterisation}
Fix a query $Q$. For any database $\D$, the readability of $Q(\D)$ is
$O(|\D|^{f(Q)})$, while for any f-tree $\T$ of $Q$, there exist
arbitrarily large databases $\D$ for which the $\T$-factorisation of
$Q(\D)$ is read-$\Theta(|\mathbf{D}|^{f(Q)})$.
\end{theorem}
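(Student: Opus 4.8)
The plan is to assemble Theorem~\ref{th:characterisation} entirely from the corollaries already established in this section, treating it as a clean packaging statement rather than a result requiring new combinatorial work. For the upper bound, I would invoke Corollary~\ref{cor:upperbound}: for any database $\D$, the readability of $Q(\D)$ is at most $M\cdot|\D|^{f(Q)}$. Since $Q$ is fixed, $M\leq|Q|$ is a constant, so $M\cdot|\D|^{f(Q)} = O(|\D|^{f(Q)})$ as $|\D|\to\infty$, which gives the first half. For the lower bound direction, I would invoke Corollary~\ref{cor:lowerbound}: for any f-tree $\T$ of $Q$, there exist arbitrarily large databases $\D$ for which the $\T$-factorisation of $Q(\D)$ is at least \mbox{read-$(|\D|/|Q|)^{f(Q)}$}; again with $|Q|$ constant this is $\Omega(|\D|^{f(Q)})$.

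The only genuine step beyond quoting the two corollaries is reconciling them into a single $\Theta$ statement for each fixed f-tree $\T$. The upper bound from Corollary~\ref{cor:upperbound} applies to \emph{any} f-representation of $Q(\D)$, hence in particular to the $\T$-factorisation; so every $\T$-factorisation is \mbox{read-$O(|\D|^{f(Q)})$}. Combined with the lower bound witnessed on suitable databases $\D$, we get that along those databases the $\T$-factorisation is \mbox{read-$\Theta(|\D|^{f(Q)})$}. I would spell out that the upper bound $O(|\D|^{f(Q)})$ here uses $f(\T)\ge f(Q)=\min_{\T'}f(\T')$, so $|\D|^{f(\T)}$ is itself bounded below by $|\D|^{f(Q)}$, but the point of the $\Theta$ is that for the \emph{specific} databases constructed in Lemma~\ref{lem:lower} (driven by $\rho^*(Q_R)$ for the worst relation $R$, which realises $f(\T)$), the number of occurrences is $\Theta(|\D|^{f(\T)})$; and one must check $f(\T)$ can be taken equal to $f(Q)$ by first choosing an optimal f-tree, or else note the theorem statement only claims $\Theta(|\D|^{f(Q)})$ which is consistent as long as we argue the witnessing bound is also $O(|\D|^{f(Q)})$ on those instances. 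The cleanest route is: the lower-bound construction gives read-$\Omega(|\D|^{f(Q)})$ and Corollary~\ref{cor:upperbound} gives read-$O(|\D|^{f(Q)})$ on the same instances, so the $\T$-factorisation there is read-$\Theta(|\D|^{f(Q)})$.

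I do not anticipate a real obstacle; the whole section has been structured so that this theorem is a corollary of corollaries. The one subtlety worth being careful about is the quantifier order in the second clause --- it is ``for any f-tree $\T$, there exist databases $\D$'' (not a uniform $\D$ working for all $\T$), which matches Corollary~\ref{cor:lowerbound} verbatim, so no extra argument is needed. I would write the proof in two short sentences, citing Corollary~\ref{cor:upperbound} for the first part and Corollary~\ref{cor:lowerbound} together with Corollary~\ref{cor:upperbound} for the second, and remark that $|Q|$ and $M$ are constants once $Q$ is fixed so they are absorbed into the asymptotic notation.

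\begin{proof}
Fix $Q$, so that $|Q|$ and $M$ (the maximal number of relations of $Q$ mapping to the same relation name) are constants. By Corollary~\ref{cor:upperbound}, for any database $\D$ the readability of $Q(\D)$ is at most $M\cdot|\D|^{f(Q)} = O(|\D|^{f(Q)})$, which proves the first claim. For the second claim, let $\T$ be any f-tree of $Q$. By Corollary~\ref{cor:lowerbound}, there exist arbitrarily large databases $\D$ for which the $\T$-factorisation of $Q(\D)$ is at least \mbox{read-$(|\D|/|Q|)^{f(Q)}$}, hence \mbox{read-$\Omega(|\D|^{f(Q)})$}. On the other hand, the $\T$-factorisation of $Q(\D)$ is in particular an f-representation of $Q(\D)$, so by Corollary~\ref{cor:upperbound} it is \mbox{read-$O(|\D|^{f(Q)})$} for every $\D$. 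Therefore, on the databases $\D$ furnished by Corollary~\ref{cor:lowerbound}, the $\T$-factorisation of $Q(\D)$ is \mbox{read-$\Theta(|\D|^{f(Q)})$}.
\end{proof}
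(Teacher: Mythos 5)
Your overall route is the same as the paper's: Theorem~\ref{th:characterisation} has no separate proof in the paper and is presented exactly as you treat it, as a direct packaging of Corollary~\ref{cor:upperbound} (upper bound on readability) and Corollary~\ref{cor:lowerbound} (lower bound on the read-value of any $\T$-factorisation), with $M\le|Q|$ absorbed into the constants. The first claim and the $\Omega$ half of the second claim are handled correctly.

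There is, however, one genuine flaw in your justification of the $\Theta$. You write that the $\T$-factorisation ``is in particular an f-representation of $Q(\D)$, so by Corollary~\ref{cor:upperbound} it is read-$O(|\D|^{f(Q)})$.'' That is a non sequitur: Corollary~\ref{cor:upperbound} bounds the \emph{readability} of $Q(\D)$, i.e.\ the minimum, over all equivalent f-representations, of the maximum number of occurrences of an identifier. A bound on that minimum says nothing about the read-value of one particular representation; the $\T$-factorisation can be far worse than the best equivalent representation. The correct upper bound on the $\T$-factorisation itself is the earlier corollary stating it is at most read-$(M\cdot|\D|^{f(\T)})$, and since $f(\T)\ge f(Q)$ this gives $O(|\D|^{f(Q)})$ only when $\T$ is an \emph{optimal} f-tree with $f(\T)=f(Q)$. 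For a suboptimal $\T$ (e.g.\ a non-branching f-tree for a chain query, where $f(\T)$ strictly exceeds $f(Q)=\lfloor\log_2 n\rfloor-1$), the databases supplied by the lower-bound construction actually force read-$\Omega(|\D|^{f(\T)})$, which is $\omega(|\D|^{f(Q)})$, so on those instances the $\T$-factorisation is \emph{not} read-$\Theta(|\D|^{f(Q)})$. The defensible reading of the theorem --- and the one supported by the cited corollaries --- is: the readability of $Q(\D)$ is $O(|\D|^{f(Q)})$ (achieved by an optimal $\T$, for which the $\Theta$ does hold on the witnessing databases), while \emph{every} f-tree admits arbitrarily large databases forcing read-$\Omega(|\D|^{f(Q)})$. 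You in fact identified this exact subtlety in your discussion, but the proof you then wrote down resolves it the wrong way; you should either restrict the $\Theta$ clause to an optimal f-tree or weaken it to $\Omega$ for arbitrary $\T$.
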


Theorem~\ref{th:characterisation} subsumes the case of hierarchical queries.

\begin{corollary}
\label{cor:hierar}
Fix a query $Q$. If $Q$ is hierarchical, the readability of $Q(\D)$
for any database $\D$ is bounded by a constant. If $Q$ is
non-hierarchical, for any f-tree $\T$ of $Q$ there exist arbitrarily large databases
$\D$ such that the $\T$-factorisation of $Q(\D)$ is
read-$\Theta(|\D|)$.
\end{corollary}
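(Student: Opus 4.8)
The plan is to derive both statements directly from Theorem~\ref{th:characterisation} by showing that the parameter $f(Q)$ dichotomises exactly according to the hierarchical property: $f(Q)=0$ if $Q$ is hierarchical, and $f(Q)\geq 1$ otherwise. Once this is established, the first part follows because the readability is $O(|\D|^{0})=O(1)$, and the second part follows because for $f(Q)\geq 1$ the upper bound of Theorem~\ref{th:characterisation} gives $O(|\D|^{f(Q)})$ while the lower bound (the second half of the theorem, applied with $f(\T)\geq f(Q)\geq 1$ for every f-tree) forces some $\T$-factorisation to be read-$\Omega(|\D|)$; since the exponent is at least one but the relation sizes are at most $|\D|$, one has to argue the exponent is in fact exactly one for the matching $\Theta(|\D|)$ claim (see below).

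First I would handle the hierarchical case. If $Q$ is hierarchical, Proposition~\ref{prop:ftree-hierarchical} gives an f-tree $\T$ with $\mbox{Non-relevant}(R)=\emptyset$ for every relation $R$. For such $\T$, each $Q_R$ has no attributes at all, so its fractional edge cover number $\rho^*(Q_R)=0$ (the empty LP has cost $0$), hence $f(\T)=\max_R\rho^*(Q_R)=0$ and therefore $f(Q)=\min_\T f(\T)=0$. By Corollary~\ref{cor:upperbound} (or Theorem~\ref{th:characterisation}) the readability of $Q(\D)$ is at most $M\cdot|\D|^{0}=M\leq|Q|$, a constant for fixed $Q$.

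Next I would handle the non-hierarchical case. If $Q$ is non-hierarchical, Proposition~\ref{prop:ftree-hierarchical} tells us that \emph{every} f-tree $\T$ has some relation $R$ with $\mbox{Non-relevant}(R)\neq\emptyset$; I need to turn this into the quantitative statement $\rho^*(Q_R)\geq 1$ for that $R$. The point is that $Q_R=\sigma_{\phi_R}(\pi_{NR}R_1\times\cdots\times\pi_{NR}R_n)$ still contains at least one attribute class (a node of $NR$), and every attribute class that appears must be covered by some relation with positive weight, so any feasible solution of the covering LP has cost at least the constraint value $1$ for that attribute; hence $\rho^*(Q_R)\geq 1$, giving $f(\T)=\max_R\rho^*(Q_R)\geq 1$ for every $\T$, and thus $f(Q)\geq 1$. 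Conversely $f(Q)\leq 1$ always holds: taking any f-tree, one can cover each $Q_R$ by a \emph{single} relation (e.g.\ a relation whose attribute set, after projection to $NR$, contains all attributes of $Q_R$ — such a relation exists because the attributes of $NR$ all come from nodes that are ancestors of some leaf whose relation survives the projection; in the worst case a non-branching f-tree makes every $Q_R$ a projection of one input relation), so $\rho^*(Q_R)\leq 1$ and $f(\T)\leq 1$. Combining, $f(Q)=1$ exactly. Then Theorem~\ref{th:characterisation} with $f(Q)=1$ yields: for any f-tree $\T$ of $Q$ there are arbitrarily large $\D$ with the $\T$-factorisation of $Q(\D)$ read-$\Theta(|\D|^{1})=\Theta(|\D|)$, as claimed.

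The main obstacle I anticipate is the argument that $f(Q)=1$ (not merely $f(Q)\geq 1$) for non-hierarchical $Q$ — that is, exhibiting for \emph{every} f-tree $\T$ and \emph{every} relation $R$ a single-relation edge cover of $Q_R$. The delicate point is that $Q_R$ ranges only over attributes in $NR=\mbox{Non-relevant}(R)$ and over the \emph{projected} relations $\pi_{NR}R_i$; I would verify carefully that the node structure of an f-tree — specifically that the attributes of any relation $R_i$ lie only on the ancestors of its leaf — guarantees that for each attribute class $A^*\in NR$ there is a relation $R_i$ (with $A^*$ relevant to it, i.e.\ $R_i\in r(A)$) whose surviving projected attributes $NR\cap\{\text{attrs of }R_i\}$ form an interval of the path, and then pick a single $R_i$ covering all of $NR$ — or, if no single relation does, fall back on the observation that the \emph{non-branching} f-tree always makes each $Q_R$ a projection of one relation, which already shows $f(Q)\le 1$. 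Since only the value of $f(Q)$ enters the statement, this fallback suffices and keeps the argument short.
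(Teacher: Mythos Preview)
Your treatment of the hierarchical case and your derivation of $f(\T)\geq 1$ for every f-tree in the non-hierarchical case match the paper's argument exactly. The gap is in the last paragraph: the claim $f(Q)\leq 1$ for non-hierarchical $Q$ is simply false, and your fallback via non-branching f-trees does not rescue it. In a non-branching f-tree, $Q_R$ is \emph{not} a projection of a single input relation; it still contains all relations $\pi_{NR}R_i$, and the attribute classes in $NR$ may require many of them to cover. The paper's own chain query $Q_n$ (relations $R_i(A_i,B_i)$ joined by $B_i=A_{i+1}$) has $f(Q_n)=\lfloor\log_2 n\rfloor-1$, so $f(Q)$ is unbounded over non-hierarchical queries. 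Concretely, for $R=R_1$ in any f-tree the set $NR$ can contain $\Theta(n)$ attribute classes, each touched by only two binary relations, so no single relation covers $NR$ and $\rho^*(Q_{R_1})$ grows.

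The paper does not attempt to prove $f(Q)=1$; its proof stops at $f(\T)\geq 1$ for every $\T$ (the line ``and also $f(Q)=1$'' in the appendix is a slip for $\geq$) and then appeals to Theorem~\ref{th:characterisation}. The $\Theta(|\D|)$ in the corollary should be read as asserting a linear lower bound on the read number of the $\T$-factorisation: from $f(\T)\geq 1$ and Corollary~\ref{cor:lowerbound} one gets read at least $(|\D|/|Q|)^{f(\T)}\geq |\D|/|Q|$, which is $\Omega(|\D|)$ for fixed $Q$. You do not need, and cannot obtain, a matching $O(|\D|)$ upper bound on the read number of the $\T$-factorisation in general. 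Drop the attempt to pin $f(Q)$ to~$1$ and the rest of your argument goes through.
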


For non-repeating queries, the following result extends the above
dichotomy to the case of readability irrespective of f-trees.

\begin{theorem}
\label{th:unbounded}
Fix a non-repeating query $Q$. If $Q$ is hierarchical, then the
readability of $Q(\D)$ is 1 for any database $\D$. If $Q$ is
non-hierarchical, then there exist arbitrarily large databases $\D$
such that the readability of $Q(\D)$ is $\Omega(\sqrt{|\D|})$.
\end{theorem}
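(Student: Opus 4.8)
The plan is to prove the two halves of Theorem~\ref{th:unbounded} separately, using Proposition~\ref{prop:ftree-hierarchical} for the hierarchical case and a fresh combinatorial construction for the non-hierarchical case. For the first half, suppose $Q$ is non-repeating and hierarchical. By Proposition~\ref{prop:ftree-hierarchical}, $Q$ has an f-tree $\T$ with $\mbox{Non-relevant}(R)=\emptyset$ for every relation $R$, so $NR=\emptyset$ and hence $Q_R$ has no attributes and $\rho^*(Q_R)=0$ for all $R$, giving $f(\T)=0$. By Lemma~\ref{lemma:number-occurrences}, every identifier of every relation occurs exactly $||\pi_\emptyset(\sigma_{{\cal S}(R)=\tuple{t}}\sigma_\phi(R_1\times\cdots\times R_n))(\D)||$ times, which is $0$ or $1$; since the identifier does occur, it is $1$. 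Thus $\Phi(\T)$ is read-$1$ (note that, $Q$ being non-repeating, no two relations share an identifier), so the readability of $Q(\D)$ is $1$ for every $\D$.

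For the second half, let $Q$ be non-repeating and non-hierarchical. Then there are attributes $A,B$ with $r(A)$ and $r(B)$ incomparable and $r(A)\cap r(B)\neq\emptyset$; pick relations $R\in r(A)\setminus r(B)$, $T\in r(B)\setminus r(A)$, and $S\in r(A)\cap r(B)$. The plan is to exhibit, for each $N$, a database $\D$ of size $O(N)$ on which $Q(\D)$ has readability $\Omega(\sqrt{N})$, by reducing to the polynomial $p_N=\sum_{i,j=1}^N r_i s_{ij} t_j$ of Lemma~\ref{lemma:readability-rst}. Concretely, I would populate $R$ so that its $A^*$-value ranges over $\{1,\dots,N\}$ (one tuple per value, all other attributes fixed to a constant), populate $T$ so that its $B^*$-value ranges over $\{1,\dots,N\}$ similarly, populate $S$ with all $N^2$ combinations of $A^*$-value $i$ and $B^*$-value $j$ (other attributes constant), and give every other relation $R_\ell$ a single tuple whose attributes take the constant value on every equivalence class — chosen consistently so that this unique tuple joins with everything. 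Then $Q(\D)$, after projection, is in bijection (as an annotated bag) with the $N^2$ tuples annotated $r_i s_{ij} t_j$ (times the fixed identifiers of the singleton relations, which contribute read-$N^2$ but are harmless), so its provenance polynomial is $p_N$ up to a constant multiplicative factor and padding by read-once identifiers. Since $|\D|=2N+N^2+O(1)$, this is still not linear in the readability bound we want; the fix is to instead let the shared classes take only $\sqrt{|\D|}$ values each, i.e. set $N$ so that $N^2\le|\D|$, giving $|R|,|T|\le N=\Theta(\sqrt{|\D|})$ and $|S|=N^2=\Theta(|\D|)$, hence $|\D|=\Theta(N^2)$ and readability $\Theta(N)=\Theta(\sqrt{|\D|})$ by Lemma~\ref{lemma:readability-rst}.

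The subtle points, and the main obstacle, are twofold. First, I must ensure the readability of $Q(\D)$ is genuinely governed by $p_N$ and not reduced by the other relations: because $Q$ is non-repeating, no identifier of $R$, $S$, or $T$ reappears elsewhere, and the singleton relations contribute identifiers that are algebraically factorable out front (they multiply the entire polynomial), so any read-$k$ representation of $Q(\D)$ restricts to a read-$k$ representation of $p_N$ after deleting those front identifiers — this needs a short argument that deleting a variable that multiplies the whole polynomial cannot increase readability, which follows because substituting that identifier by $1$ (legal under algebraic equivalence since it only merges via distributivity, not idempotence... actually one must be careful: substitution is not an algebraic-equivalence move) — so instead I argue directly that a read-$k$ f-representation $\Phi$ of $Q(\D)$, being algebraically equivalent to a flat form where every monomial contains the block of singleton identifiers, must have each such singleton identifier appearing in a position dominating all monomials, and contracting those yields an equivalent read-$\le k$ representation of $p_N$. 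Second, I must verify the $A^*$- and $B^*$-equivalence classes really can be made to take the prescribed independent values; this is where non-hierarchicality is essential — $A^*\cap r(\cdot)$ and $B^*\cap r(\cdot)$ overlapping only at $S$ is exactly what lets $R$'s identifiers depend only on $i$, $T$'s only on $j$, and $S$'s on both, reproducing the "crown''-type obstruction. I expect the cleanest writeup routes the lower bound through Lemma~\ref{lem:lower} applied to a non-branching f-tree where $R$ (or $T$) has both $B^*$ (resp.\ $A^*$) and the common class as non-relevant nodes, but since the theorem must hold for \emph{every} f-tree, not just one, the self-contained database construction above, combined with Lemma~\ref{lemma:readability-rst}, is the safer path.
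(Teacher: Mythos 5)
Your overall strategy coincides with the paper's: the hierarchical half via Proposition~\ref{prop:ftree-hierarchical} and Lemma~\ref{lemma:number-occurrences} (and your observation that non-repetition is what turns ``each identifier occurs once'' into ``read-$1$'' is exactly the right justification), and the non-hierarchical half via a grid database whose provenance polynomial, restricted to the identifiers of $R$, $S$ and $T$, is $p_N$, invoking Lemma~\ref{lemma:readability-rst} with $|\D|=\Theta(N^2)$. The first half is fine as written.

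There is, however, one concrete defect in your database construction for the lower bound: you give \emph{every} relation other than $R$, $S$, $T$ a single tuple ``chosen consistently so that this unique tuple joins with everything.'' That is impossible whenever some further relation $R_\ell$ also has an attribute in $A^*$ or $B^*$, and nothing prevents $r(A)$ or $r(B)$ from containing more than the two relations you selected. A single tuple in such an $R_\ell$ pins the $A^*$-value (say) to one constant, the join then produces only the monomials with a fixed $i$, the restricted polynomial becomes $r_1\sum_j s_{1j}t_j$, which is read-once, and the lower bound evaporates. The paper avoids this by setting the domains of $A^*$ and $B^*$ to $\{1,\dots,N\}$, all other domains to $\{1\}$, and taking every relation instance to be the \emph{full} relation over these domains; each relation then has $1$, $N$ or $N^2$ tuples and $|\D|=\Theta(N^2)$ still holds. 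Relatedly, your ``singleton identifiers factor out front'' picture breaks under the corrected construction (other relations may carry identifiers indexed by $i$ or $j$); the clean form of the restriction step, which the paper uses implicitly, is that erasing from any read-$k$ f-representation of $Q(\D)$ all occurrences of identifiers not from $R$, $S$, $T$ yields an expression that is still equivalent, under distributivity and commutativity, to $p_N$ and still read-$k$, whence $k=\Omega(N)$ by Lemma~\ref{lemma:readability-rst}. Your closing remark that routing the bound through Lemma~\ref{lem:lower} would be insufficient (it constrains only $\T$-factorisations, not arbitrary equivalent representations) is correct.
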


\begin{figure}[t]
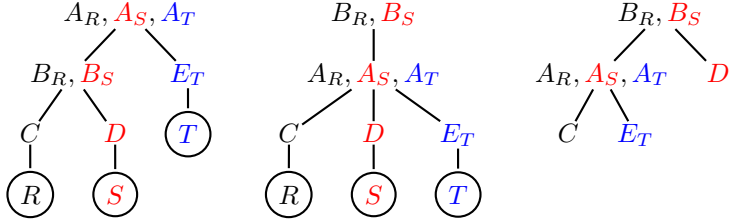

\begin{small}
\[
\psset{levelsep=8mm, nodesep=1pt, treesep=5mm}
\pstree{\TR{A_R,{\color{red}A_S},{\color{blue}A_T}}}
{
  \pstree{\TR{B_R,{\color{red}B_S}}}
  {
   \pstree{\TR{C}}{\Tcircle{R}}
   \pstree{{\color{red}\TR{D}}}{{\color{red}\Tcircle{S}}}
  }
  \pstree{\TR{{\color{blue}E_T}}}{{\color{blue}\Tcircle{T}}}
}%
\hspace*{2em}%
\psset{levelsep=8mm, nodesep=1pt, treesep=5mm}
\pstree{\TR{B_R,{\color{red}B_S}}}
{
  \pstree{\TR{A_R,{\color{red}A_S},{\color{blue}A_T}}}
  {
   \pstree{\TR{C}}{\Tcircle{R}}
   \pstree{{\color{red}\TR{D}}}{{\color{red}\Tcircle{S}}}
   \pstree{\TR{{\color{blue}E_T}}}{{\color{blue}\Tcircle{T}}}
  }
}%
\hspace*{2em}%
\psset{levelsep=8mm, nodesep=1pt, treesep=5mm}
\pstree{\TR{B_R,{\color{red}B_S}}}
{
  \pstree{\TR{A_R,{\color{red}A_S},{\color{blue}A_T}}}
  {
   \TR{C}
   \TR{{\color{blue}E_T}}
  }
  {\color{red}\TR{D}}
}%
\]
\end{small}

\caption{Left to right: Two f-trees and a tree which cannot be extended to an f-tree, used in Example~\ref{ex:r-f-trees}.}
\label{fig:r-f-trees}
\end{figure}

%%%%%%%%%%%%%%%%%%%%%%%%%%%%%%%%%%%%%%%%%%%%%%%
\section{Algorithms for Query Characterisation}
\label{sec:algorithm}

Given a query $Q$, we show how to compute the parameter $f(Q)$
characterising the upper bound on readability. We give an algorithm
that iterates over all f-trees $\T$ of $Q$ to find one with minimum
$f(\T)$. We further prune the space of possible f-trees to avoid
suboptimal choices.

The following lemma facilitates the search for optimal f-trees. Intuitively, since the parameter $f(\T)$ depends on the costs of fractional covers of $Q_R$ for the relations $R$ of $Q$, and since $Q_R$ is the restriction of $Q$ to the attributs of $\mbox{Non-relevant}(R) = \mbox{Path}(R) \setminus \mbox{Relevant}(R)$, by shrinking the sets $\mbox{Path}(R)$, the fractional cover number of $Q_R$ and hence the parameter $f(\T)$ can only decrease.

\begin{lemma}
\label{lem:dec-path}
If $\T_1$ and $\T_2$ are f-trees for a query $Q$, and
$\textrm{Path}(R)$ in $\T_1$ is a subset of $\textrm{Path}(R)$ in $\T_2$ for any
relation $R$ of $Q$, then $f(\T_1) \leq f(\T_2)$.
\end{lemma}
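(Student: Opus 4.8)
The plan is to show that for each relation $R$ of $Q$, the fractional edge cover number $\rho^*(Q_R^{(1)})$ of the restricted query $Q_R$ computed from $\T_1$ is at most $\rho^*(Q_R^{(2)})$ computed from $\T_2$; the inequality $f(\T_1) \leq f(\T_2)$ then follows immediately by taking the maximum over all $R$, since $f(\T) = \max_R \rho^*(Q_R)$. So fix a relation $R$. Let $NR_1 = \mbox{Non-relevant}(R)$ in $\T_1$ and $NR_2 = \mbox{Non-relevant}(R)$ in $\T_2$. The first step is to observe that from $\mbox{Path}(R)$ in $\T_1$ being a subset of $\mbox{Path}(R)$ in $\T_2$, we get $NR_1 \subseteq NR_2$: indeed, $\mbox{Relevant}(R)$ is exactly the set of attribute classes of $Q$ containing an attribute of $R$, which is the same in both f-trees (it depends only on $Q$, not on $\T$), and $NR = \mbox{Path}(R) \setminus \mbox{Relevant}(R)$, so shrinking $\mbox{Path}(R)$ shrinks $NR$ by the same amount. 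Hence $Q_R^{(1)} = \sigma_{\phi_{R,1}}(\pi_{NR_1}R_1 \times \dots \times \pi_{NR_1}R_n)$ is, roughly speaking, the restriction of $Q_R^{(2)}$ to a subset of its attribute classes.

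The second and main step is to compare the two linear programs. Let $x = (x_i)$ be an optimal fractional edge cover for $Q_R^{(2)}$, of cost $\rho^*(Q_R^{(2)})$. I claim the \emph{same} vector $x$ is feasible for the linear program of $Q_R^{(1)}$. The covering constraint of $Q_R^{(1)}$ has one inequality $\sum_{i : R_i \in r(A)} x_i \geq 1$ for each attribute class $A$ of $Q_R^{(1)}$, i.e.\ for each $A \in NR_1$; but every such $A$ is also an attribute class of $Q_R^{(2)}$ (since $NR_1 \subseteq NR_2$), and the set $r(A)$ of relations containing an attribute of $A^*$ is the same in both restricted queries (a relation $R_i$ is in $r(A)$ in $Q_R$ precisely when $A \in NR$ is among the attributes kept in $\pi_{NR}R_i$, which is determined by $Q$ and $NR$, and $A \in NR_1 \subseteq NR_2$). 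Therefore each constraint of the $Q_R^{(1)}$-program is among the constraints of the $Q_R^{(2)}$-program and is satisfied by $x$. Since $x$ is feasible for $Q_R^{(1)}$ with cost $\sum_i x_i = \rho^*(Q_R^{(2)})$, the optimum of the $Q_R^{(1)}$-program can only be smaller: $\rho^*(Q_R^{(1)}) \leq \rho^*(Q_R^{(2)})$. Taking the max over $R$ gives $f(\T_1) \leq f(\T_2)$.

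The one point requiring care — and the likely main obstacle — is pinning down exactly what "the restriction of $Q$ to the attributes of $NR$" means as a named, well-formed equi-join query, and verifying that the associated covering hypergraph behaves monotonically as $NR$ shrinks: namely that dropping attribute classes from $NR$ only removes covering constraints and never changes the incidence between a surviving attribute class $A$ and a relation $R_i$. This is where one has to be slightly pedantic about the definitions of $\pi_{NR}R_i$, of $\phi_R$ (the restriction of $\phi$ to attributes of $NR$), and of $r(A)$ inside $Q_R$. Once those are unwound, the LP comparison is routine: a feasible point for the larger program restricts to a feasible point for the smaller one of no greater cost. I would keep the argument at the level of "same LP solution, fewer constraints," rather than invoking LP duality, since the monotonicity we need is exactly this one-directional containment of constraint sets.
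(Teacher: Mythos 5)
Your proposal is correct and follows essentially the same route as the paper's proof: from $\mathrm{Path}(R)$ shrinking one deduces $\mathrm{Non\mbox{-}relevant}(R)$ shrinks (since $\mathrm{Relevant}(R)$ is determined by $Q$ alone), so $Q_R$ computed from $\T_1$ is an induced subquery of $Q_R$ computed from $\T_2$, and the fractional-cover LP of the former has a subset of the constraints of the latter, whence restricting an optimal solution gives $\rho^*(Q_R^{(1)}) \leq \rho^*(Q_R^{(2)})$ and $f(\T_1)\leq f(\T_2)$ by taking the maximum over $R$. Your explicit check that the incidence $r(A)$ is unchanged for surviving attribute classes is exactly the "induced subhypergraph" observation the paper makes.
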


In any f-tree $\T$, each relation symbol $R$ lies under its lowest
relevant node $A^*$. By moving $R$ upwards directly under $A^*$,
$\textrm{Path}(R)$ can only shrink, and by Lemma~\ref{lem:dec-path},
$f(\mathcal{T})$ can only decrease. Thus, when iterating over all
possible f-trees $\mathcal{T}$ to find one with lowest
$f(\mathcal{T})$, we can assume that the leaves are as close as
possible to the root, and it is enough to iterate over all the
possible subtrees formed by the inner nodes of f-trees. We next denote by 
\emph{reduced f-trees} the f-trees where the leaves are removed.
The only condition for a rooted tree over the set of nodes labelled by
the attribute classes of $Q$ to be a reduced f-tree, is that for each
relation $R$, no two nodes relevant to $R$ lie in sibling
subtrees. Call this condition $\mathcal{C}$.

\begin{example}
\label{ex:r-f-trees}
Consider the relations $R$, $S$ and $T$ over sche\-mas
$\{A_R,B_R,C\}$, $\{A_S,B_S,D\}$ and $\{A_T,E_T\}$ respectively, and
the query $Q = \sigma_{\phi}(R\times S\times T)$ with $\phi =
(A_R=A_S,A_R=A_T,B_R=B_S)$. Figure~\ref{fig:r-f-trees} depicts three
trees. Without their leaves, the first two are reduced f-trees.  The
third tree is not a reduced f-tree as it violates condition
$\mathcal{C}$: the nodes $A_S^*$ and $D^*$ lie in sibling subtrees,
yet they are both relevant to $S$. We cannot place the leaf $S$ under
both of them.\punto
\end{example}

Any reduced f-tree is a rooted forest satisfying the condition
$\mathcal{C}$. Such a forest can either be a single rooted tree, or a
collection of rooted trees. In the first case, the condition
$\mathcal{C}$ on the whole tree rooted at $A^*$ is equivalent to
$\mathcal{C}$ on the collection of subtrees of $A^*$. In the second
case, the condition $\mathcal{C}$ must hold in the individual
subtrees, but in addition, for each relation $R$, the set of its
relevant nodes $\textrm{Relevant}(R)$ can only intersect one of the
subtrees. This recursive characterisation of the condition
$\mathcal{C}$ is used in the {\bf iter} algorithm in
Figure~\ref{fig:algo1} to enumerate all reduced f-trees of a query
with the set $S$ of attribute classes.

\begin{figure}[t!]
\framebox[\columnwidth]{
\parbox{12cm}{
\begin{small}
\medskip
Call a partition $P_1,\dots,P_n$ \emph{good} if for each relation $R$ in $Q$, the nodes relevant to $R$ lie in at most one $P_i$.
\begin{align*}
&\hspace*{-1em} \underline{\mbox{\bf{}iter}} \mbox{(node set $S$)}\\
&\mbox{\bf foreach } A^* \in S \mbox{\bf{} do}\tag{1}\\
&\hspace*{2em} \mbox{\bf foreach } \mathcal{T} \in \mbox{{\bf{}iter}($S \setminus \{A^*\}$)} \mbox{\bf{} do}\\
&\hspace*{4em} \mbox{\bf output} \mbox{ tree formed by root $A^*$ and child $\mathcal{T}$} \\
&\mbox{{\bf foreach} good partition } P_1,\dots,P_n \mbox{ of } S \mbox{\bf{} do}\tag{2}\\
&\hspace*{2em}
\mbox{\bf foreach } (\mathcal{T}_1,\dots,\mathcal{T}_n) \in (\mbox{\bf iter}(P_1),\dots,\mbox{\bf iter}(P_n))  \mbox{\bf\ do}\\
&\hspace*{4em}
\mbox{\bf output } \mathcal{T}_1\cup\cdots\cup \mathcal{T}_n
\end{align*}
\end{small}\vspace*{-1em}
 \centering
}}
\caption{Iterating over all reduced f-trees.}
\label{fig:algo1}
\end{figure}

\begin{example}
Consider the query in Example~\ref{ex:r-f-trees}. When algorithm
\textbf{iter} chooses the root $\{A_R,A_S,A_T\}$ in step 1, in the
next recursive call it can split the remaining notes into $P_1 =
\{B_R^*,C^*,D^*\}$ and $P_2 = \{E_T^*\}$, since $\textrm{Relevant}(R)$
and $\textrm{Relevant}(S)$ only intersect $P_1$ and
$\mathrm{Relevant}(T)$ only intersects $P_2$. The first tree in
Figure~\ref{fig:r-f-trees} is created like this. However, when we
choose $\{B_R,B_S\}$ in step 1, in the next recursive call there are
no possible partitions in step 2, since the node $\{A_R,A_S,A_T\}$
lies in all of $\textrm{Relevant}(R)$, $\textrm{Relevant}(S)$,
$\textrm{Relevant}(T)$. The second tree in Figure~\ref{fig:r-f-trees}
is created within this call, while the third tree in
Figure~\ref{fig:r-f-trees}, which is not a valid reduced f-tree, is
never produced.\punto
\end{example}

However, some choices of the root in line (1) and some choices of
partitioning in line (2) of {\bf iter} are suboptimal. Firstly we have

\begin{lemma}
\label{lem:swap}
Let $\T$ be an f-tree. For two nodes $A^*$ and $B^*$, if $r(B) \subset r(A)$ and $B^*$ is an ancestor of $A^*$, then by swapping them we do not violate the
condition ${\cal C}$ and do not increase $f(\mathcal{T})$.
\end{lemma}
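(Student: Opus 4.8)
\textbf{Proof plan for Lemma~\ref{lem:swap}.}

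The plan is to argue separately that the swap (i) preserves condition $\mathcal{C}$, and (ii) does not increase $f(\mathcal{T})$, the latter via Lemma~\ref{lem:dec-path}. Let $\T'$ be the f-tree obtained from $\T$ by swapping $A^*$ and $B^*$: the subtree that was rooted at $B^*$ now has $A^*$ as its root, $B^*$ becomes the unique child (or one of the children) of $A^*$ carrying the old subtree of $A^*$, and all other children of $B^*$ in $\T$ are reattached directly under $A^*$ in $\T'$. The only nodes whose ancestor sets change are: (a) $A^*$ itself, which loses $B^*$ from its ancestors; (b) $B^*$ itself, which gains $A^*$; and (c) the nodes strictly between $B^*$ and $A^*$ — but since $B^*$ is an ancestor of $A^*$ and we are told nothing about intermediate nodes, the cleanest reading (and the one I would adopt) is that $A^*$ is a \emph{child} of $B^*$, so there are no intermediate nodes and the two cases (a),(b) are the whole story. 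I would state this adjacency assumption explicitly, noting it is without loss of generality since a general ancestor relation can be reduced to a sequence of adjacent swaps.

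For part (i), condition $\mathcal{C}$ says no two nodes relevant to the same relation $R$ lie in sibling subtrees. A violation in $\T'$ but not in $\T$ would require a pair of $R$-relevant nodes that became siblings. The structural changes are: $B^*$'s old children (other than the branch containing $A^*$) are now siblings of $B^*$ under $A^*$, and they are siblings of the old subtree of $A^*$. In $\T$, those same children were siblings of the branch through $A^*$ under $B^*$, and the old subtree of $A^*$ sat under $A^*$ which sat under $B^*$ — so any $R$ with a relevant node in two of these places already had relevant nodes in sibling subtrees under $B^*$ in $\T$, contradiction. The subtle case is $R$-relevant nodes $A^*$ and $B^*$ themselves: in $\T'$ they are in ancestor-descendant position, not sibling position, so no violation; and since $r(B)\subset r(A)$, every relation relevant to $B^*$ is relevant to $A^*$, so $A^*$ being an ancestor of $B^*$ keeps all such relations "vertically stacked" rather than split. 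I would check that every leaf $R$ can still be placed: its lowest relevant node is unchanged or moved up, and all its relevant nodes remain on one root-to-leaf path.

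For part (ii), I invoke Lemma~\ref{lem:dec-path}: it suffices to show $\textrm{Path}(R)$ in $\T'$ is a subset of $\textrm{Path}(R)$ in $\T$ for every relation $R$. The path sets change only for leaves sitting below $A^*$ or below one of $B^*$'s reattached children. A leaf under $A^*$'s old subtree: in $\T$ its path passes $\ldots,B^*,A^*,\ldots$; in $\T'$ it passes $\ldots,A^*,\ldots$ (having lost $B^*$) — a subset. A leaf under one of the reattached children $C$: in $\T$ its path was $\ldots,B^*,C,\ldots$; in $\T'$ it is $\ldots,A^*,C,\ldots$. This is \emph{not} automatically a subset, so here I use the hypothesis $r(B)\subset r(A)$: for such a leaf $R$, if $B^*$ was relevant to $R$ then $R\in r(B)\subset r(A)$, so $A^*$ is relevant to $R$ too; but then $A^*$ and the old subtree of $A^*$ both contained $R$-relevant material while being in sibling position relative to $C$'s subtree under $B^*$ in $\T$ — contradicting $\mathcal{C}$ in $\T$ unless $B^*$ was in fact \emph{not} relevant to $R$, i.e. $R\notin r(B)$, i.e. $B^*\notin\textrm{Path}(R)$ matters not, or rather $B^*$ can be dropped. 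The honest version of this argument is: for a leaf $R$ under a reattached child, either $B^*$ is non-relevant to $R$ (then $\textrm{Path}(R)$ in $\T'$ misses $B^*$ but the non-relevant node only helps — I would re-examine whether we need a subset of \emph{Path} or of \emph{Relevant} here) or $B^*$ is relevant, forcing $A^*$ relevant, and one shows the resulting path in $\T'$ is contained in the old one because $A^*$ was already on every such path via $B^*\to A^*\to\ldots$ — no, that last leaf is not under $A^*$'s old subtree. I expect this case — controlling $\textrm{Path}(R)$ for leaves under the reattached children $C$ — to be the main obstacle, and the resolution should hinge precisely on combining $r(B)\subset r(A)$ with condition $\mathcal{C}$ to show that such leaves either do not exist in a problematic configuration or that moving from $B^*$ to $A^*$ on their path is in fact a replacement of a node by one already present further up, keeping the path set from growing. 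Once the $\textrm{Path}$-subset relation is established in all cases, Lemma~\ref{lem:dec-path} delivers $f(\T')\le f(\T)$ immediately.
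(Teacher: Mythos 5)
There is a genuine gap in part (ii), and you have in fact located it yourself without closing it. For a relation $R$ whose leaf lies in the subtree under $B^*$ but not under $A^*$, the swap replaces $B^*$ by $A^*$ in $\textrm{Path}(R)$, so the new path set is \emph{not} a subset of the old one and Lemma~\ref{lem:dec-path} simply does not apply; your attempts to force a subset relation (or to argue that such leaves cannot occur in a problematic configuration) cannot succeed, because such leaves do exist and the two path sets are genuinely incomparable. The paper's resolution bypasses Lemma~\ref{lem:dec-path} for exactly these relations and compares the two fractional-cover linear programs directly: the only change to the LP for $Q_R$ is that the covering constraint $\sum_{i:R_i\in r(B)} x_i \ge 1$ is replaced by $\sum_{i:R_i\in r(A)} x_i \ge 1$, and since $r(B)\subset r(A)$ and all $x_i\ge 0$, the new constraint is implied by the old one, so the feasible region can only grow and $\rho^*(Q_R)$ can only decrease. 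This LP comparison is the one idea your plan is missing, and no amount of massaging the Path-subset criterion will substitute for it.

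Two smaller points. First, the intended operation is a plain exchange of the positions of the two nodes $A^*$ and $B^*$, leaving the rest of the tree (including any intermediate nodes between them) fixed; this handles the general ancestor case directly, whereas your proposed reduction of the general case to a sequence of adjacent swaps is unjustified, since the intermediate swaps need not satisfy the hypothesis $r(\cdot)\subset r(\cdot)$. Second, for part (i) your sibling-pair analysis reaches the right conclusion but is more laboured than necessary: if $R\notin r(A)$ then (since $r(B)\subset r(A)$) no node relevant to $R$ moves, and if $R\in r(A)$ then the leaf of $R$ lies below the old position of $A^*$, so both $A^*$ and $B^*$ remain on the root-to-leaf path of $R$ after the exchange; hence all nodes relevant to $R$ still lie on one path and $\mathcal{C}$ is preserved.
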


\noindent Thus, we do not need to consider trees with root $B^*$. The second
tree in Figure~\ref{fig:r-f-trees} is suboptimal, since $B^*$ is the
root instead of $A^*$. If $r(B) = r(A)$, then $A^*$ and $B^*$ are
interchangeable in any f-tree, and we need only consider one of them
as the root.

Secondly, in line (2) of {\bf iter}, among all the good partitions, there always exists a finest one. That is, there always exists a finest partition $P_1,\dots,P_n$ of the attribute classes such that $\textrm{Relevant}(R)$ only intersects
one $P_i$ for each relation $R$. We do not need to consider any coarser partitions in line (2): for any such coarser partition, we could split one of its trees into two, while not increasing $\textrm{Path}(R)$ for any $R$ and thus not increasing $f(\T)$ by Lemma~\ref{lem:dec-path}. Moreover, if $n>1$, by a similar argument we do not need to execute line (1) at all, increasing the fanout of a node is always better. These observations lead to a pruned version of algorithm \textbf{iter}, given in Figure~\ref{fig:algo2}.

\begin{figure}[t!]
\framebox[\columnwidth]{
\parbox{12cm}{
\begin{small}
\medskip
Define a partial order on the nodes of $Q$ by $A^* > B^*$ iff either $r(A) \supset r(B)$, or $r(A)=r(B)$ and $A^*$ is lexicographically larger than $B^*$ (to break ties arbitrarily among interchangeable nodes). Also, call a partition $P_1,\dots,P_n$ \emph{good} if for each relation $R$ in $Q$, the nodes relevant to $R$ lie in at most one $P_i$.
\begin{align*}
&\hspace*{-1em} \underline{\mbox{\bf{}iter-pruned}} \mbox{(node set $S$)}\\
&\mbox{\bf let }P_1,\dots,P_n\mbox{ be the finest good partition of $S$}\\
&\mbox{\bf if }n=1\mbox{\bf{} then}\\
&\hspace*{2em}\mbox{\bf foreach } \mbox{$>$-maximal } A^* \in S \mbox{\bf{} do}\\
&\hspace*{4em} \mbox{\bf foreach } \mathcal{T} \in \mbox{{\bf{}iter-pruned}($S \setminus \{A^*\}$)} \mbox{\bf{} do}\\
&\hspace*{6em} \mbox{\bf output} \mbox{ tree formed by root $A^*$ with $\mathcal{T}$ as its child} \\
&\mbox{\bf else}\\
&\hspace*{2em}\mbox{\bf foreach }(\mathcal{T}_1,\dots,\mathcal{T}_n) \in (\mbox{\bf{}iter-pruned}(P_1),\dots,\mbox{\bf{}iter-pruned}(P_n)) \mbox{\bf{} do} \\
&\hspace*{4em} \mbox{\bf output} \mbox{ $\mathcal{T}_1 \cup \dots \cup \mathcal{T}_n$}
\end{align*}
 \centering
\end{small}\vspace*{-1em}
}}
\caption{Pruned algorithm \textbf{iter-pruned}.}
\label{fig:algo2}
\end{figure}

For the query in Example~\ref{ex:r-f-trees}, algorithm
\textbf{iter-pruned} does not output the second tree in
Figure~\ref{fig:r-f-trees}. The node $\{B_R,B_S\}$ is not considered
for the root since $r(B_R) \subset r(A_R)$. In fact
\textbf{iter-pruned} only produces the first tree from
Figure~\ref{fig:r-f-trees}, and exhibits such behaviour for all
hierarchical queries:

\begin{proposition}
\label{prop:algo-hierarchical}
For a hierarchical query $Q$, the algorithm \textbf{iter-pruned} has
exactly one choice at each recursive call, and outputs a single
reduced f-tree in polynomial time.
\end{proposition}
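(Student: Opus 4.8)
The plan is to show that when $Q$ is hierarchical, every recursive invocation of \textbf{iter-pruned} is forced, so the algorithm behaves deterministically and returns exactly one reduced f-tree. Fix a recursive call on a node set $S$. As noted before the definition of \textbf{iter-pruned}, $S$ has a unique finest good partition $P_1,\dots,P_n$; concretely it is the partition of $S$ into the connected components of the \emph{co-relevance graph}, whose vertices are the nodes in $S$ and whose edges join two nodes that are both relevant to some common relation of $Q$. If $n>1$, the \textbf{else} branch of \textbf{iter-pruned} simply recurses on $P_1,\dots,P_n$ and there is nothing to choose. If $n=1$, i.e.\ the co-relevance graph on $S$ is connected, the algorithm iterates over the $>$-maximal nodes of $S$, so what remains is to prove that for a hierarchical query such a connected $S$ has a \emph{unique} $>$-maximal node.

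This is the heart of the argument. Since $Q$ is hierarchical, the family $\{r(A^*) : A^* \text{ an attribute class of } Q\}$ is laminar: any two of these sets are either nested or disjoint, and in particular two classes that share a relation are comparable under inclusion. Consider a connected $S$ and pick $A^* \in S$ with $r(A^*)$ maximal under $\subseteq$. For any other $B^* \in S$, connectedness gives a path $A^* = C_0, C_1, \dots, C_k = B^*$ in the co-relevance graph; I claim $r(C_i) \subseteq r(A^*)$ for all $i$, by induction on $i$. The case $i=0$ is trivial, and for the step, $r(C_{i-1}) \subseteq r(A^*)$ by hypothesis while $r(C_{i-1}) \cap r(C_i) \neq \emptyset$ because $\{C_{i-1}, C_i\}$ is an edge, so $r(C_i)$ meets $r(A^*)$ and is comparable to it by laminarity; maximality of $r(A^*)$ then forces $r(C_i) \subseteq r(A^*)$. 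Hence $r(B^*) \subseteq r(A^*)$ for every $B^* \in S$, so $r(A^*)$ is the unique $\subseteq$-maximal value among $\{r(B^*) : B^* \in S\}$. Recalling that $A^* > B^*$ iff $r(A^*) \supsetneq r(B^*)$, or $r(A^*) = r(B^*)$ and $A^*$ is lexicographically larger, it follows that the $>$-maximal node of $S$ is exactly the lexicographically largest node whose $r$-set equals this common maximum; in particular it is unique. Note that the sets $r(\cdot)$ are defined relative to the whole query $Q$, so passing to a subset $S$ of the attribute classes --- as happens in the recursive calls --- preserves laminarity, and this argument applies unchanged at every level of the recursion.

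It follows by induction on $|S|$ (with base case $S=\emptyset$, where \textbf{iter-pruned} outputs the empty forest) that \textbf{iter-pruned}$(S)$ makes exactly one choice at each recursive call and returns a single reduced f-tree; applying this to the top-level call on all attribute classes of $Q$ gives the first claim. For the running time, observe that the recursive calls form a tree in which each call either deletes one node from $S$ (when $n=1$) or partitions $S$ into $n \geq 2$ disjoint nonempty blocks (when $n>1$), so the total number of calls is $O(|Q|)$; each call builds the co-relevance graph, finds its connected components, and --- in the $n=1$ case --- compares the at most $|Q|$ sets $r(A^*)$ to identify the $>$-maximal node, all in time polynomial in $|Q|$. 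Hence \textbf{iter-pruned} runs in polynomial time. The only real obstacle here is the connectedness argument of the middle paragraph; the remaining ingredients, including the fact that laminarity is inherited by the node subsets occurring in recursive calls, are routine.
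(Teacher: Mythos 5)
Your proof is correct and follows essentially the same route as the paper's: the finest good partition is the partition into connected components of the ``share-a-relation'' graph, within each connected component the hierarchical property forces a unique $\subseteq$-maximal set $r(A^*)$ containing all the others (so the lexicographic tie-break leaves exactly one $>$-maximal node), and the recursion makes only linearly many calls. The only difference is cosmetic: the paper delegates the key claim --- that each connected component of a hierarchical query has an attribute class relevant to every relation of that component --- to the cited recognition algorithm for hierarchical queries, whereas you prove it directly via laminarity and a path-induction, making the argument self-contained.
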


Using lazy evaluation, at any moment there are at most linearly many
calls of \textbf{iter} or \textbf{iter-pruned} on the stack. Between
two consecutive output trees, there are at most linearly many
recursive calls. The following theorem summarises our results so far.

\begin{theorem}
Given a query $Q$, the algorithms {\bf iter} and {\bf iter-pruned}
enumerate reduced f-trees of $Q$ with polynomial delay and polynomial
space. Algorithm {\bf iter} enumerates all reduced f-trees, while {\bf
iter-pruned} only a subset of these, which contains one with optimal
$f(\T)$.
\end{theorem}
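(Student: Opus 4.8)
The plan is to prove three things about both algorithms, plus one extra for each: that the space and delay bounds hold; that \textbf{iter} outputs exactly the reduced f-trees of $Q$; and that \textbf{iter-pruned} outputs a subset of these which still contains one of minimum $f(\T)$. The common tool is the recursive reformulation of condition $\C$ sketched before Figure~\ref{fig:algo1}: a rooted forest $\T$ on a node set $S$ is a reduced f-tree (satisfies $\C$) precisely when either $\T$ is a single tree whose root $A^*$ sits above a child forest on $S\setminus\{A^*\}$ that is itself a reduced f-tree, or $\T$ splits into $n\ge 2$ trees on the blocks $P_1,\dots,P_n$ of a good partition of $S$, each block carrying a reduced f-tree; the base case $S=\emptyset$ admits only the empty forest. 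The two checks needed here are that hanging a fresh common root above a forest neither creates nor destroys a ``two $R$-relevant nodes in sibling subtrees'' violation (so $\C$ is inherited verbatim), and that the maximal proper subtrees of a forest are exactly its top-level trees (so splitting off the top level exposes exactly the goodness of the induced partition). Along the way I would also record that $\C$ forces $\mbox{Relevant}(R)$ to be a chain in the tree order, so leaf $R$ sits directly below its maximum and $\mbox{Non-relevant}(R)$ is just the set of proper ancestors of that maximum that are not relevant to $R$; this makes the effect of the two pruning rules transparent and identifies the finest good partition of a node set as the connected components of the graph that links two attribute classes sharing a relation.

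Given the reformulation, correctness of \textbf{iter} is an induction on $|S|$: every reduced f-tree on $S$ decomposes in exactly one of the two ways, with the root and child forest (resp.\ the good partition, taken as an unordered family of blocks, and the reduced f-trees on the blocks) uniquely determined, so by the inductive hypothesis it is produced exactly once, once we read line~(2) as ranging over partitions into at least two blocks --- which is also what guarantees termination; conversely every emitted object is a reduced f-tree, and the ranges of lines~(1) and~(2) are disjoint (single trees versus genuine forests). Containment of the output of \textbf{iter-pruned} in that of \textbf{iter} is then an immediate induction: the unique finest good partition is one legal good partition and a $>$-maximal node is one legal root, so every tree \textbf{iter-pruned} emits is also emitted by \textbf{iter} and is in particular a reduced f-tree.

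The substantive point is that \textbf{iter-pruned} retains a tree of minimum $f$. Starting from any reduced f-tree $\T$ with $f(\T)=f(Q)$, I would build a reduced f-tree $\T'$ in the range of \textbf{iter-pruned} with $f(\T')\le f(\T)$ by walking $\T$ top-down, node set by node set, in lockstep with the recursion of \textbf{iter-pruned} and altering $\T$ only by moves that do not raise $f$. At a stage with node set $S'$ and the forest $\T$ currently induces on it, compute the finest good partition $P_1,\dots,P_n$ of $S'$. If $n\ge 2$ but the induced partition is coarser, \emph{split}: replace the induced forest by its restriction to $P_1,\dots,P_n$. Since $\mbox{Relevant}(R)$ is a clique of the shares-a-relation graph and hence lies inside one block, restricting to a block leaves the ancestor order on $\mbox{Relevant}(R)$, the position of leaf $R$, and thus $\mbox{Path}(R)$ unchanged or smaller for every $R$, so by Lemma~\ref{lem:dec-path} this does not raise $f$; then recurse into each block, whose own finest good partition is itself, being a connected component. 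If $n=1$, the induced structure on $S'$ is a single tree; if its root is not $>$-maximal in $S'$, there is a strictly $>$-larger node (necessarily a proper descendant of the root), and we move it to the top by the exchange of Lemma~\ref{lem:swap} when the $r$-sets are strictly nested, or by the free interchange of equal-$r$ nodes in the remaining tie-break case, neither of which raises $f$ or violates $\C$; then recurse below the root on $S'\setminus\{\mbox{root}\}$. Each attribute class is treated once, and the number of exchanges at a node is bounded because each strictly enlarges the $r$-set (or, at equal $r$-sets, strictly increases a lexicographic rank) of the local root, so the process terminates and produces a tree emitted by \textbf{iter-pruned}. Pinning down that the admissible moves are exactly these and that none of them ever raises $f$ --- controlling the split move through Lemma~\ref{lem:dec-path} and, above all, the swap move through Lemma~\ref{lem:swap}, whose own proof is not routine --- is the part I expect to be the main obstacle.

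For the complexity bounds, which apply to both algorithms, I would argue as sketched just before the statement: with lazy (coroutine) evaluation only polynomially many recursive frames are simultaneously live --- the live frames form a tree whose depth is at most $|S|$ (the node set strictly shrinks along every path) and whose leaves carry pairwise disjoint subsets of $S$ --- and each frame stores only a node set and a constant number of loop cursors, so space is polynomial; between two successive outputs only polynomially many recursive calls occur (advancing one cursor per affected level, plus restarting at most one sub-enumerator, each restart a bounded-depth descent making one first choice per level), and every call does polynomial-time work, the finest good partition in particular being a connected-components computation, so the delay is polynomial. Combining these parts with the subset and optimality claims gives the theorem.
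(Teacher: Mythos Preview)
Your proposal is essentially the approach the paper takes: the theorem is stated as a summary of the preceding text, and your proof fills in exactly the ingredients developed there --- the recursive characterisation of condition $\C$, Lemma~\ref{lem:dec-path} for the partition/split pruning, Lemma~\ref{lem:swap} for the root pruning, and the lazy-evaluation remark for the delay and space bounds. Your top-down transformation of an optimal $\T$ into one produced by \textbf{iter-pruned} is a more explicit packaging of the paper's two sentences ``we do not need to consider trees with root $B^*$'' and ``we do not need to consider any coarser partitions'', but the content is the same.

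One small overstatement: you claim that \textbf{iter} produces each reduced f-tree \emph{exactly once}, with the good partition in line~(2) uniquely determined by the forest. That is not true. If a forest has three or more top-level trees on node sets $S_1,S_2,S_3,\dots$, then line~(2) will emit it both via the finest partition $\{S_1\},\{S_2\},\{S_3\},\dots$ and via coarser good partitions such as $\{S_1\cup S_2\},\{S_3\},\dots$ (with the recursive call on $S_1\cup S_2$ splitting again in its own line~(2)). This does not affect the theorem, which only asserts that all reduced f-trees are enumerated; but you should drop the uniqueness claim and argue completeness only (every reduced f-tree is emitted at least once via the finest partition into its top-level trees), which your induction already gives.
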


Both algorithms can enumerate exponentially many reduced f-trees.

\nop{
\begin{proposition}
Consider the relations $R_{i,j}$ of signatures $\{A_{\{i,j\}}^i,A_{\{i,j\}}^j\}$ for $1\leq i \leq j \leq n$, let $\phi = \bigwedge_i (A_{\{1,i\}}^i = \dots = A_{\{n,i\}}^i)$, and let $Q = \sigma_\phi \prod_{i,j} R_{\{i,j\}}$. Then \textbf{iter-pruned} produces $(n-1)!$ different reduced f-trees.
\end{proposition}
}

For each constructed reduced f-tree, we can easily add the leaves with
relations under the lowest node from $\mbox{Relevant}(R)$. For each
such f-tree $\mathcal{T}$, we need to compute $f(\mathcal{T})$, the
maximum of $\rho^*(Q_R)$ over all relations $R$ in $Q$, which can be
done in polynomial time, or using the simplex algorithm for linear
programming.

\begin{figure}[t]
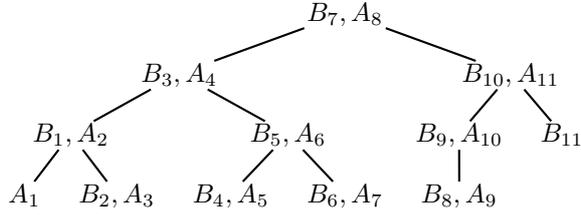

\begin{small}
\[
\psset{levelsep=8mm, nodesep=1pt, treesep=5mm}
\pstree{\TR{B_7,A_8}}
{
  \pstree{\TR{B_3,A_4}}
  {
    \pstree{\TR{B_1,A_2}}
    {
      \TR{A_1}
      \TR{B_2,A_3}
    }
    \pstree{\TR{B_5,A_6}}
    {
      \TR{B_4,A_5}
      \TR{B_6,A_7}
    }
  }
  \pstree{\TR{B_{10},A_{11}}}
  {
    \pstree{\TR{B_9,A_{10}}}
    {
      \TR{B_8,A_9}
    }
    \TR{B_{11}}
  }
}%
\]
\end{small}\vspace*{-1em}

\caption{A reduced f-tree for $Q_{12}$ with $f(\mathcal{T})=2$, the lowest possible.}
\label{fig:chainquery}\vspace*{-1em}
\end{figure}

\begin{example}
  Consider relations $R_i$ over schemas $\{A_i,B_i\}$ for $1\leq i <
  n$, let $\phi = \bigwedge_{i=1}^{n-2} (B_i = A_{i+1})$ and let $Q_n$
  be the query $\sigma_\phi \prod R_i$. This query is a chain of $n-1$
  joins.
  
  A reduced f-tree $\T$ for $Q_{12}$ is shown in Figure~\ref{fig:chainquery}. In the corresponding f-tree, the leaves labelled by all relations apart from $R_{10}$ hang from the leaves of the reduced f-tree. For all relations $R_i$, the query $Q_{R_i}$ has at most two attributes, so $\rho^*(Q_{R_i}) \leq 2$. However, for $R_1$ (and most other $R_i$), each of the four relations of the query $Q_{R_1}$ only has one of the two attributes, so the fractional edge cover number of $Q_{R_1}$ is $2$. It follows that $f(\T) = 2$. In fact, this is the lowest possible value, and $f(Q_{12}) = 2$. For arbitrary $n$, $f(Q_n) = \lfloor \log_2{n} \rfloor - 1$.

  This example shows that branching in f-trees is key to low
  readability. An alternative yet naive approach is to choose a
  minimal set of attributes such that when these attributes are set to
  values from their domain, $Q_n$ becomes hierarchical.  We can then
  sum over all possible domain values for each such attribute, and for
  each combination we create a read-once f-representation. In case of
  $Q_n$, a minimal set of such attributes has cardinality ($\lfloor
  \frac{n}{4} \rfloor$), which is linear in the size of $Q_n$.  The
  corresponding f-tree would have $f(\T) = \Theta(n)$, which is
  exponentially worse than the optimal value.\punto
\end{example}

\section{Algorithms for Computing $\T$-factorisations of Query Results}
\label{sec:falgorithm}

Figure~\ref{fig:algorithm1} gives a high-level recipe for producing
the $\T$-factorisation of $Q(\D)$, given an f-tree $\T$ of a query
$Q$, and a database $\D$. We present here a more detailed
implementation of this algorithm and analyse its performance.

\begin{figure}[t!]
\framebox[\columnwidth]{
\parbox{12cm}{
\begin{small}
\medskip
Let $\mathcal{T}$ be an f-tree for a query $Q$ and let $\mathbf{D}$ be a database. The $\T$-factorisation $Q(\mathbf{D})$ is obtained by running $\mathbf{gen}(\mathcal{T},\top)$, where
\begin{align*}
&\hspace*{-1em} \underline{\mbox{\bf{}gen}} \mbox{(tree $\mathcal{T}$, conjunctive condition $\gamma$)}\\
&\mbox{\bf if }\mbox{$\mathcal{T}$ is a tree with root $A^*$ and children $\mathcal{U}$ }\mbox{\bf then}\\
&\hspace*{2em} \mbox{\bf create }\mbox{f-representation $S$ := an empty sum}\\
&\hspace*{2em} \mbox{\bf foreach }\mbox{value $a$ of any attribute from $A^*$ in the database $\mathbf{D}$} \mbox{\bf{} do}\tag{1}\\
&\hspace*{4em} \mbox{\bf append } \mbox{\bf gen}(\mathcal{U},\gamma \wedge (A^* = a))\mbox{ to $S$} \\
&\hspace*{2em} \mbox{\bf return }S\\
&\mbox{\bf else if }\mbox{$\mathcal{T}$ is a collection of trees $\mathcal{T}_1,\dots,\mathcal{T}_n$ }\mbox{\bf then}\\
&\hspace*{2em} \mbox{\bf return }\mbox{\bf gen}(\mathcal{T}_1,\gamma)\cdot\ldots\cdot\mbox{\bf gen}(\mathcal{T}_n,\gamma) \\
&\mbox{\bf else if }\mbox{$\mathcal{T}$ is a leaf $R$ }\mbox{\bf then}\\
&\hspace*{2em} \mbox{\bf create }\mbox{f-representation $S$ := an empty sum}\\
&\hspace*{2em} \mbox{\bf foreach }\mbox{tuple $\tuple{t_i}$ of $\mathbf{R} = R(\mathbf{D})$ satisfying $\gamma$ } \mbox{\bf{}do}\tag{2}\\
&\hspace*{4em} \mbox{\bf append } r_i\tuple{\pi_{\mathrm{head}(Q)}t_i}\mbox{ to $S$} \\
&\hspace*{2em} \mbox{\bf return }S
\end{align*}
\end{small}\vspace*{-1em}
 \centering
}}
\caption{Naive implementation of the factorisation algorithm.}
\label{fig:falgo1}
\end{figure}

A naive implementation of the factorisation algorithm, exactly mimicking the definition from Figure~\ref{fig:algorithm1}, is given in Figure~\ref{fig:falgo1}. However, it contains two obvious inefficiencies. In line (1), it is inefficient to explicitly iterate over all values $a$ of the attributes from $A^*$, which appear in the database $\mathbf{D}$, because for some of them, $\mbox{\bf gen}(\mathcal{U},\gamma \wedge (A^* = a))$ necessarily produces an empty f-representation. Also, in line (2), it is inefficient to search every time through the entire relation $\mathbf{R}$ for tuples satisfying $\gamma$.

We eliminate these inefficiencies in the implementation $\textbf{gen2}$ given in Figure~\ref{fig:falgo2}, by passing the ranges of tuples satisfying $\gamma$ for each relation of $\D$, instead of the condition $\gamma$. At any call of $\mathbf{gen}$, the set of attributes constrained by $\gamma$ forms a contiguous path ending at a root of the original tree $\mathcal{T}$. Therefore, if we sort the tuples of each relation first by the highest-appearing attribute, then by the next-highest-one, and so on, then in each call of $\textbf{gen}$, the set of tuples satisfied by $\gamma$ will form a contiguous range in each relation. Thus in $\textbf{gen2}$, we only need to pass the pointers to the beginning and end of each such range.

\begin{figure}[t!]
\framebox[\columnwidth]{
\parbox{12cm}{
\begin{small}
\medskip
Let $\T$ be an f-tree for a query $Q$ and let $\D$ be a database. Order the attributes of each relation by their appearance in $\T$ from higest to lowest, and then sort the tuples of each relation, by higher attributes first. The $\T$-factorisation of $Q(\D)$ is obtained by running $\textbf{gen2}(\mathcal{T},\{(1,|\mathbf{R}_i|)\}_{i=1}^n)$, where
\begin{align*}
&\hspace*{-1em} \underline{\mbox{\bf{}gen2}} \mbox{(tree $\mathcal{T}$, ranges $(\mathrm{start}_i,\mathrm{end}_i)$ for $i = 1,\dots,n$)}\\
&\mbox{\bf if }\mbox{$\mathcal{T}$ is a tree with root $A^*$ and children $\mathcal{U}$ }\mbox{\bf then}\\
&\hspace*{2em} \mbox{\bf create }\mbox{f-representation $S$ := an empty sum}\\
&\hspace*{2em} \mbox{\bf repeat} \\
&\hspace*{4em} \mbox{find the next ranges $(start'_i,end'_i) \subseteq (start_i,end_i)$} \\
&\hspace*{4em} \mbox{of tuples sharing the same value $a$ on $A^*$,} \tag{1}\\ 
&\hspace*{4em} \mbox{\bf append } \mbox{\bf gen2}(\mathcal{U},\{(\mathrm{start}'_i,\mathrm{end}'_i)\}_{i=1}^n)\mbox{ to $S$} \\
&\hspace*{2em} \mbox{\textbf{until} no more such ranges exist} \\
&\hspace*{2em} \mbox{\bf return }S\\
&\mbox{\bf else if }\mbox{$\mathcal{T}$ is a collection of trees $\mathcal{T}_1,\dots,\mathcal{T}_n$ }\mbox{\bf then}\\
&\hspace*{2em} \mbox{\bf return } \textstyle\prod_{j=1}^n \mbox{\bf gen2}(\mathcal{T}_j,\{(\mathrm{start}_i,\mathrm{end}_i)_{i=1}^n\}) \\
&\mbox{\bf else if }\mbox{$\mathcal{T}$ is a leaf $R_i$ }\mbox{\bf then}\\
&\hspace*{2em} \mbox{\bf return } \textstyle\sum_{j=\mathrm{start}_i}^{\mathrm{end}_i} r_j\tuple{\pi_{\mathrm{head}(Q)}t_j} \tag{2}
\end{align*}
\end{small}\vspace*{-1em}
 \centering
}}
\caption{Improved implementation of the factorisation algorithm.}
\label{fig:falgo2}\vspace*{-1em}
\end{figure}

Moreover, if $\mathcal{T}$ in
$\mathbf{gen2}(\mathcal{T},\mathrm{ranges})$ is a tree with root
$A^*$, for each relation $R$ with an attribute $A \in A^*$, the tuples
of $\mathbf{R}$ in the corresponding range will be sorted by the
attribute $A$. When iterating over values $a$ of $A^*$ in line (1),
using a mergesort-like strategy we can find those values $a$ which
appear at least once in the relevant range of \emph{each} relation in
$r(A)$. For each such $a$ we also find the corresponding range of
tuples in each relation and recurse. For other $a$, i.e.\ those for
which at least one relation with an attribute in $A^*$ has no tuples
in the current range with value $a$ in that attribute, the
f-representation generated at $A^*$'s children would be empty, and we
do not need to recurse.

Finally, if $\T$ is just a leaf $R$, the iteration in line (2) becomes
trivial in $\mathbf{gen2}$, we simply iterate over the corresponding
range in $\mathbf{R}$.

\begin{example}
Consider the left f-tree $\mathcal{T}$ of Figure~\ref{fig:f-trees} and
the database $\mathbf{D}$ used in Example~\ref{ex:f-trees2}, also
shown in Figure~\ref{fig:exec}. Let us examine the execution of the
call $\textbf{gen2}(\mathcal{T},\mathcal{R})$, where $\mathcal{R}$
represents the full range in each relation of $\D$. The root of
$\mathcal{T}$ is the node $\{A_R,A_S,A_T\}$, relevant to the relations
$R$, $S$ and $T$. The first execution of line (1) finds the ranges
given in red in Figure~\ref{fig:exec}, with the common value of
$A_R=A_S=A_T=1$. Notice that $U$ does not have an attribute in the
root node, so its range remains unchanged.

\begin{figure}[t]
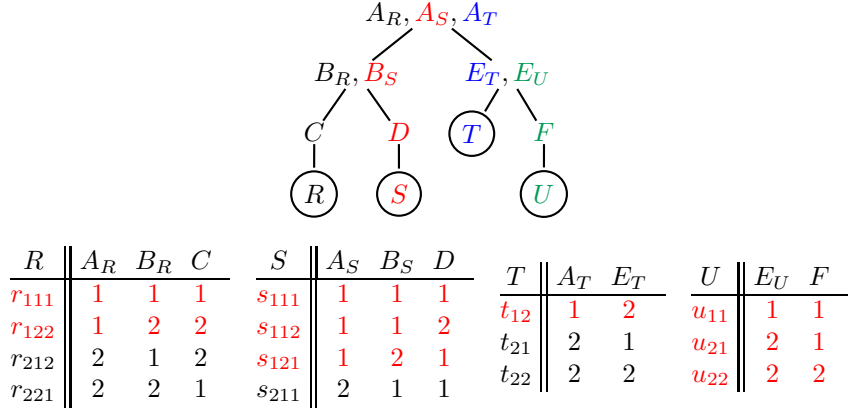

\begin{center}
\begin{small}
\[
\psset{levelsep=8mm, nodesep=1pt, treesep=5mm}
\pstree{\TR{A_R,{\color{red}A_S},{\color{blue}A_T}}}
{
  \pstree{\TR{B_R,{\color{red}B_S}}}
  {
   \pstree{\TR{C}}
    {
      \Tcircle{R}
    }
   \pstree{{\color{red}\TR{D}}}
    {
      \Tcircle{{\color{red}S}}
    }
  }
  \pstree{\TR{{\color{blue}E_T},{\color{ForestGreen}E_U}}}
  {
    \Tcircle{{\color{blue}T}}
    \pstree{{\color{ForestGreen}\TR{F}}}
     {
       \Tcircle{{\color{ForestGreen}U}}
     }
  }
}%
\]
\medskip
\begin{tabular}{c@{\quad}c@{\quad}c@{\quad}c}
\begin{tabular}{@{}c@{~}||@{~}c@{~}@{~}c@{~}@{~}c@{~}@{~}}
 $R$ & $A_R$ & $B_R$ & $C$ \\
\hline
 \color{red} $r_{111}$ & \color{red} 1 & \color{red}1 & \color{red} 1 \\
 \color{red} $r_{122}$ & \color{red} 1 & \color{red}2 & \color{red}2 \\
 $r_{212}$ & 2 & 1 & 2 \\
 $r_{221}$ & 2 & 2 & 1
\end{tabular}
&
\begin{tabular}{@{}c@{~}||@{~}c@{~}@{~}c@{~}@{~}c@{~}@{~}}
 $S$ & $A_S$ & $B_S$ & $D$ \\
\hline
 \color{red}$s_{111}$ & \color{red}1 & \color{red}1 & \color{red}1 \\
 \color{red}$s_{112}$ & \color{red}1 & \color{red}1 & \color{red}2 \\
 \color{red}$s_{121}$ & \color{red}1 & \color{red}2 & \color{red}1 \\
 $s_{211}$ & 2 & 1 & 1
\end{tabular}
&
\begin{tabular}{@{}c@{~}||@{~}c@{~}@{~}c@{~}@{~}}
 $T$ & $A_T$ & $E_T$ \\
\hline
 \color{red}$t_{12}$ & \color{red}1 & \color{red}2 \\
 $t_{21}$ & 2 & 1 \\
 $t_{22}$ & 2 & 2
\end{tabular}
&
\begin{tabular}{@{}c@{~}||@{~}c@{~}@{~}c@{~}@{~}}
 $U$ & $E_U$ & $F$ \\
\hline
 \color{red}$u_{11}$ & \color{red}1 & \color{red}1 \\
 \color{red}$u_{21}$ & \color{red}2 & \color{red}1 \\
 \color{red}$u_{22}$ & \color{red}2 & \color{red}2
\end{tabular}
\end{tabular}
\end{small}
\end{center}
\vspace*{-1em}
\caption{An f-tree $\mathcal{T}$ and a database $\mathbf{D}$ during the execution of $\textbf{gen2}$.}
\label{fig:exec}
\vspace*{-1em}
\end{figure}

After these ranges are found in line (1), they are passed to a next
call of $\textbf{gen2}$ on the subtree formed by the children of
$\{A_R,A_S,A_T\}$. When this call returns, the next execution of line
(1) finds the ranges in $R$, $S$ and $T$ with $A_R = A_S =
A_T = 2$, the range in $U$ being again unchanged.

For further illustration, we list all recursively invoked calls
$\textbf{gen2}(\mathcal{S},\textrm{ranges})$, for which $\mathcal{S}$
is a subtree of $\mathcal{T}$ rooted at an internal node. We use
indentation to express the recursion of the calls. For brevity, we
only give the root of $\mathcal{S}$ instead of $\mathcal{S}$, and we
specify the ranges by giving the characterising condition $\gamma$.

\begin{small}
\begin{align*}
&\textbf{gen2}(A^*,\top) = (r_{111}(s_{111}+s_{112}) + r_{122}s_{121})t_{12}(u_{21}+u_{22}) + \\
& \qquad\qquad\qquad\quad + r_{212}s_{211}(t_{21}u_{11} + t_{22}(u_{21} + u_{22})) \\
&\qquad \textbf{gen2}(B^*,A^*=1) = (r_{111}(s_{111}+s_{112}) + r_{122}s_{121} \\ 
&\qquad\qquad \textbf{gen2}(C,A^*=1\wedge B^*=1) = r_{111} \\
&\qquad\qquad \textbf{gen2}(D,A^*=1\wedge B^*=1) = (s_{111}+s_{112}) \\
&\qquad\qquad \textbf{gen2}(C,A^*=1\wedge B^*=2) = r_{122} \\
&\qquad\qquad \textbf{gen2}(D,A^*=1\wedge B^*=2) = s_{121} \\
&\qquad \textbf{gen2}(E^*,A^*=1) = t_{12}(u_{21}+u_{22}) \\ 
&\qquad\qquad \textbf{gen2}(F,A^*=1\wedge E^*=2) = (u_{21}+u_{22}) \\
&\qquad \textbf{gen2}(B^*,A^*=2) = r_{212}s_{211} \\
&\qquad\qquad \textbf{gen2}(C,A^*=2,B^*=1) = r_{212} \\
&\qquad\qquad \textbf{gen2}(D,A^*=2,B^*=1) = s_{211} \\
&\qquad \textbf{gen2}(E^*,A^*=2) = t_{21}u_{11} + t_{22}(u_{21} + u_{22}) \\ 
&\qquad\qquad \textbf{gen2}(F,A^*=2\wedge E^*=1) = t_{21}u_{11} \\
&\qquad\qquad \textbf{gen2}(F,A^*=2\wedge E^*=2) = t_{22}(u_{21} + u_{22}).
\end{align*}
\end{small}\punto
\end{example} 

%%%%%%%%%%%%%%%%%%%%%%%%%%%%%%%%%%%%%%%%
\medskip%\subsection{Complexity Analysis}

We next investigate the time complexity of $\mathbf{gen2}$ as well as
the size of the produced f-representation.

The first observation is that all lines apart from line (1)
take time linear in the output size. Consider now any particular call
of $\mathbf{gen2}$ on a subtree rooted at node $A^*$ and denote by $P$
the path from the root of the f-tree to the node $A^*$. During the execution of
the loop containing line (1), for each relation $R_i \in r(A)$,
the tuples in the range $(\mathrm{start}_i,\mathrm{end}_i)$ are sorted
by their attributes in $P$ in the order they occur in $P$. Therefore
the iteration over all the maximal subranges
$(\mathrm{start}'_i,\mathrm{end}'_i)$ sharing the same value of $A^*$
can be done in a mergesort-like manner with a single simultaneous pass
of the pointers $\mathrm{start}'_i$ and $\mathrm{end}'_i$ through the
corresponding ranges $(\mathrm{start}_i,\mathrm{end}_i)$. Since we assume
that the tuples are of constant size, the time taken by line (1) is linear
in the number of tuples in these ranges $(\mathrm{start}_i,\mathrm{end}_i)$,
for those $i$ such that $R_i\in r(A)$. (For other $R_i$ we keep
$(\mathrm{start}'_i,\mathrm{end}'_i) = (\mathrm{start}_i,\mathrm{end}_i)$.)

\begin{lemma}
\label{lem:line1}
The time taken by line \emph{(1)} of ${\rm \bf gen2}$ when
computing the $\T$-factorisation of $Q(\D)$ is
$O(|Q|\cdot|\D|^{f(T)+1})$.
\end{lemma}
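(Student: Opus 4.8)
The goal is to bound the total time spent in line (1), summed over all recursive calls of $\mathbf{gen2}$. The key observation, already noted before the lemma statement, is that a single execution of the loop containing line (1) at a node $A^*$ costs $O(\sum_{i : R_i \in r(A)} n_i)$, where $n_i$ is the number of tuples in the incoming range $(\mathrm{start}_i, \mathrm{end}_i)$ of relation $R_i$. So the plan is: (i) attribute the cost of each such loop to the relations $R_i \in r(A)$, charging $n_i$ to $R_i$; (ii) fix a relation $R$ and bound the total charge it accumulates across the whole execution; (iii) sum over the (at most $|Q|$) relations.

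For step (ii), I would fix a relation $R$ and look at all the calls $\mathbf{gen2}(A^*(\mathcal{U}), \text{ranges})$ for which $A^* \in r(A)$, i.e.\ $A^*$ is a \emph{relevant} node of $R$. These nodes $A^*$ all lie on $\mathrm{Path}(R)$, and in fact on $\mathrm{Relevant}(R)$. For a node $A^*$ at depth $d$ on $\mathrm{Path}(R)$, the calls of $\mathbf{gen2}$ reaching $A^*$ with a nonempty incoming range for $R$ are in bijection with the distinct combinations of values that the ancestor nodes of $A^*$ can take — more precisely, with the distinct tuples in $\pi_{\text{(ancestors of }A^*\text{)}}$ of the appropriate join — and the incoming range for $R$ in such a call is exactly the set of $R$-tuples consistent with that combination. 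Hence the total charge to $R$ from the loops at the fixed node $A^*$ equals $\sum (\text{size of } R\text{-range})$, summed over all these combinations, which is at most the number of distinct tuples in $\pi_{A^* \text{ and its ancestors}}$ of $\sigma_\phi(R_1 \times \cdots \times R_n)$ joined with $R$ — and this is bounded, via Corollary~\ref{cor:upper}-style reasoning (i.e.\ Lemma~\ref{lem:fcover-upper} applied to the restricted query), by $|\D|^{\rho^*(\cdot)+1}$, the extra $+1$ coming from the fact that we sum the $R$-range sizes rather than count distinct $R$-tuples, equivalently we also range over $R$ itself. Since each such restricted cover number is at most $f(\T)$ by definition of $f(\T) = \max_R \rho^*(Q_R)$ (one has to check the relevant restricted query here is a subquery of some $Q_{R'}$, so its fractional cover number is at most $f(\T)$), the charge to $R$ at node $A^*$ is $O(|\D|^{f(\T)+1})$.

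Then step (iii): there are at most $|Q|$ relations $R$, and for each, at most $|Q|$ relevant nodes $A^*$ (inner nodes are in bijection with equivalence classes of attributes, of which there are at most $|Q|$), so summing gives $O(|Q|^2 \cdot |\D|^{f(\T)+1})$; absorbing one factor of $|Q|$ (or observing more carefully that the per-node bounds telescope down a path so the sum over $A^*$ on $\mathrm{Path}(R)$ is still $O(|\D|^{f(\T)+1})$ up to the $|Q|$ path-length factor being charged only once) yields the stated $O(|Q|\cdot|\D|^{f(\T)+1})$.

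\textbf{Main obstacle.} The delicate point is step (ii): correctly identifying, for a fixed relevant node $A^*$ of $R$, that the sum of incoming $R$-range sizes over all calls reaching $A^*$ is exactly the cardinality of a specific query result — one must argue that distinct calls reaching $A^*$ have disjoint $R$-ranges (because they are distinguished by the values of the ancestor attribute classes, and tuples are sorted so these ranges partition), and that the union of these ranges, tagged with the ancestor values, is precisely the projection of the appropriate join. One then needs the fractional-cover bound to apply to \emph{this} query and to relate its cover number to $f(\T)$; the subtlety is that the relevant query here is a projection onto $A^*$ together with its ancestors and the attributes of $R$, which is essentially $Q_R$ restricted further or $Q$ restricted to a path, so its $\rho^*$ is dominated by $\rho^*(Q_R) \le f(\T)$, but making this containment precise (and accounting for the $+1$) is where the real care is needed.
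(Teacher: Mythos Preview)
Your approach is essentially the paper's: both charge the cost of line~(1) at a node $A^*$ to the tuples of relations in $r(A)$, then use a Fubini-style swap (sum of range sizes $=$ sum over tuples of the number of calls containing that tuple) and bound the latter count by the size of a restricted query via Lemma~\ref{lem:fcover-upper}. The two points you flag as obstacles are resolved in the paper as follows. First, the restricted query: fix a tuple $\tuple{t}$ from some $R\in r(A)$; the number of calls at $A^*$ whose ranges contain $\tuple{t}$ is $\|\,\pi_{NA}(\sigma_{\mathcal{S}(R)=\tuple{t}}\sigma_\phi(R_1\times\cdots\times R_n))(\D)\,\|$ where $NA=\mathrm{Path}(A)\setminus\mathrm{Relevant}(R)$; since $A^*\in\mathrm{Relevant}(R)$ forces $\mathrm{Path}(A)\subseteq\mathrm{Path}(R)$, we get $NA\subseteq NR$, so $Q_{NA}$ is an induced subquery of $Q_R$ and $\rho^*(Q_{NA})\le\rho^*(Q_R)\le f(\T)$ --- no need to look for ``some $Q_{R'}$''. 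Second, the $|Q|$ vs.\ $|Q|^2$: organise by node first rather than by relation first. At a fixed $A^*$, sum over \emph{all} tuples in \emph{all} relations of $r(A)$ --- there are at most $|\D|$ such tuples in total --- each contributing at most $|\D|^{f(\T)}$, giving $|\D|^{f(\T)+1}$ per node; then sum over the $\le|Q|$ nodes.
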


The time taken by the remaining lines is linear in the output
size.

\begin{lemma}
\label{lem:line2}
For any f-tree $\mathcal{T}$ of a query $Q$ and any database $\mathbf{D}$, the $\T$-factorisation of $Q(\D)$ has size at most $|\D|^{f(\T)+1}$.
\end{lemma}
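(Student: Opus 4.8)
The plan is to obtain the size bound by a relation-by-relation aggregation of the per-identifier occurrence bound already proved in Corollary~\ref{cor:upper}. Recall that, by definition, the size of $\Phi(\T)$ is exactly the total number of occurrences of identifiers in it.

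First I would observe that identifiers appear in $\Phi(\T)$ only at the leaves of the recursion of Figure~\ref{fig:algorithm1}: the only clause that introduces identifiers is the leaf clause $\llbracket\mbox{leaf }R\rrbracket(\gamma)=\sum_{t_j\in\sigma_\gamma(\mathbf{R})} id_j\tuple{\pi_{\mbox{head}(Q)}(t_j)}$, and it contributes only identifiers of tuples of $\mathbf{R}$. Consequently every identifier occurrence in $\Phi(\T)$ is attributed to exactly one relation of $Q$, and the size of $\Phi(\T)$ equals $\sum_{i=1}^n\sum_{r\in\mathcal{I}(\mathbf{R}_i)}\mathrm{occ}(r)$, where $\mathrm{occ}(r)$ denotes the number of occurrences of $r$ in $\Phi(\T)$.

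Next I apply Corollary~\ref{cor:upper}: for any relation $R_i$ and any identifier $r$ of $\mathbf{R}_i$ we have $\mathrm{occ}(r)\le|\D|^{\rho^*(Q_{R_i})}$. Since $f(\T)=\max_R\rho^*(Q_R)\ge\rho^*(Q_{R_i})$ and $|\D|\ge 1$ (the claim being trivial otherwise), this gives $\mathrm{occ}(r)\le|\D|^{f(\T)}$. Hence the occurrences attributed to $R_i$ number at most $|\mathbf{R}_i|\cdot|\D|^{f(\T)}$. Summing over the relations of $Q$ and using that the total number of tuples over all relation instances of $\D$ is $\sum_{i=1}^n|\mathbf{R}_i|=|\D|$, I conclude
\[
\mathrm{size}(\Phi(\T))\;\le\;|\D|^{f(\T)}\sum_{i=1}^n|\mathbf{R}_i|\;=\;|\D|^{f(\T)+1},
\]
as required.

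There is no serious obstacle here: the lemma is essentially a summation of Corollary~\ref{cor:upper} over all tuples of $\D$. The only points requiring a little care are bookkeeping ones — verifying that each identifier occurrence is charged to a unique relation (immediate from the recursion), and noting that the relevant identity $\sum_i|\mathbf{R}_i|=|\D|$ holds because tuples are counted per relation symbol of $Q$, so that — unlike the readability bound — no extra factor $M$ is incurred in the size bound.
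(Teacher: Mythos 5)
Your proof is correct and follows essentially the same route as the paper's: apply Corollary~\ref{cor:upper} to bound each identifier's occurrence count by $|\D|^{f(\T)}$, then multiply by the number of identifiers, which is at most $|\D|$. Your version merely makes the per-relation bookkeeping explicit, which the paper compresses into one sentence.
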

\begin{proof}
By Corollary~\ref{cor:upper}, for any relation $R$, each identifier $r$ of a tuple from $R$ occurs at most $|\mathbf{D}|^{\rho^*(Q_R)} \leq |\mathbf{D}|^{f(\T)}$ times in the $\T$-factorisation of $Q(\D)$. There are at most $|\D|$ different identifiers in the $\T$-factorisation, so the total number of (occurrences of) identifiers is at most $|\D|^{f(Q)+1}$.
\end{proof}

Additionally, we need to sort the relations of $\D$ in the correct order before executing $\textbf{gen2}$, which takes time $O(|\D|\log|\D|)$. Putting this together with Lemma~\ref{lem:line1} and Lemma~\ref{lem:line2}, we obtain a bound on the total running time of our factorisation algorithm $\mathbf{gen2}$.

\begin{theorem}
\label{thm:falgo-complexity}
For any f-tree $\T$ of a query $Q$ and any database $\D$, the algorithm ${\rm
\bf gen2}$ computes the $\T$-factorisation of $Q(\D)$ in
time $O(|Q|\cdot|\D|\log|\D|)$ for hierarchical queries and $f(\T) = 0$, and
 $O(|Q|\cdot|\D|^{f(\T)+1})$ otherwise.
\end{theorem}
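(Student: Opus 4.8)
The plan is to decompose the running time of $\mathbf{gen2}$ into three independent contributions and to bound each one using the lemmas already established: (i) the one-off preprocessing that orders the attributes and sorts the tuples of every relation of $\D$; (ii) the cumulative cost of line (1) over all recursive invocations; and (iii) the cumulative cost of all the remaining lines over all recursive invocations. The theorem then follows by summing the three bounds and simplifying separately in the two regimes $f(\T)=0$ and $f(\T)\geq 1$.

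For (i) I would observe that sorting each relation $\mathbf{R}_i$ lexicographically by its attributes, taken in the top-down order in which they occur in $\T$, costs $O(|\mathbf{R}_i|\log|\mathbf{R}_i|)$ comparisons of constant-size tuples, so the whole preprocessing is $O(|\D|\log|\D|)$. Contribution (ii) is exactly Lemma~\ref{lem:line1}, giving $O(|Q|\cdot|\D|^{f(\T)+1})$. For (iii), I would invoke the observation made just before Lemma~\ref{lem:line1} that every line other than line (1) does work proportional to the fragment of the f-representation it outputs; combining this with Lemma~\ref{lem:line2}, which bounds the size of the whole $\T$-factorisation by $|\D|^{f(\T)+1}$, and with the fact that the number of recursive calls is within a factor of the depth $O(|Q|)$ of the number of nonempty leaf-level fragments, yields $O(|Q|\cdot|\D|^{f(\T)+1})$ for (iii) as well.

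Finally I would add the three bounds. When $f(\T)=0$ --- which by Proposition~\ref{prop:ftree-hierarchical} is exactly the case where $Q$ is hierarchical and $\T$ witnesses this --- contributions (ii) and (iii) become $O(|Q|\cdot|\D|)$ and (i) is $O(|\D|\log|\D|)$, whose sum is $O(|Q|\cdot|\D|\log|\D|)$. Otherwise $f(\T)\geq 1$ (recall that $\rho^*$ of any equi-join query with at least one attribute is at least $1$, since every attribute contributes a covering constraint summing to at least $1$, so $f(\T)$ lies in $\{0\}\cup[1,\infty)$ and is never strictly between $0$ and $1$), and then $|\D|^{f(\T)+1}$ absorbs both $|\D|\log|\D|$ and any lower-order terms, leaving the total $O(|Q|\cdot|\D|^{f(\T)+1})$.

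The step I expect to be the main obstacle is controlling contribution (iii), and in particular verifying that the mergesort-style search in line (1) is genuinely output-sensitive: one must check that a recursive call is spawned only for values $a$ that occur in \emph{every} relation relevant to the current node within the current ranges, so that every call that descends to a leaf contributes at least one identifier to the output, and hence the number of recursive calls --- and not merely the time of line (1) in isolation --- is bounded in terms of $|\D|^{f(\T)+1}$ rather than in terms of the (possibly much larger) products of attribute-domain sizes.
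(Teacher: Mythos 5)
Your proposal matches the paper's own argument: the paper likewise decomposes the cost into the $O(|\D|\log|\D|)$ preprocessing sort, the total cost of line (1) bounded by Lemma~\ref{lem:line1}, and the cost of the remaining lines, which is linear in the output size and hence bounded via Lemma~\ref{lem:line2}. Your additional observations --- that $f(\T)$ is either $0$ or at least $1$ (so the polynomial term absorbs the sorting cost in the non-hierarchical case), and that line (1) recurses only on values present in every relation relevant to the current node --- are correct refinements of details the paper leaves implicit.
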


There is a close parallel between our results and the results of~\cite{AGM08,GM06}. They show that for a fixed query $Q$, the flat representation of $Q(\D)$ has size $O(|\D|^{\rho^*(Q)})$ and can be computed in time $O(|\D|^{\rho^*(Q)+1})$ for any database $\D$, while in Lemma~\ref{lem:line2} and Theorem~\ref{thm:falgo-complexity} we show that by allowing factorised representations, we can find one of size $O(|\D|^{f(Q)})$ in time $O(|\D|^{f(Q)+1})$.

The improvement in the exponent from $\rho^*(Q)$ to $f(Q)$ is quantified by passing from a fractional edge cover of the whole query to the fractional edge covers of the individual non-relevant parts for each relation in an optimal f-tree of $Q$. If $Q$ admits an f-tree with high degree of branching, this improvement can be substantial. There are queries for which $\rho^*(Q) = \Theta(|Q|)$ while $f(Q) = O(1)$, the simplest example being a product query of $n$ relations. For such cases, the savings in both size of the representation and the time needed to compute it are exponential in $|Q|$.

\section{Equalities with Constants}
\label{sec:constants}

We can extend our results to select-project-join queries whose selections contain equalities with constants. In the following, we call such queries simply \emph{queries with constants}.

Consider any query $Q = \pi_{\bar{A}} ( \sigma_\phi (R_1 \times \dots \times R_n))$ where $\phi$ contains equalities with constants, and without loss of generality assume that $\phi$ is satisfiable. Denote by $\C$ the set of all attributes of $Q$ which are equated in $\phi$ to constants, either directly or transitively. Let $\phi_\C$ be the conjunction of equalities from $\phi$ which involve attributes from $\C$ and let $\phi'$ be the conjunction of equalities from $\phi$ which do not involve attributes from $\C$. Then $\phi = \phi' \wedge \phi_\C$, and hence $Q = \pi_{\bar{A}} ( \sigma_{\phi'} \sigma_{\phi_\C}(R_1 \times \dots \times R_n))$.

Define the query $Q' = \pi_{\bar{A}} ( \sigma_{\phi'} (R_1 \times \dots \times R_n))$. Then for any database $\D$, we have $Q(\D) = Q'(\sigma_{\phi_{\C}}(\D))$. Since $Q'$ is now a select-project-join query without constants, this enables us to describe the factorisation properties of $Q(\D)$ using our existing results.

Let us first extend our main definitions to queries with constants.

\begin{definition}
For any query $Q$ with constants, $\T$ is called an f-tree for $Q$ if it is an f-tree for $Q'$.\punto
\end{definition}

\begin{definition}
The $\T$-factorisation of $Q(\D)$ is defined to be the $\T$-factorisa\-tion of $Q'(\sigma_{\phi_{\C}}(\D))$.\punto
\end{definition}

It follows immediately from $Q'(\sigma_{\phi_{\C}}(\D)) = Q(\D)$ that this definition is sound, i.e.\ that the $\T$-factorisation of $Q(\D)$ is indeed an f-representation of $Q(\D)$. Just as for queries without constants, we can now define $f(Q)$ to be the minimum $f(\T)$ over all f-trees $\T$ for $Q$. Equivalently, $f(Q) = f(Q')$.

\begin{corollary}[Extends Corollary~\ref{cor:upperbound}]
\label{cor:const-upper}
For any query $Q$ with constants and any database $\D$, the readability of $Q(\D)$ is at most $M\cdot |\D|^{f(Q)}$.
\end{corollary}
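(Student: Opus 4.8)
The plan is to reduce the statement to the constant-free case via the identity established just above the corollary, namely $Q(\D) = Q'(\sigma_{\phi_\C}(\D))$, and then invoke Corollary~\ref{cor:upperbound} for the constant-free query $Q'$.

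First I would set $\D' = \sigma_{\phi_\C}(\D)$ and observe that $\D'$ is obtained from $\D$ by deleting some tuples while keeping the identifiers of the surviving ones; in particular $|\D'| \le |\D|$, and the signature set one associates with the $\T$-factorisation of $Q(\D)$ (which, by definition, is the $\T$-factorisation of $Q'(\D')$) is exactly the signature set one gets for $Q'$ on the database $\D'$. Hence any f-representation of $Q'(\D')$ is verbatim an f-representation of $Q(\D)$ over the same set of identifiers, so the readability of $Q(\D)$ equals the readability of $Q'(\D')$.

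Next I would apply Corollary~\ref{cor:upperbound} to the query $Q'$ and the database $\D'$: the readability of $Q'(\D')$ is at most $M' \cdot |\D'|^{f(Q')}$, where $M'$ is the maximal number of relation symbols of $Q'$ mapping to the same name under $\mu$. Since $Q'$ has the same relation symbols and the same mapping $\mu$ as $Q$, we have $M' = M$; and by the definition of $f(Q)$ for queries with constants, $f(Q') = f(Q)$. Finally, since $f(Q) \ge 0$ and $|\D'| \le |\D|$, we get $|\D'|^{f(Q)} \le |\D|^{f(Q)}$, and chaining the inequalities gives that the readability of $Q(\D)$ is at most $M \cdot |\D|^{f(Q)}$.

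I do not anticipate a genuine obstacle here: the only point requiring care is the bookkeeping around identifiers and signatures, i.e.\ checking that selecting on constants neither renames nor merges tuple identifiers, so that readability transfers unchanged between $Q'(\D')$ and $Q(\D)$, and that the parameters $M$ and $f$ are literally inherited from $Q'$. Everything else is the one-line monotonicity observation $|\sigma_{\phi_\C}(\D)| \le |\D|$.
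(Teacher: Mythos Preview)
Your proposal is correct and follows essentially the same argument as the paper: reduce to $Q'(\sigma_{\phi_\C}(\D))$, apply Corollary~\ref{cor:upperbound} there, use $M'=M$ and $f(Q')=f(Q)$, and finish with $|\sigma_{\phi_\C}(\D)|\le |\D|$ together with $f(Q)\ge 0$. The extra care you take with identifiers and signatures is fine but not strictly needed, since $Q(\D)=Q'(\sigma_{\phi_\C}(\D))$ already makes the readabilities equal by definition.
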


\begin{corollary}[Extends Corollary~\ref{cor:lowerbound}]
\label{cor:const-lower}
Let $Q$ be a query with constants. For any f-tree $\T$ of $Q$ there exist arbitrarily large databases $\D$ for which the $\T$-factorisation $Q(\D)$ is at least read-$(|\D|/|Q|)^{f(Q)}$.
\end{corollary}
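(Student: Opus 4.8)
### Proof proposal for Corollary~\ref{cor:const-lower}

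The plan is to reduce the statement for queries with constants directly to the already-established lower bound for queries without constants, namely Corollary~\ref{cor:lowerbound}. Recall the setup from the section: given $Q = \pi_{\bar{A}}(\sigma_\phi(R_1\times\dots\times R_n))$ with $\phi = \phi'\wedge\phi_\C$, we defined $Q' = \pi_{\bar{A}}(\sigma_{\phi'}(R_1\times\dots\times R_n))$, and an f-tree $\T$ for $Q$ is by definition an f-tree for $Q'$, with the $\T$-factorisation of $Q(\D)$ defined to be the $\T$-factorisation of $Q'(\sigma_{\phi_\C}(\D))$, and $f(Q) = f(Q')$. So fix an f-tree $\T$ of $Q$; it is an f-tree of $Q'$.

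First I would invoke Corollary~\ref{cor:lowerbound} applied to the constant-free query $Q'$ and the f-tree $\T$: this yields arbitrarily large databases $\D'$ for which the $\T$-factorisation of $Q'(\D')$ is at least read-$(|\D'|/|Q'|)^{f(Q')}$. The task is then to realise each such $\D'$ as $\sigma_{\phi_\C}(\D)$ for a database $\D$ of the same (asymptotic) size, so that the $\T$-factorisation of $Q(\D)$ — which by definition equals the $\T$-factorisation of $Q'(\sigma_{\phi_\C}(\D)) = Q'(\D')$ — inherits the read-$(|\D'|/|Q'|)^{f(Q')}$ bound. The natural construction: the databases produced in the proof of Lemma~\ref{lem:fcover-lower} (and hence in Corollary~\ref{cor:lowerbound}) assign values to a chosen set of attribute classes from a fresh domain; I would simply take $\D = \D'$ but relabel so that every attribute in $\C$ holds exactly the constant value it is equated to in $\phi_\C$ — i.e.\ stuff the constant-fixed columns with the prescribed constants while leaving the other columns as in $\D'$. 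Since $\phi_\C$ is satisfiable (assumed w.l.o.g.), this is consistent; and $\sigma_{\phi_\C}$ applied to this $\D$ leaves every tuple intact, so $\sigma_{\phi_\C}(\D) = \D$, which plays the role of $\D'$. One has to check that the lower-bound construction of Corollary~\ref{cor:lowerbound} can be carried out without using the $\C$-columns as its "independent-set" attributes — but the attributes in $\C$ are constant across the whole relation, so they contribute nothing to $\rho^*(Q'_R)$ for any $R$ and are never part of an optimal fractional independent set, so the construction is unaffected. Also $|\D| = |\D'|$ and, since $Q$ and $Q'$ have the same relations and attributes, $|Q| = |Q'|$, so $(|\D'|/|Q'|)^{f(Q')} = (|\D|/|Q|)^{f(Q)}$, which is exactly the claimed bound.

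The main obstacle is the bookkeeping in the previous paragraph: making precise that the databases constructed for $Q'$ in Corollary~\ref{cor:lowerbound} can be chosen so that the constant-fixed attributes are genuinely free to be pinned to their constants without shrinking the query result. This amounts to observing that fixing an attribute class to a single value removes it from consideration in all the fractional cover / independent set linear programs underlying $f(\T)$, hence $f(\T)$ is computed entirely from attribute classes outside $\C$; but this is already implicit in the passage from $Q$ to $Q'$ (dropping $\phi_\C$ is what turns $Q$ into the constant-free $Q'$), so no new argument is needed beyond citing the definitions of $\T$-factorisation and $f(Q)$ for queries with constants together with Corollary~\ref{cor:lowerbound}. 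Everything else — the identity $\sigma_{\phi_\C}(\D) = \D$ for the constructed $\D$, and the equalities $|\D| = |\D'|$, $|Q| = |Q'|$, $f(Q) = f(Q')$ — is routine.
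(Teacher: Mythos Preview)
Your overall strategy matches the paper's: reduce to $Q'$ via $Q(\D) = Q'(\sigma_{\phi_\C}(\D))$, invoke the constant-free lower bound (Corollary~\ref{cor:lowerbound}), and arrange that the constructed database already satisfies $\phi_\C$ by pinning the $\C$-columns to their prescribed constants.

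The gap is in your justification that the pinning is always possible. You assert that the attributes in $\C$ ``contribute nothing to $\rho^*(Q'_R)$ for any $R$ and are never part of an optimal fractional independent set.'' This is false for a general f-tree $\T$. Take $Q' = \sigma_{A_R = A_S}(R\times S)$ with $R$ over $\{A_R,B\}$, $S$ over $\{A_S\}$, and $B \in \C$, and let $\T$ be the f-tree with root $\{B\}$, child $\{A_R,A_S\}$, and both leaves below. Then $\mbox{Non-relevant}(S) = \{\{B\}\}$, the query $Q_S$ has the single attribute class $\{B\}$ covered only by $R$, so $\rho^*(Q_S) = 1$ and the unique optimal independent-set solution is $y_{\{B\}} = 1$. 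The Lemma~\ref{lem:lower} construction for this $\T$ and relation $S$ therefore gives $B$ a domain of size $N$, which cannot be relabelled to a single constant.

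The paper avoids this by first observing that every attribute in $\C$ forms a singleton equivalence class in $Q'$ (since $\phi'$ contains no equality involving it) and is therefore relevant to exactly one relation; hence in any f-tree one can push each such node down to sit immediately above its relation's leaf without increasing $f(\T)$. After this push-down, no $\C$-attribute lies in any $\mbox{Non-relevant}(R)$, so the Lemma~\ref{lem:lower} construction automatically gives every $\C$-attribute a domain of size one, and the relabelling to constants goes through. Your argument needs this push-down step (or the equivalent observation that for the given $\T$ there is always a feasible independent-set solution of cost at least $f(Q')$ supported only on non-$\C$ attribute classes); without it the proof does not close.
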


The dichotomy between non-repeating queries of bounded and unbounded readability extends to queries with constants with only a slight change.

\begin{corollary}[Extends Theorem~\ref{th:unbounded}]
\label{cor:const-hierarchical}
Let $Q$ be a non-repeating query with constants. If $Q'$ is
hierarchical, then the readability of $Q(\D)$ is 1 for any database
$\D$. If $Q'$ is non-hierarchical, then there exist arbitrarily large
databases $\D$ such that the readability of $Q(\D)$ is
$\Omega(\sqrt{|\D|})$.
\end{corollary}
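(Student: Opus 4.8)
The plan is to reduce everything to Theorem~\ref{th:unbounded}, which already handles non-repeating queries without constants, by exploiting the identity $Q(\D) = Q'(\sigma_{\phi_\C}(\D))$ established just before the corollary. The only subtlety is that $Q'$, although non-repeating whenever $Q$ is (the relation set is unchanged and $\mu$ is inherited), need not be hierarchical exactly when $Q$ is; indeed the hypothesis of the corollary is stated in terms of $Q'$, not $Q$, precisely to absorb this subtlety. So the first step is to observe that, given a database $\D$ for $Q$, the database $\D^\sharp := \sigma_{\phi_\C}(\D)$ is a legitimate database for $Q'$ with $|\D^\sharp| \le |\D|$, and that $Q(\D)$ and $Q'(\D^\sharp)$ are the same annotated relation — hence they have the same readability.

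For the hierarchical case, I would argue as follows. If $Q'$ is hierarchical and non-repeating, Theorem~\ref{th:unbounded} gives readability $1$ for $Q'(\D^\sharp)$ for every database $\D^\sharp$; applying this to $\D^\sharp = \sigma_{\phi_\C}(\D)$ and using $Q(\D) = Q'(\sigma_{\phi_\C}(\D))$ yields readability $1$ for $Q(\D)$, for every $\D$. This direction is essentially immediate.

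For the non-hierarchical case, the direction is: $Q'$ non-hierarchical (and non-repeating) $\Rightarrow$ there are arbitrarily large $\D$ with readability of $Q(\D)$ being $\Omega(\sqrt{|\D|})$. By Theorem~\ref{th:unbounded} there are arbitrarily large databases $\D'$ for $Q'$ with the readability of $Q'(\D')$ being $\Omega(\sqrt{|\D'|})$. The obstacle — the one genuinely non-trivial step — is that such a witness database $\D'$ for $Q'$ need not be in the image of $\sigma_{\phi_\C}$, i.e.\ it need not arise as $\sigma_{\phi_\C}(\D)$ for some $\D$, because the tuples of $\D'$ need not already satisfy the constant equalities $\phi_\C$. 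The fix is to check that one may take the witness $\D'$ to already satisfy $\phi_\C$: the databases constructed in the lower-bound arguments (via Lemma~\ref{lem:fcover-lower} and Lemma~\ref{lem:lower}) are built by choosing domains for attribute classes freely, so one can simply force every attribute in $\C$ to take its mandated constant value without affecting the sizes or the combinatorics of the construction — the fractional-cover/independent-set analysis only ever refers to attribute classes \emph{not} equated to constants, which are exactly the classes of $Q'$ surviving in the relevant sub-queries. Once $\D'$ satisfies $\phi_\C$, we have $\sigma_{\phi_\C}(\D') = \D'$, so taking $\D = \D'$ gives $Q(\D) = Q'(\D')$ with readability $\Omega(\sqrt{|\D|})$, as required. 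I would phrase this either by re-running the proof of Theorem~\ref{th:unbounded} with the extra constraint, or — cleaner — by quoting the construction in that proof and remarking that restricting the chosen domains to singletons on $\C$ is harmless; the latter keeps the corollary's proof short and deferred to the appendix in the same spirit as the rest of the paper.
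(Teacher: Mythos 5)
Your proposal follows essentially the same route as the paper: both directions are reduced to Theorem~\ref{th:unbounded} via the identity $Q(\D)=Q'(\sigma_{\phi_\C}(\D))$, and the lower bound is obtained by arranging the witness database to already satisfy $\phi_\C$ so that $\sigma_{\phi_\C}(\D)=\D$. The one detail to pin down more precisely is the paper's observation that the two attribute classes $A^*$ and $B^*$ witnessing non-hierarchicality of $Q'$ cannot lie in $\C$ (each attribute of $\C$ is relevant to only one relation in $Q'$, while $A^*$ and $B^*$ must each be relevant to at least two), which is what licenses giving every $\C$-attribute a singleton domain renamed to its mandated constant; note also that the construction in Theorem~\ref{th:unbounded} is the direct one via Lemma~\ref{lemma:readability-rst}, not the fractional-cover lemmas you cite.
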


Since the f-trees for $Q$ are the same as for $Q'$, to enumerate the f-trees for $Q$ and to find an optimal one and hence $f(Q)$, it suffices to compute $Q'$ from $Q$ and use the existing algorithms from Section~\ref{sec:algorithm}.

Finally, to compute the $\T$-factorisation of $Q(\D)$, it is sufficient to compute $Q'$ from $Q$ and $\sigma_{\phi_\C}(\D)$ from $\D$, and then to use existing algorithms from Section~\ref{sec:falgorithm} to compute the $\T$-factorisation of $Q'(\sigma_{\phi_\C}(\D))$. Computing $Q'$ takes time $O(|Q|^2)$ and computing $\sigma_{\phi_\C}(\D)$ takes time $O(|Q|\cdot |D|)$.

\begin{corollary}[Extends Theorem~\ref{thm:falgo-complexity}]
\label{cor:const-falgo}
For any f-tree $\T$ of a query $Q$ with constants and any database $\D$, we can compute the $\T$-factorisation of $Q(\D)$ in time $O(|Q|\cdot|\D|\log|\D|+|Q|^2)$ for hierarchical queries and $f(\T) = 0$, and \mbox{$O(|Q|\cdot|\D|^{f(\T)+1}+|Q|^2)$} otherwise.
\end{corollary}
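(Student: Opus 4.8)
The plan is to reduce the complexity analysis for queries with constants directly to the already-established complexity result for constant-free queries, namely Theorem~\ref{thm:falgo-complexity}, by bounding the cost of the preprocessing that transforms $(Q,\D)$ into $(Q',\sigma_{\phi_\C}(\D))$. Concretely, I would argue as follows. First, observe that by the definitions preceding the corollary, the $\T$-factorisation of $Q(\D)$ is \emph{defined} to be the $\T$-factorisation of $Q'(\sigma_{\phi_\C}(\D))$, so the algorithm is: (i) compute $Q'$ from $Q$; (ii) compute the filtered database $\D' = \sigma_{\phi_\C}(\D)$; (iii) run $\mathbf{gen2}$ on $(Q',\D')$ as in Section~\ref{sec:falgorithm}.

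For step (i), computing $Q'$ amounts to identifying the attribute set $\C$ of attributes equated (directly or transitively) to constants in $\phi$, then discarding from $\phi$ all equalities involving an attribute of $\C$. Computing the transitive closure of the equality relation on the at most $|Q|$ attributes, and scanning the at most $|Q|$ conjuncts of $\phi$, can be done in $O(|Q|^2)$ time — this is the bound already quoted in the paragraph preceding the corollary. For step (ii), each tuple of each relation instance in $\D$ is checked against the constant-equality conditions $\phi_\C$; there are $O(|Q|)$ such conditions and $|\D|$ tuples, so this takes $O(|Q|\cdot|\D|)$ time, and the resulting database $\D'$ satisfies $|\D'| \le |\D|$. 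For step (iii), Theorem~\ref{thm:falgo-complexity} applied to $Q'$ and $\D'$ gives running time $O(|Q'|\cdot|\D'|\log|\D'|)$ when $Q'$ is hierarchical and $f(\T)=0$, and $O(|Q'|\cdot|\D'|^{f(\T)+1})$ otherwise; since $|Q'|\le|Q|$ and $|\D'|\le|\D|$, and since $f(Q)=f(Q')$ so that $f(\T)$ is the same parameter in both settings, these are bounded by $O(|Q|\cdot|\D|\log|\D|)$ and $O(|Q|\cdot|\D|^{f(\T)+1})$ respectively.

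Summing the three contributions gives total time $O(|Q|\cdot|\D|\log|\D| + |Q|^2)$ in the hierarchical/$f(\T)=0$ case (the $O(|Q|\cdot|\D|)$ term of step (ii) being absorbed) and $O(|Q|\cdot|\D|^{f(\T)+1} + |Q|^2)$ otherwise, which is exactly the claimed bound. I would also note that the soundness remark already made in the text — that $Q'(\sigma_{\phi_\C}(\D)) = Q(\D)$ — guarantees that what $\mathbf{gen2}$ produces is genuinely an f-representation of $Q(\D)$, so no further correctness argument is needed.

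The main thing to be careful about, rather than a deep obstacle, is making sure the preprocessing bounds are stated in a model consistent with the earlier complexity claims: the $O(|Q|^2)$ for computing $Q'$ assumes constant-size handling of attribute names and the $O(|Q|\cdot|\D|)$ for filtering assumes constant-size tuples, matching the standing assumption in the Remark in the Preliminaries. One should also confirm that the f-trees for $Q$ are literally the f-trees for $Q'$ (true by definition here), so that ``$f(\T)$'' in the statement is unambiguous; granting that, the proof is a short bookkeeping argument with no genuinely hard step.
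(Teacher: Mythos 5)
Your proposal is correct and follows essentially the same route as the paper: the text preceding the corollary reduces the problem to running $\mathbf{gen2}$ on $(Q',\sigma_{\phi_\C}(\D))$ and notes the $O(|Q|^2)$ and $O(|Q|\cdot|\D|)$ preprocessing costs, which together with Theorem~\ref{thm:falgo-complexity} and the observations $|Q'|\le|Q|$, $|\sigma_{\phi_\C}(\D)|\le|\D|$ give exactly the claimed bounds. Your write-up merely fills in the bookkeeping the paper leaves implicit.
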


\section{Conclusion}
\label{sec:conclusion}

This work is the start of a research agenda on a new kind of
representation systems and query evaluation techniques, where the
logical model is that of relational databases yet the actual physical
model is that of factorised representations. As a necessary first
step, this paper classifies select-project-join queries based on their
worst-case result size as factorised representations. We consider bag
semantics for query evaluation here.  We plan to further study the
problems of query evaluation on factorised representations, designing
a factorisation-aware storage manager, as well as approximations of
queries with non-polynomial readability by lower and upper bound
queries with polynomial readability. We also plan to develop a
visualisation approach of query results based on factorised
representations.

%%%%%%%%%% BIBLIOGRAPHY

\bibliographystyle{alpha}
\bibliography{report}

\newpage
\appendix
\setcounter{section}{0}
\setcounter{subsection}{0}
\def\thesection{\Alph{section}}
\section{Deferred Proofs}

%%%%%%%%%%%%%%%%%%%%%%%%%%%%%%%%%%%%%%%%%%%%%%%%%%%%%%
\section*{Proofs from Section~\ref{sec:factorised-representations}}

%%%%%%%%%%%%%%%%%%%%%%%%%%%%%%%%%%%%%%%%%%%%%%%%%%%%%%
\subsection*{Proof of Proposition~\ref{prop:tuple-enumeration}}

We show that the tuples of an f-representation $\Phi$ can be
enumerated with $O(|\Phi|\log |\Phi|)$ delay and space.

Each tuple represented by the f-representation $\Phi$ corresponds to a monomial of the polynomial of $\Phi$, and each such monomial consists of the identifiers reached by recursively choosing one summand at each sum and all factors at each product.

We can use pointers to keep track of the choice of summand at each sum. In general, we may have $O(|\Phi|)$ sums, and need $O(\log|\Phi|)$ space per pointer. Any choice of pointers corresponds to a monomial of $\Phi$ obtained by recursively exploring $\Phi$, following the chosen summands and multiplying together all the reached identifiers. This can be done in time $O(|\Phi|\log|\Phi|)$ by a simple depth-first search. Not all sums are reached by this process, since some of them lie inside other summands which were not chosen. Call such sums \emph{disabled}, and call the reachable sums \emph{enabled}.

Initially, the pointer at each sum is set to the first summand of the sum. This choice of pointers defines the first monomial. Consider any order $\pi$ of the sums that is consistent with their nesting in $\Phi$, i.e.\ such that outer sums appear earlier in $\pi$ than inner sums. To advance to the next monomial, we \emph{advance} the pointer of the last enabled sum in $\pi$. Advancing the pointer of a sum $S$ consists of updating it to point to the next summand of $S$. In case it already points to the last summand, we update it back to the first summand, and recursively advance the last enabled sum preceding $S$ in $\pi$. If $S$ is already the first enabled sum, we terminate.

Updating a pointer of a sum potentially disables and enables other sums, but only sums appearing later in $\pi$. The above process proceeds backwards in $\pi$, traversing the enabled sums until finding one which is not pointing at its last summand. Therefore, when advancing to the next monomial, we can first find the enabled sums using a depth-first search in time $O(|\Phi|\log|\Phi|)$, and then use this information when advancing the pointers in time $O(|\Phi|)$. Finally, finding the next monomial using the updated pointers takes time $O(|\Phi|\log|\Phi|)$. The total delay between outputting two monomials is thus $O(|\Phi|\log|\Phi|)$.

%%%%%%%%%%%%%%%%%%%%%%%%%%%%%%%%%%%%%%%%%%%%%%%%%%%%%%
\subsection*{Proof of Lemma~\ref{lemma:readability-rst}}

 We show that the polynomial $p_N = \sum_{i,j = 1}^N r_i s_{ij} t_j$
 has readability $\frac{N}{2} + O(1)$.

  We first show that $p_N$ has readability at least $\frac{N}{2}$. Let
  $\psi$ be any polynomial equivalent to $p_N$ and consider its parse
  tree, where adjacent sum nodes are aggregated into a single node. If
  we expand $\psi$ by distributivity of product over sum, we must
  obtain the expression $p_N = \sum_{i,j = 1}^N r_i s_{ij} t_j$.
  Therefore, there must be exactly one occurrence of $s_{ij}$ in the
  parse tree, and it can have at most two multiplications on its path
  to the root. If there are two multiplications, $\psi$ is of the form
  \begin{align*}
    ((s_{ij} + \psi_1)(r_i + \psi_2) + \psi_3)(t_j + \psi_4) + \psi_5& \qquad\textrm{or} \\
    ((s_{ij} + \psi_1)(t_j + \psi_2) + \psi_3)(r_i + \psi_4) + \psi_5&,
  \end{align*}
  but then necessarily, $\psi_1$, $\psi_2$ and $\psi_4$ are empty
  (because if any two of $r_i$, $s_{ij}$, or $t_j$ appear in a monomial
  in the result, the monomial must be $r_is_{ij}t_j$). Similarly, if there
  is one multiplication, $\psi$ is of the form
  \[(s_{ij} + \psi_1)((r_i + \psi_2)(t_j + \psi_3) + \psi_4) +
  \psi_5,\] but then necessarily all of $\psi_1$, $\psi_2$, $\psi_3$
  and $\psi_4$ are empty. In any case, $s_{ij}$ appears in one of the
  forms
  \[(s_{ij}r_i + \dots)t_j + \dots \qquad\textrm{or}\qquad (s_{ij}t_j
  + \dots)r_i + \dots.\] Therefore, each $s_{ij}$ appears directly in
  a binary product with a $r$- or $t$-identifier. Since there are
  $N^2$ of the $s$-identifiers, and $2N$ different $r$- and
  $t$-identifiers, at least one of the latter occurs at least
  $\frac{N}{2}$ times in the expression~$\psi$.

  To complete the proof, it is enough to exhibit a read-$(\frac{N}{2}
  + O(1))$ factorisation of $p_N$. Defining $a_N$ and $b_N$ as
  \begin{align*}
    a_N &= \textstyle\sum_{i=1}^N \sum_{j=0}^{\lfloor N/2 \rfloor - 1} r_i s_{i(i+j)} t_{i+j} \\
    &= \textstyle\sum_{i=1}^N r_i \left( \sum_{j=0}^{\lfloor N/2 \rfloor - 1} s_{i(i+j)} t_{i+j} \right), \tag{A} \\
    b_N &= \textstyle\sum_{i=1}^N \sum_{j=\lfloor N/2 \rfloor}^{N-1} r_i s_{i(i+j)} t_{i+j} \\
    &= \textstyle\sum_{i=1}^N \sum_{j=\lfloor N/2 \rfloor}^{N-1} r_{i-j} s_{(i-j)i} t_{i} \\
    &= \textstyle\sum_{i=1}^N \left( \sum_{j=\lfloor N/2 \rfloor}^{N-1} r_{i-j} s_{(i-j)i} \right) t_{i}, \tag{B}
  \end{align*}
  where all indices are considered modulo $N$, we get $p_N = a_N + b_N$. Each $s_{ij}$ occurs once either in expression A or expression B. Each $r_i$ occurs once in A and $\lceil \frac{N}{2} \rceil$ times in B, and each $t_j$ occurs $\lfloor \frac{N}{2} \rfloor$ times in A and once in B. Thus, writing $p_N$ as the sum of expressions A and B, we get a read-$(\lceil \frac{N}{2}\rceil + 1)$ factorisation.

%%%%%%%%%%%%%%%%%%%%%%%%%%%%%%%%%%%%%%%%%%%%%
\subsection*{Proof of Theorem~\ref{th:crown}}

We show that the readability of the polynomial $q_N = \sum_{i,j=1; i\neq j}^ N a_ib_j$ is\\ $\Omega(\frac{\log{N}}{\log\log{N}})$ but $O(\log{N})$.

We first prove the lower bound. Any factorisation of the polynomial $q_N = \sum_{i,j=1; i\neq j}^ N a_ib_j$ is of the form $\sum_i A_i B_i$, where each $A_i$ is a sum of $a$-variables and each $B_i$ a sum of $b$-variables. Represent each monomial $a_ib_j$ as an edge in the bipartite graph $K_{N,N}$ and each sum of such monomials as a union of the corresponding edges. Each product $A_i B_i$ then represents a complete bipartite subgraph (called a \emph{biclique}) of $K_{N,N}$, and the factorisation $\sum_i A_i B_i$ represents an edge-disjoint union of such bicliques. 

If $\sum_i A_i B_i = q_N$, this union must be equal to the graph represented by $q_N$, that is, the crown graph $G_N = \{(a_i,b_j) \mid i\neq j\} \subset K_{N,N}$. The number of occurrences of a variable in the factorisation is the number of its bicliques containing the corresponding vertex of $K_{N,N}$.

The readability of $q_N$, denote it $\rho_N$, is then the smallest $k$ for which $G_N$ can be written as a union of bicliques in such a manner that each its vertex is included in at most $k$ of these bicliques.

Let $M_k$ be the largest $N$ for which $G_N$ can be written as a union of bicliques in such a manner that each its vertex is included in at most $k$ of these bicliques. Then $\rho_N$ is the smallest $k$ for which $M_k \geq N$.

\begin{lemma}
 $M_1 = 2$.
\end{lemma}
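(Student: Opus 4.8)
The plan is to determine $M_1$ by showing that the crown graph $G_2$ admits the required decomposition while $G_N$ admits none for any $N \ge 3$; since $M_1$ is the largest $N$ that works, this yields $M_1 = 2$. The lower bound $M_1 \ge 2$ is immediate: $G_2$ is the two-edge graph $\{(a_1,b_2),(a_2,b_1)\}$, and these two edges are themselves bicliques (copies of $K_{1,1}$) on disjoint vertex sets, so each vertex of $K_{2,2}$ lies in at most one of them.

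For the upper bound $M_1 \le 2$, I would fix $N \ge 3$ and assume for contradiction that $G_N$ is an edge-disjoint union of bicliques $B_1,\dots,B_m$, each containing at least one edge, with every vertex of $K_{N,N}$ contained in at most one $B_i$. First I would note that this last condition makes the vertex sets $V(B_1),\dots,V(B_m)$ pairwise disjoint. Then comes the connectivity step: every vertex of $G_N$ has degree $N-1 \ge 1$ and hence lies in exactly one $B_i$; moreover, an edge of $G_N$ joining $V(B_i)$ and $V(B_j)$ with $i \ne j$ would have to lie in some $B_\ell$, which would then meet both $V(B_i)$ and $V(B_j)$, contradicting disjointness. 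So there are no edges of $G_N$ between distinct $V(B_i)$, and therefore $V(G_N) = \bigsqcup_i V(B_i)$ would exhibit $G_N$ as disconnected. This contradicts the fact that $G_N = K_{N,N}$ minus a perfect matching is connected for $N \ge 3$ (any two $a$-vertices $a_i,a_j$ are joined via $a_i - b_k - a_j$ for any $k \notin \{i,j\}$, which exists because $N \ge 3$, and every $b$-vertex is adjacent to some $a$-vertex). Hence $m = 1$ and $G_N$ is a single biclique $K_{p,q}$.

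The final step is a counting contradiction: $G_N$ is $(N-1)$-regular on $2N$ vertices, whereas in $K_{p,q}$ with $p,q \ge 1$ (positive since $G_N$ has edges) there are $p$ vertices of degree $q$ and $q$ vertices of degree $p$; regularity forces $p = q$, hence $2p = 2N$, i.e.\ $p = N-1$, which is impossible. This excludes all $N \ge 3$, and together with $M_1 \ge 2$ gives $M_1 = 2$. I expect the only mildly delicate point to be the bookkeeping in the connectivity step — correctly using ``each vertex in at most one biclique'' to derive both pairwise vertex-disjointness of the $B_i$ and the absence of cross-edges between them — while the remaining verifications are routine.
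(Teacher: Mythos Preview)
Your argument is correct. The lower bound ($G_2$ decomposes as two disjoint copies of $K_{1,1}$) is exactly the paper's. For the upper bound the paper is much terser: it simply asserts that $G_3$ ``clearly cannot be written as a vertex-disjoint union of bicliques'' and stops, implicitly relying on the monotonicity that if $G_N$ admits such a decomposition then so does any $G_{N'}$ with $N' \le N$ (by restricting each biclique). You instead give a uniform argument for all $N \ge 3$ via connectivity of the crown graph, which is a genuinely different and more self-contained route: it makes explicit what the paper leaves as ``clear'' and avoids invoking monotonicity. Both approaches are short; yours is more informative.

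One small slip in the final counting step: from $p = q$ and the fact that the single biclique must span all $2N$ vertices you get $2p = 2N$, hence $p = N$, while the degree condition gives $p = N-1$; these two are incompatible. Your sentence ``hence $2p = 2N$, i.e.\ $p = N-1$'' runs the two constraints together. An even quicker finish is to note that $G_N$ contains the vertices $a_1,a_2,b_1,b_2$ and the edges $(a_1,b_2),(a_2,b_1)$ but not $(a_1,b_1)$, so it is not complete bipartite on its vertex set.
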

\begin{proof}
  On one hand, $G_2 = K_{1,1} + K_{1,1}$ can be written as a vertex-disjoint
  union of bicliques. On the other hand, $G_3$ clearly cannot be
  written as a vertex-disjoint union of bicliques.
\end{proof}

\begin{lemma}
  For $k>1$, $M_k < k^2 M_{k-1}$.
\end{lemma}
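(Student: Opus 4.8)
The plan is to show that every $N$ for which $G_N$ admits a biclique decomposition with all vertex multiplicities $\le k$ already satisfies $N<k^2M_{k-1}$; applying this with $N=M_k$ then gives the lemma (and in passing shows $M_k$ is finite, granted $M_{k-1}$ is). Given such an $N$, if $N\le k(k+1)$ we are immediately done, since $M_{k-1}\ge M_1=2$ and hence $k^2M_{k-1}\ge 2k^2>k^2+k\ge N$ for $k\ge 2$. So from now on assume $N>k(k+1)$. The strategy is to pass to an induced sub-crown-graph $G_{N'}$ on $N'$ of the original indices whose inherited decomposition has all multiplicities $\le k-1$, with $N'$ still larger than $N/k^2$; this forces $N/k^2<N'\le M_{k-1}$, as required.

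The structural facts I would use are two. First, any biclique $A\times B$ appearing in a decomposition of a crown graph has $a$-index set $I$ and $b$-index set $J$ with $I\cap J=\emptyset$, simply because $G_N$ has no edge of the form $(a_i,b_i)$. Second, if we fix $S\subseteq[N]$ and keep only the vertices $a_i,b_i$ with $i\in S$, then restricting every biclique $A\times B$ to $(A\cap\{a_i:i\in S\})\times(B\cap\{b_j:j\in S\})$ and discarding the empty ones yields a valid edge-disjoint biclique decomposition of $G_{|S|}$; moreover the multiplicity of a retained vertex can only decrease, and it strictly decreases whenever one of the bicliques through it has its \emph{opposite} side entirely outside $S$.

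I would then run this in two stages. Stage one: the $N-1$ edges incident to $b_1$ are split among at most $k$ bicliques, so one of them, call it $C_0$, has an $a$-side $A_0$ with $|A_0|\ge (N-1)/k$; let $I_0$ be its index set. Restrict the decomposition to $S_1:=I_0$. Since the $b$-index set $J_0$ of $C_0$ is disjoint from $I_0$, the biclique $C_0$ vanishes, so in the resulting decomposition of $G_{N_1}$ with $N_1=|I_0|\ge (N-1)/k$ every $a$-vertex has multiplicity $\le k-1$. Stage two: pick any $a$-vertex of this smaller graph; its $N_1-1$ edges are now split among at most $k-1$ bicliques, so one of them, $C_1$, has a $b$-side whose index set $J_1$ has $|J_1|\ge (N_1-1)/(k-1)$. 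Restrict further to $S_2:=J_1$. The $a$-index set of $C_1$ is disjoint from $J_1$, so $C_1$ vanishes and every $b$-vertex of $G_{|J_1|}$ loses a biclique, dropping to multiplicity $\le k-1$, while $a$-vertices stay at multiplicity $\le k-1$. Hence $G_{|J_1|}$ has a decomposition with all multiplicities $\le k-1$, so $M_{k-1}\ge|J_1|\ge\bigl((N-1)/k-1\bigr)/(k-1)=(N-1-k)/(k(k-1))$, and a one-line computation shows this exceeds $N/k^2$ exactly because $N>k(k+1)$.

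I expect the only real care to be in the bookkeeping: verifying that the restriction operation really produces a decomposition of the induced crown graph (every edge still covered exactly once), that $I\cap J=\emptyset$ is precisely what makes the selected biclique vanish after restriction so that the relevant side's multiplicity drops by one, and that the second pigeonhole step is applied to the \emph{already reduced} decomposition so it divides the degree by $k-1$ rather than $k$. The closing arithmetic and the small-$N$ case are routine.
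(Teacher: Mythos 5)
Your proof is correct. The combinatorial engine is the same as the paper's: a pigeonhole over the (at most $k$) bicliques through a fixed vertex on each side of the bipartition, followed by restriction to an induced sub-crown-graph on which every surviving vertex loses one of its containing bicliques because that biclique's opposite side falls outside the retained index set. The execution differs in a way that buys something. The paper performs both pigeonholes in parallel on the original decomposition (over the bicliques through $a_1$ and through $b_1$), intersects the two large sides to obtain a set $A$ with $A\subseteq A_i'$ and $\overline{A}\subseteq B_j$, and closes by contradiction: a vertex of the restricted collection $\mathcal{D}$ with multiplicity at least $k$ would have multiplicity at least $k+1$ in $\mathcal{C}$. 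You instead restrict immediately after the first pigeonhole, so the second pigeonhole is applied to a decomposition whose $a$-multiplicities are already at most $k-1$ and therefore divides the degree by $k-1$ rather than $k$; together with the direct (non-contradiction) formulation this yields the marginally sharper bound $M_k\le k(k-1)M_{k-1}+k$ and sidesteps a boundary case in the paper's write-up, which infers from $|A|\ge M_{k-1}$ that some vertex of $\mathcal{D}$ lies in at least $k$ bicliques --- an inference that strictly requires $|A|>M_{k-1}$. Your separate treatment of $N\le k(k+1)$ (using $M_{k-1}\ge M_1=2$ and monotonicity of $M_k$), the disjointness of the two index sets of any biclique occurring in a crown-graph decomposition, and the closing arithmetic $(N-1-k)/(k(k-1))>N/k^2\iff N>k(k+1)$ all check out.
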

\begin{proof}
  First introduce some notation: if $A = \{a_{i_1},a_{i_2},$ $\dots,a_{i_k}\}$ is a set of vertices of $G_N$ all coming from one partition, by $\overline{A}$ we will denote the \emph{opposite set} $\{b_{i_1},b_{i_2},\dots,$ $b_{i_k}\}$. Similarly we define $\overline{B}$ for a set $B$ of vertices from the other partition.

  Let $k>1$, $N = k^2 M_{k-1}$, and suppose that $G_N$ can be written
  as a union of bicliques such that each vertex is contained in at
  most $k$ of them.
 
  Consider one such collection $\mathcal{C}$ of bicliques. The vertex
  $a_1$ is contained in at most $k$ bicliques $\{A_i \times B_i\}_i
  \subseteq \mathcal{C}$, and the vertex $b_1$ is contained in at most
  $k$ bicliques $\{A_i' \times B_i'\}_i \subseteq \mathcal{C}$. Since
  $\bigcup \mathcal{C} = G_n$, we must have
 \[\bigcup_i B_i = \{b_2,\dots,b_N\} \qquad\textrm{and}\qquad \bigcup_i A_i' = \{a_2,\dots,a_N\}.\]
 Since $|\bigcup_i B_i| = N-1 = k^2 M_{k-1} - 1$, and since $k>1$, by
 the pigeonhole principle there exists some $B_j$ such that $|B_j|
 \geq k M_{k-1}$. But $\bigcup_i A_i' = \{a_2,\dots,a_N\}$ implies
 $\bigcup_i \overline{A_i'} = \{b_2,\dots,b_N\} \supseteq B_j$, so
 there exists some $A_i'$ such that $|\overline{A_i'} \cap B_j| \geq
 M_{k-1}$. Denote $A = (A_i' \cap \overline{B_j})$. This means that $A
 \subseteq A_i'$ and $\overline{A} \subseteq B_j$. And $|A| \geq
 M_{k-1}$.
 
 Now consider the collection $\mathcal{C}$ restricted to $A \times
 \overline{A}$, i.e.
 \[\mathcal{D} = \mathcal{C} \upharpoonright_{A\times \overline{A}} = \{(X\times Y) \cap (A\times \overline{A}) \mid X\times Y \in \mathcal{C}\}. \]
 This is still a collection of bicliques, and it covers the graph
 induced by $A \times \overline{A}$, which is in fact isomorphic to
 $G_{|A|}$. Since $|A| \geq M_{k-1}$, at least one vertex $v$ of this
 subgraph is contained in at least $k$ bicliques in $\mathcal{D}$.
 Since all bicliques in $\mathcal{D}$ are restrictions of bicliques in
 $\mathcal{C}$, $v$ is also included in the corresponding bicliques
 from $\mathcal{C}$. However, $v$ is also included in the biclique
 $A_j \times B_j$ or $A_i' \times B_i'$ (depending on the partition it
 is in). But since $A_j \cap A = \emptyset$ and $\overline{A} \cap
 B_i' = \emptyset$, the restrictions of these two bicliques to
 $A\times\overline{A}$ are empty, and thus neither of them is one of
 our original $k$ bicliques containing $v$. Therefore, $v$ is in fact
 included in at least $k+1$ bicliques from $\mathcal{C}$, which is a
 contradiction to our assumption.
\end{proof}

\begin{corollary}
 For $k \geq 1$, $M_k \leq 2(k!)^2$.
\end{corollary}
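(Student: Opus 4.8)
The statement follows by a straightforward induction on $k$ from the two preceding lemmas, namely $M_1 = 2$ and $M_k < k^2 M_{k-1}$ for $k > 1$. The plan is as follows.

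For the base case $k = 1$, I would simply note that $M_1 = 2 = 2 \cdot (1!)^2$, which gives the claimed bound with equality. For the inductive step, assume $k > 1$ and that the bound $M_{k-1} \leq 2\bigl((k-1)!\bigr)^2$ has already been established. Applying the lemma $M_k < k^2 M_{k-1}$ and then the induction hypothesis, I would compute
\[
M_k < k^2 M_{k-1} \leq k^2 \cdot 2\bigl((k-1)!\bigr)^2 = 2\bigl(k \cdot (k-1)!\bigr)^2 = 2(k!)^2 .
\]
In particular $M_k \leq 2(k!)^2$, completing the induction. (Since $M_k$ is an integer and the inequality above is strict, one in fact gets $M_k \leq 2(k!)^2 - 1$, but the weaker bound suffices for the applications.)

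There is no real obstacle here: the work has all been done in the two lemmas, and what remains is a one-line arithmetic induction. The only point to be slightly careful about is to invoke the strict inequality $M_k < k^2 M_{k-1}$ rather than a non-strict one, and to track that $k^2 \cdot ((k-1)!)^2 = (k!)^2$ so that the factor of $2$ is preserved exactly through the recursion.
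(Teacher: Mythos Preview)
Your proposal is correct and is exactly the intended argument: the paper states this as an immediate corollary of the two preceding lemmas ($M_1 = 2$ and $M_k < k^2 M_{k-1}$), and the induction you spell out is precisely how it follows.
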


\begin{lemma}
 With $k= \frac{\log{N}}{\log\log{N}}$, we have $2(k!)^2 < N$ for large enough $N$. 
\end{lemma}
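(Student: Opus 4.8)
The plan is to take logarithms and estimate the factorial by Stirling's approximation. Since $\log$ is monotone, the claim $2(k!)^2 < N$ is equivalent to $\log 2 + 2\log(k!) < \log N$ (with any fixed base, used consistently with the $\log$ appearing in the definition of $k$), so the whole problem reduces to bounding $\log(k!)$ sharply from above. First I would record Stirling's estimate in the form $\log(k!) = k\log k - k + O(\log k)$; the cruder bound $\log(k!)\le k\log k$ coming from $k!\le k^k$ is tempting, but the linear term $-k$ turns out to matter here, so I would keep the sharper expansion.

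Next I would substitute $k = \frac{\log N}{\log\log N}$ and simplify. Here $\log k = \log\log N - \log\log\log N$, so
\[
k\log k = \frac{\log N}{\log\log N}\bigl(\log\log N - \log\log\log N\bigr) = \log N\Bigl(1 - \frac{\log\log\log N}{\log\log N}\Bigr),
\]
and feeding this (together with the $-k$ term) into Stirling gives
\[
2\log(k!) = 2\log N\Bigl(1 - \frac{\log\log\log N + 1}{\log\log N}\Bigr) + O(\log\log N).
\]
The target inequality $\log 2 + 2\log(k!) < \log N$ then becomes a purely asymptotic comparison between this expression and $\log N$, to be verified for all sufficiently large $N$.

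The main obstacle is exactly this final comparison, and it is genuinely delicate. After dividing through by $\log N$ and absorbing the $O(\log\log N)$ and $\log 2$ terms (both $o(\log N)$), the inequality amounts to controlling the correction factor $\frac{\log\log\log N + 1}{\log\log N}$ against the constant $\tfrac12$: the leading terms of $2\log(k!)$ and $\log N$ differ only through this slowly-varying correction, so I would track the lower-order contributions explicitly rather than discarding them, using $\log\log N \to \infty$ and $\log\log\log N = o(\log\log N)$ to settle the comparison for large $N$. Once $2(k!)^2 < N$ is in hand, it combines immediately with the Corollary $M_k \le 2(k!)^2$ to give $M_k < N$, hence $\rho_N > k = \frac{\log N}{\log\log N}$, which is precisely the lower bound $\rho_N = \Omega\bigl(\frac{\log N}{\log\log N}\bigr)$ that this lemma is meant to supply.
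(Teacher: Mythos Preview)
Your setup via Stirling is correct, and you correctly isolate the crux: after dividing through by $\log N$, the target inequality $\log 2 + 2\log(k!) < \log N$ becomes
\[
2\Bigl(1 - \frac{\log\log\log N + 1}{\log\log N}\Bigr) + o(1) < 1,
\qquad\text{i.e.}\qquad
\frac{\log\log\log N + 1}{\log\log N} > \tfrac12 + o(1).
\]
But the very facts you propose to use---$\log\log N \to \infty$ and $\log\log\log N = o(\log\log N)$---show that this ratio tends to $0$, so for all sufficiently large $N$ it lies \emph{below} $\tfrac12$, not above. Concretely, $2k\log k \sim 2\log N$, hence $2\log(k!) \sim 2\log N > \log N$, and $2(k!)^2$ is of order $N^{2-o(1)}$, not below $N$. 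Thus the lemma as literally stated, with $k = \frac{\log N}{\log\log N}$ and no constant in front, is false; your ``settle the comparison'' step, if actually carried out, goes the wrong way.

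The repair is easy and leaves the downstream Corollary intact. Take instead $k = c\,\frac{\log N}{\log\log N}$ for any fixed $c < \tfrac12$ (e.g.\ $c=\tfrac13$). Your Stirling computation then gives $2\log(k!) = 2c\log N\cdot(1+o(1)) < \log N$ for large $N$, so $2(k!)^2 < N$. Combined with $M_k \le 2(k!)^2$ this still yields $\rho_N > k = c\,\frac{\log N}{\log\log N}$, which is exactly $\rho_N = \Omega\bigl(\frac{\log N}{\log\log N}\bigr)$. The paper offers no proof of the lemma; your analysis in fact uncovers a missing constant in its statement.
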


\begin{corollary}
 $\rho_N = \Omega(\frac{\log{N}}{\log\log{N}})$.
\end{corollary}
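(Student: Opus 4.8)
The plan is to read the corollary off from the quantities $M_k$ already analysed, so there is essentially nothing left to do once the preceding lemma is available. Recall that the readability $\rho_N$ of $q_N$ was identified with the least $k$ for which $M_k \ge N$, and that the preceding corollary gives $M_k \le 2(k!)^2$ for every $k \ge 1$ (from the base case $M_1 = 2$ together with the recurrence $M_k < k^2 M_{k-1}$). Hence it suffices to produce, for all large $N$, a value $k = \Omega\!\left(\frac{\log N}{\log\log N}\right)$ with $2(k!)^2 < N$; for such a $k$ we then have $M_k \le 2(k!)^2 < N$, so $k$ lies strictly below $\min\{\,k' : M_{k'} \ge N\,\} = \rho_N$, which is exactly the asserted bound. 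The lemma stated just above provides precisely such a $k$, and the corollary is immediate.

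The only real content is therefore the lemma itself, and I would prove it by taking logarithms and applying Stirling's estimate $\log(k!) = k\log k - \Theta(k)$, so that $\log\!\left(2(k!)^2\right) = 2k\log k - \Theta(k)$, which is asymptotic to $2k\log k$. Writing $k = c\,\frac{\log N}{\log\log N}$ one has $\log k = \log\log N - \log\log\log N + O(1) \sim \log\log N$, so $k\log k \sim c\log N$ and therefore $\log\!\left(2(k!)^2\right) \sim 2c\log N$, while the correction term $\Theta(k)$ is only $O\!\left(\frac{\log N}{\log\log N}\right) = o(\log N)$. Consequently $\log\!\left(2(k!)^2\right) < \log N$, i.e.\ $2(k!)^2 < N$, for all sufficiently large $N$ provided $c$ is chosen small enough; and since any fixed $c>0$ already yields $\rho_N = \Omega\!\left(\frac{\log N}{\log\log N}\right)$, the precise value of the constant is immaterial.

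I do not expect a genuine obstacle here: the combinatorial heart of the argument — the reformulation of readability as the minimum vertex multiplicity in an edge-disjoint biclique cover of the crown graph $G_N$, the doubling recurrence $M_k < k^2 M_{k-1}$, and the closed form $M_k \le 2(k!)^2$ — is already established in the preceding lemmas. The one point that needs care is the asymptotic bookkeeping in the Stirling estimate, namely checking that the subtracted $\Theta(k)$ term and the logarithmic error are genuinely of lower order than $2k\log k$ when measured against the target $\log N$, and that the leading coefficient $2c\log N$ is compared correctly against $\log N$, so that the chosen constant multiple of $\frac{\log N}{\log\log N}$ indeed satisfies $2(k!)^2 < N$.
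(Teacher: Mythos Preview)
Your proposal is correct and follows exactly the paper's route: the corollary is read off from the chain $M_k \le 2(k!)^2$ together with the lemma that $2(k!)^2 < N$ for $k$ of order $\frac{\log N}{\log\log N}$, giving $\rho_N > k$. Your extra care in introducing a constant $c$ (so that $2c\log N < \log N$) is in fact more accurate than the paper's bare statement with $c=1$, but since the corollary only asserts an $\Omega$-bound this makes no difference to the conclusion.
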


For the upper bound on readability, we prove the following lemma.

\begin{lemma}
For any $N>1$, $\rho_N \leq \rho_{\lceil N/2 \rceil + 1}$.
\end{lemma}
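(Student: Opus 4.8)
The plan is to argue entirely in the biclique language already set up for the lower bound. Recall that $\rho_N$ is the least $k$ such that the crown graph $G_N = \{(a_i,b_j) \mid i \neq j\}$ can be written as an edge-disjoint union of bicliques in which every vertex of $K_{N,N}$ lies in at most $k$ of them. I would first record a monotonicity observation: for $m \le N$, the graph $G_m$ is the subgraph of $G_N$ induced on the first $m$ indices of each side, so restricting any biclique cover of $G_N$ to this induced subgraph gives a biclique cover of $G_m$ without increasing any vertex's multiplicity, whence $\rho_m \le \rho_N$. It therefore suffices to produce, from an \emph{optimal} biclique cover of $G_{\lceil N/2\rceil}$ (one with multiplicity exactly $\rho_{\lceil N/2\rceil}$ at every vertex), a biclique cover of $G_N$ in which every vertex's multiplicity exceeds its multiplicity in the given cover by at most one.

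The construction is a blow-up. Partition the index set $\{1,\dots,N\}$ into $m := \lceil N/2\rceil$ blocks $B_1,\dots,B_m$, each of size one or two. Split the edges of $G_N$ into \emph{inter-block} edges $(a_i,b_j)$ with $i$ and $j$ in different blocks (these are automatically distinct) and \emph{intra-block} edges $(a_i,b_j)$ with $i\neq j$ both in one size-two block. The inter-block part is exactly the blow-up of $G_m$ obtained by replacing block-vertex $p$ with $\{a_i : i\in B_p\}$ on the left and $\{b_i : i\in B_p\}$ on the right: a set $X\times Y$ is a biclique of $G_m$ iff $X\cap Y=\emptyset$, and then its blow-up $\widehat X\times\widehat Y$ is a biclique of $G_N$ since distinct blocks are disjoint. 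Replacing each biclique of an optimal cover of $G_m$ by its blow-up yields an edge-disjoint cover of all inter-block edges (the map is injective on edges), in which the multiplicity of $a_i$ with $i\in B_p$ equals the multiplicity of block-vertex $a_p$ in the $G_m$-cover, hence at most $\rho_m$. For the intra-block part, each size-two block $B_p=\{u,v\}$ contributes only the two edges $(a_u,b_v)$ and $(a_v,b_u)$; cover each by a single-edge biclique. These single-edge bicliques are pairwise edge-disjoint and disjoint from every blow-up, and each vertex lies in at most one of them. Altogether $G_N$ is covered with multiplicity at most $\rho_m+1$, giving the recursion $\rho_N \le \rho_{\lceil N/2\rceil}+1$ (and, by monotonicity, $\rho_N \le \rho_{\lceil N/2\rceil+1}+1$); iterating down to the base case $\rho_2=1$ yields $\rho_N \le \lceil\log_2 N\rceil$ and hence the $O(\log N)$ upper bound of Theorem~\ref{th:crown}.

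The only steps requiring care are routine: checking that blow-up preserves edge-disjointness and exactly transports vertex multiplicities, and handling the parity of $N$ when choosing the blocks (when $N$ is odd one block is a singleton and contributes no intra-block edge). The genuinely load-bearing insight — and the place where I expect the real work to be — is recognising the right decomposition of $G_N$: after grouping the indices into pairs, the edges \emph{not} already explained by a crown graph half the size are precisely the within-pair edges, and these form a disjoint union of matchings of bounded size, so they can be absorbed into a single extra biclique layer. Everything else is bookkeeping on top of the lower-bound framework already in the paper.
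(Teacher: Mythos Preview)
Your argument is correct and reaches the same recursion $\rho_N \le \rho_{\lceil N/2\rceil}+1$ as the paper, but via a genuinely different decomposition. The paper does not pair up indices and blow up a cover of $G_{\lceil N/2\rceil}$; instead it splits $\{1,\dots,N\}$ into two \emph{contiguous halves} $H_1=\{1,\dots,\lceil N/2\rceil\}$ and $H_2=\{\lceil N/2\rceil+1,\dots,N\}$ and writes
\[
q_N \;=\; q_{|H_1|}\big|_{H_1} \;+\; q_{|H_2|}\big|_{H_2} \;+\; \Big(\textstyle\sum_{i\in H_1} t_i\Big)\Big(\textstyle\sum_{j\in H_2} r_j\Big) \;+\; \Big(\textstyle\sum_{i\in H_2} t_i\Big)\Big(\textstyle\sum_{j\in H_1} r_j\Big).
\]
The two ``within-half'' pieces are vertex-disjoint copies of $q_{\lceil N/2\rceil}$ and $q_{\lfloor N/2\rfloor}$, each factorisable with readability at most $\rho_{\lceil N/2\rceil}$; the two cross terms are single bicliques in which every variable appears once. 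So the paper's extra layer consists of two large bicliques, whereas yours consists of $\lfloor N/2\rfloor$ single-edge bicliques (the intra-block matching). The paper's route is arguably more elementary---it never needs to invoke an optimal cover of the smaller crown, only the inductive existence of a good factorisation on each half---while your blow-up construction is more graph-theoretic and would generalise more naturally to blocks of size larger than two. A minor quibble: an optimal cover of $G_m$ has multiplicity \emph{at most} $\rho_m$ at each vertex, not exactly $\rho_m$, but this does not affect your argument.
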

\begin{proof}
Write $q_N$ as
\begin{align*}
 q_N &= \textstyle\sum_{i,j=1,i\neq j}^{\lceil N/2 \rceil} r_it_j +  \textstyle\sum_{i,j=\lceil N/2 \rceil + 1,i\neq j}^N r_it_j + \phantom{x}\\
&+ (\textstyle\sum_{i=1}^{\lceil N/2 \rceil} t_i)(\sum_{j=\lceil N/2 \rceil + 1}^N r_j) + \phantom{x}\\
&+ (\textstyle\sum_{i=\lceil N/2 \rceil + 1}^N t_i)(\sum_{j=1}^{\lceil N/2 \rceil} r_j).
\end{align*}
The first two sums are equivalent to $q_{\lceil N/2\rceil}$ and $q_{\lfloor N/2\rfloor}$ respectively, so they are both equivalent to at most read-$\rho_{\lceil N/2\rceil}$ expressions, but they contain different variables. In the rest of the expression, each variable appears at most once. Therefore, the whole expression is equivalent to a read-\mbox{$(\rho_{\lceil N/2\rceil}+1)$} expression. This completes the proof.
\end{proof}

\begin{corollary}
 By induction, $\rho_N = O(\log{N})$.
\end{corollary}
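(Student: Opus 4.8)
The statement follows by unrolling the recurrence established in the preceding lemma, so the plan is short. First I would fix the recurrence to iterate: the preceding lemma (as its proof makes explicit) gives, for every $N>1$, the bound $\rho_N \le \rho_{\lceil N/2\rceil} + 1$. The additive $+1$ is the cost of the two ``cross'' products in the decomposition of $q_N$, each a read-once layer; the two half-sized crown subgraphs in that decomposition are built from disjoint sets of variables, so their readabilities are only maximised, not added — and $\rho$ is monotone in $N$ (restrict a read-optimal factorisation to fewer variables), so each of them is at most $\rho_{\lceil N/2\rceil}$.

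Next I would iterate this bound $t := \lceil \log_2 N\rceil$ times. Putting $N_0 = N$ and $N_{k+1} = \lceil N_k/2\rceil$, a one-line induction gives $N_k \le N/2^k + 1$, so $N_t$ is bounded by a fixed constant $c_0$. Applying the lemma $t$ times yields $\rho_N \le \rho_{N_t} + t \le \rho_{c_0} + \lceil \log_2 N\rceil$. For the base case it suffices that $\rho_M$ is finite for every fixed $M$, which is clear because $q_M$ mentions only the $2M$ variables $r_1,\dots,r_M,t_1,\dots,t_M$ (so even the trivial bound $\rho_M \le M$ works). Hence $\rho_N \le \lceil \log_2 N\rceil + c_0 = O(\log N)$, which is the claim.

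I do not expect a genuine obstacle in this step. The real work — the biclique-cover reading of readability for crown graphs, and, crucially, the split of $q_N$ arranged so that the two induced sub-crowns use disjoint variables (the reason one loses only $+1$ per level rather than a factor of $2$) — has already been carried out in the preceding lemmas. What is left here is purely mechanical: verify that $N\mapsto\lceil N/2\rceil$ strictly decreases above a small threshold, so the unrolling stops after $O(\log N)$ rounds, and that the value reached at the bottom is a constant; both are immediate.
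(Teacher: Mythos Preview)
Your proposal is correct and matches the paper's intended argument: the paper offers no proof beyond the phrase ``by induction,'' and you have simply spelled out that induction by unrolling the recurrence $\rho_N \le \rho_{\lceil N/2\rceil} + 1$ established in the preceding lemma. Your auxiliary bound $N_k \le N/2^k + 1$ (using $\lceil n/2\rceil \le (n+1)/2$ for integers $n$) and the observation that the base case is a constant are exactly what is needed; there is nothing to add.
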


%%%%%%%%%%%%%%%%%%%%%%%%%%%%%%%%%%%%%%%%%%%%%%%%%%%%%%
\section*{Proofs from Section~\ref{sec:ftrees}}

%%%%%%%%%%%%%%%%%%%%%%%%%%%%%%%%%%%%%%%%%%%%%%%%%%%%%%%%%%%%%%%%%
\subsection*{Proof of Proposition~\ref{prop:correctness-ftree}}

 We show that for any f-tree $\T$ of a query $Q$, $\Phi(\T)$ is an
 f-representation of $Q(\D)$ for any database $\D$.

  To convert $\Phi(\T)$ into a sum-of-products form, we repeatedly
  choose any sum $\sum_{A^*}$ appearing inside a product and
  distribute all the other factors to each of the summands. However,
  for each attribute class $A^*$, all relations with any attribute
  from $A^*$ must appear as leaves of the subtree rooted at $A^*$, and
  hence all tuples from these relations must already appear inside the
  sum $\sum_{A^*}$.  Therefore, when moving factors into a sum
  $\sum_{A^*}$, we can also extend the conditions $A^*=a$ to these
  factors, as it will not affect the selections on relations contained
  in them.  We can then move all the products downwards, obtaining an
  expression
  \[\sum_{A_1^*}\cdots \sum_{A_n^*} \prod_R \textstyle \sum_{t_j \in
    \sigma_\gamma(\mathbf{R})} id_j\tuple{\pi_{\mbox{head}(Q)}(t_j)},\tag{1}\] where
  the sums are over all equivalence classes of attributes and the
  product over all relations of $Q$. This is equivalent to the
  sum-of-products representation of $Q(\mathbf{D})$.

%%%%%%%%%%%%%%%%%%%%%%%%%%%%%%%%%%%%%%%%%%%%%%%%%%%%%%%%%%%%%%%%%%%%

\subsection*{Proof of Proposition~\ref{prop:ftree-hierarchical}}

We show that a query is hierarchical if and only if it has an f-tree $\T$ such that $\mbox{\textrm{Non-relevant}}(R)=\emptyset$ for each relation $R$.

Let $Q$ be a hierarchical query. By Proposition~\ref{prop:algo-hierarchical}, when computing the f-tree $\T$, the algorithm \textbf{iter-pruned} only has a single choice for the root of each subtree. This means that for each node $A^*$ in the tree, all its children $B^*$ satisfy $r(A) \supseteq r(B)$. Therefore, the nodes relevant to each relation $R$ not only lie on a path from the root of $\T$, but form a contiguous path from the root of $\T$. The leaf labelled by $R$ is put directly under the lowest node of this path, and we get $\mbox{\textrm{Non-relevant}}(R)=\emptyset$.

Conversely, suppose that $\T$ is an f-tree for $Q$ such that $\mbox{\textrm{Non-relevant}}(R)=\emptyset$ for each relation $R$. For any two attribute classes $A^*$ and $B^*$ of $Q$, either one is an ancestor of the other, or they appear in sibling subtrees. In the latter case, $r(A)$ and $r(B)$ are disjoint. In the former case, suppose wlog that $A^*$ is an ancestor of $B^*$. Any relation $R \in r(B)$ must appear in a leaf under the node $B^*$. However, since $\mbox{\textrm{Non-relevant}}(R)=\emptyset$, all nodes on the path from $R$ to its root are relevant to $R$, and we must also have $R \in r(A)$. This shows that $r(B)\subseteq r(A)$ and completes the proof.

%%%%%%%%%%%%%%%%%%%%%%%%%%%%%%%%%%%%%%%%%%%%%%%%%%%%%%
\section*{Proofs from Section~\ref{sec:query-readability}}

%%%%%%%%%%%%%%%%%%%%%%%%%%%%%%%%%%%%%%%%%%%%%%%%%%%%%%%%%%%%%%%%%%%%
\subsection*{Proof of Lemma~\ref{lemma:number-occurrences}}

Let $Q = \pi_{\bar{A}}\sigma_\phi (R_1\times \dots \times R_n)$ be a
query, $\mathcal{T}$ be an f-tree of $Q$, and $\Phi({\cal T})$ be the
$\T$-factorisation of $Q(\D)$. Also let $R=R_i$ be a relation of
$Q$. By $NR$ we denote the set $\mbox{Non-relevant}(R)$ and by ${\cal
S}(R)=\tuple{t}$ we denote the conjunction of equalities of all
attributes of $R$ to corresponding values in
$\tuple{t}$. Lemma~\ref{lemma:number-occurrences} claims that for any
database $\mathbf{D}$, the number of occurrences of the identifier $r$
of a tuple $\tuple{t}$ from $R$ in $\Phi({\cal T})$ is equal to the
number of distinct tuples in
\[\big(\pi_{NR} ( \sigma_{{\cal S}(R)=\tuple{t}} \sigma_\phi (
R_1\times \dots \times R_n))\big)(\D).\]

In $\Phi(\mathcal{T})$, each time an expression ${\llbracket
\mbox{leaf } R \rrbracket(\gamma)}$ is generated from the leaf $R$, it
appears inside the summations $\sum_{A^*}$ over all the values of
attribute classes $A^*$ from $\textrm{Path}(R)$. Thus, each time an
identifier $r$ of a tuple $\langle t \rangle$ from $R$ appears in
$\Phi(\mathcal{T})$, it appears inside a ${\llbracket \mbox{leaf } R
\rrbracket(\gamma)}$ with a different condition $\gamma$ on the
attributes from $\textrm{Path}(R)$.

However, not all $\gamma$ yield the identifier $r$ in the expression ${\llbracket \mbox{leaf } R \rrbracket(\gamma)}$. Firstly, all the attributes in the nodes relevant to $R$ must be assigned the corresponding value from $\langle t \rangle$ in the condition $\gamma$, otherwise the expression will not contain the identifier $r$.

Secondly, even if the expression ${\llbracket \mbox{leaf } R \rrbracket(\gamma)}$ contains $r\tuple{t}$, it may happen that this expression is inside a product with an empty sum, and hence does not appear in the output $\Phi(\mathcal{T})$. In particular, $r\tuple{t}$ from ${\llbracket \mbox{leaf } R \rrbracket(\gamma)}$ appears in $\Phi(\mathcal{T})$ if and only if it appears in at least one monomial in the sum-of-products form of $\Phi(\mathcal{T})$. From the expanded form of $\Phi(\mathcal{T})$ given in the expression (1) in the proof of Proposition~\ref{prop:correctness-ftree}, we see that each such monomial corresponds to an extension $\gamma'$ of the condition $\gamma$ to all attribute classes, for which all other relations also give a nonempty selection.

Thus, each occurrence of $r\tuple{t}$ in $\Phi(\mathcal{T})$
corresponds to a condition $\gamma$ on the attributes from $NR$, such
that $\tuple{t}$ satisfies $\gamma$ and there exists an output tuple
of $Q(\mathbf{D})$ satisfying the condition $\gamma$. Each such
condition $\gamma$ is determined by the choice of values of the
attributes from $NR$, and each such choice of values corresponds to a
tuple in
\[\big(\pi_{NR} ( \sigma_{{\cal S}(R_i)=\tuple{t}} \sigma_\phi ( R_1\times
\dots \times R_n))\big)(\D).\]

%%%%%%%%%%%%%%%%%%%%%%%%%%%%%%%%%%%%%%%%%%%%%%%%%%%%%%%%%%%%%%%%%%%%
\subsection*{Proof of Lemma~\ref{lem:fcover-lower}}

We show that for any equi-join query $Q$, there exist arbitrarily
large databases $\D$ such that $||Q(\D)|| \geq
(|\D|/|Q|)^{\rho^*(Q)}$. We essentially repeat the proof given
in~\cite{AGM08}, fixing a minor omission and extending it to repeating
queries.

Suppose first that $Q$ is non-repeating. Denote by $a(R)$ the set of
attribute classes containing attributes of a relation $R$. The linear
program with variables $y_{A^*}$ labelled by the attribute classes of
$Q$,
\begin{align*}
\textrm{maximising}\qquad&\textstyle\sum_{A^*} y_{A^*} \\
\textrm{subject to}\qquad&\textstyle\sum_{A^* \in a(R)} y_{A^*} \leq 1 \qquad\textrm{for all relations $R$, and}\\
& y_{A^*} \geq 0 \qquad\qquad\qquad\:\:\:\,\textrm{for all $A^*$,}
\end{align*}
is dual to the program given in Definition~\ref{def:fractional-cover-number}.

By this duality, any optimal solution $\{y_{A^*}\}$ to this linear program has cost $\sum_{A^*} y_{A^*} = \rho^*(Q)$. We also know that there exists an optimal solution with rational values. Thus, there exist arbitrarily large $N$ such that $N^{y_{A^*}}$ is an integer for all $A^*$.

For any such $N$, we can construct a database $\D$ as follows. For each attribute class $A$, let $N_A = N^{y_{A^*}}$, and let $[N_A] = \{1,\dots,N_A\}$ be the domain for the attributes in $A^*$. For each relation $R$ of $Q$, let the relation instance $\mathbf{R}$ contain all tuples $t$ for which $t(A) \in [N_A]$ for all attributes $A$, but $t(A) = t(B)$ for any attributes $A$ and $B$ equated in $Q$ (i.e. such that $A^* = B^*$). For each attribute class $A^*$ in $a(R)$ there are $N_A$ possible values of the attributes in $A^*$, so the size of $\mathbf{R}$ will be
\[|\mathbf{R}| = \textstyle\prod_{A^*\in a(R)} N_A = \textstyle\prod_{A^*\in a(R)} N^{y_{A^*}} = N^{\sum_{A^*\in a(R)} y_{A^*}} \leq N.\]
This implies that $|\D| \leq |Q|\cdot N$. However, we have $\sum_{A^*\in a(R)} y_{A^*} = 1$ for at least one relation $R$ (otherwise we could increase any $y_{A^*}$ to produce a better solution to the linear program), so $|\D| \geq N$.

Any tuple $t$ in the result $Q(\D)$ is given by its values $t(A_1) =
\dots = t(A_k) \in [N_{A_1}]\}$ for each attribute class
$A_1^* = \{A_1,\dots,A_k\}$, and any such combination of values gives
a valid tuple in the output. The size of the output is thus
\[|Q(\D)| = \textstyle\prod_{A^*} N_A = N^{\sum_{A^*} y_{A^*}} = N^{\rho^*(Q)} \geq (|\D|/|Q|)^{\rho^*(Q)}.\]
Since all tuples in each relation are distinct, all tuples in the
output are distinct, and we also have $||Q(\D)|| = |Q(\D)| \geq
(|\D|/|Q|)^{\rho^*(Q)}$. The outer projection of $Q$ does not reduce
the cardinality of $Q$'s result, since we consider bag semantics.

Now suppose that $Q$ is repeating, that is, contains multiple
relations mapping to the same name. In that case, such relations
require the same relation instance as their interpretation, while the
database $\D$ constructed in the above proof may assign them different
relation instances. However, consider the database $\D'$ constructed
as follows. For any class $\{R_1,\dots,R_k\}$ of relations mapping to
the same name $R$, replace the relation instances
$\mathbf{R}_1,\dots,\mathbf{R}_k$ in $\D$ by a single relation
instance $\mathbf{R} = \bigcup_i \mathbf{R}_i$ in $\D'$.

Firstly, we have $|\D'| \leq |\D|$, since $|\bigcup_i \mathbf{R}_i| \leq \sum_i |\mathbf{R}_i|$. Secondly, we still have $|\D'| \geq N$, since the size of the largest relation in $\D'$ is at least the size of the largest relation in $\D$. Finally, we have $Q(\D') \supseteq Q(\D)$, because for any relation symbol $R_i$ of $Q$, its interpretation $\mathbf{R}$ in $\D'$ is a superset of its interpretation $\mathbf{R}_i$ in $\D$. Thus we get
\[||Q(\D')|| \geq ||Q(\D)|| \geq (|\D|/|Q|)^{\rho^*(Q)} \geq (|\D'|/|Q|)^{\rho^*(Q)},\]
which completes the proof.

%%%%%%%%%%%%%%%%%%%%%%%%%%%%%%%%%%%%%%%%%%%%%%%%%%%%%%%%%%%%%%%%%%%%
\subsection*{Proof of Lemma~\ref{lem:lower}}

Let $Q = \pi_{\bar{A}}(\sigma_\phi (R_1\times \dots \times R_n))$ be a
query, $\mathcal{T}$ be an f-tree of $Q$, and $\Phi({\cal T})$ be the
$\T$-factorisation of $Q(\D)$.  Also let $R$ be a relation in $Q$. We
show that there exist arbitrarily large databases $\mathbf{D}$ such
that each identifier $r$ from $R$ occurs in $\Phi(\T)$ at least
$(|\D|/|Q|)^{\rho^*(Q_R)}$ times.

Recall that the query $Q_R$ is obtained by restricting $Q$ to the attributes of
$NR = \textrm{Non-relevant}(R)$, and omitting the projection $\pi_{\bar{A}}$.

Applying Lemma~\ref{lem:fcover-lower} to the query $Q_R$, we obtain that there exist arbitrarily large databases $\mathbf{D}_R$ such that $||Q_R(\mathbf{D}_R)|| \geq (|\D_R|/|Q_R|)^{\rho^*(Q_R)}$. Construct the database $\mathbf{D}$ by extending $\mathbf{D}_R$: for each new attribute $A$ allowing a single value $1$, and extending each tuple in each relation by this value in the new attributes. For relations appearing in $Q$ but with no attributes in $Q_R$, the relation instance in $\D$ will consist of a single tuple with value $1$ in each attribute. Notice that $|Q_R| \leq |Q|$ and $|\D| = |\D_R|$, so that $||Q_R(\mathbf{D}_R)|| \geq (|\D|/|Q|)^{\rho^*(Q_R)}$.

Finally, a tuple from $\mathbf{D}_R$ satisfies $\phi_R$ if and only if the corresponding extended tuple satisfies $\phi$, since the values in all attributes outside $NR$ are equal. Moreover, since $R$ has no attributes in $NR$, each identifier $r$ from $R$ corresponds to the tuple $\tuple{t} = \tuple{1,\dots,1}$, and each tuple from $(R_1\times \dots \times R_n)$ satisfies $\sigma_{\mathcal{S}(R) = \tuple{t}}$. By Lemma~\ref{lemma:number-occurrences}, the number of occurrences of any $r$ from $R$ in the $\T$-factorisation of $Q(\D)$ is
\begin{align*}
&|| \pi_{NR} ( \sigma_{{\cal S}(R)=\tuple{t}} \sigma_\phi ( R_1\times
  \dots \times R_n)) || \\
  = &|| \pi_{NR} ( \sigma_\phi ( R_1\times \dots \times R_n)) || \\
  = &|| \sigma_{\phi_R} ( \pi_{NR} ( R_1\times \dots \times R_n)) || \\
  = &|| Q_R(\mathbf{D}_R)|| \\
  \geq& (|\D|/|Q|)^{\rho^*(Q_R)}. \qedhere
\end{align*}

%%%%%%%%%%%%%%%%%%%%%%%%%%%%%%%%%%%%%%%%%%%%%%%%%%%%%%%%%%%%%%%%%%%%
\subsection*{Proof of Corollary~\ref{cor:hierar}}

We show that if $Q$ is hierarchical, the readability of $Q$ is bounded by a constant, while if $Q$ is non-hierarchical, for any f-tree $\T$ of $Q$ there exist databases $\D$ such that the $\T$-factorisation of $Q(\D)$ is read-$\Theta(|\D|)$.

By Proposition~\ref{prop:ftree-hierarchical}, if $Q$ is hierarchical, there exists an f-tree $\T$ of $Q$ such that $\mbox{Non-relevant}(R) = \emptyset$ for all relations $R$. For any such tree $\T$ we have $f(\T) = 0$, hence $f(Q) = 0$, and by Theorem~\ref{th:characterisation}, the readability of $Q(\D)$ is $O(1)$.

If $Q$ is non-hierarchical, for any f-tree $\T$ there is a relation $R$ such that $\mbox{Non-relevant}(R)$ is nonempty. Then the query $Q_R$ contains at least one attribute, and hence $\rho^*(Q_R) \geq 1$. Therefore $f(\T) \geq 1$ and also $f(Q) = 1$. The result then follows from Theorem~\ref{th:characterisation}.

%%%%%%%%%%%%%%%%%%%%%%%%%%%%%%%%%%%%%%%%%%%%%%%%%%%%%%%%%%%%%%%%%%%%
\subsection*{Proof of Theorem~\ref{th:unbounded}}

We show that for a fixed non-repeating query $Q$, the following holds. If $Q$ is hierarchical, the readability of $Q(\D)$ is 1 for any database $\D$. If $Q$ is non-hierarchical, there exist arbitrarily large databases $\D$ such that the readability of $Q(\D)$ is $\Omega(\sqrt{|\D|})$.

In case $Q$ is hierarchical, then by Proposition~\ref{prop:ftree-hierarchical}, there exists an f-tree such that $\textrm{Non-relevant}(R)=\emptyset$ for any relation $R$ of $Q$. By Lemma~\ref{lemma:number-occurrences} it follows that hierarchical queries admit f-representations with readability 1.

If $Q$ is not hierarchical, there exist attribute classes $A^*$ and $B^*$ such that $r(A) \not\subseteq r(B)$, $r(B) \not\subseteq r(A)$ and $r(A) \cap r(B) \neq \emptyset$. Thus there must exist a relation $S$ with attributes from $A^*$ and $B^*$, a relation $R$ with attributes from $A^*$ but not $B^*$, and a relation $T$ with attributes from $B^*$ but not $A^*$.

Fix any positive integer $N$. Consider a database instance $\D$ in which the domains of attributes in $A^*$ and $B^*$ are $\{1,\ldots,N\}$ and the domains of all other attributes are $\{1\}$. For each relation $R$, let its interpretation $\mathbf{R}$ be the set of all possible tuples with the above domains, which respect the equivalence classes of attributes. We annotate the tuple in $R$ with $A^*$-value $i$ by $r_i$, tuple in $T$ with $B^*$-value $j$ by $t_j$, and tuple in $S$ with $A^*$-value $i$ and $B^*$-value $j$ by $s_{ij}$. All relations contain $N^2$, $N$, or only one tuple, depending on whether they contain attributes from $A^*$, $B^*$, both or none. Thus, $|\D| = \Theta(N^2)$.

The polynomial of the flat f-representation of $Q(\D)$, restricted to the identifiers from $R$, $S$ and $T$, is $\sum_{i,j=1}^N r_is_{ij}t_j$, which is exactly the polynomial $p_N$ defined in Lemma~\ref{lemma:readability-rst}. By Lemma~\ref{lemma:readability-rst}, this polynomial has readability $\Omega(N)$. Since any f-representation of $Q(\D)$ restricted to the identifiers of $R$, $S$ and $T$ is equivelant to $p_N$, $Q(\D)$ also has readability $\Omega(N) = \Omega(\sqrt{|\D|})$.

%%%%%%%%%%%%%%%%%%%%%%%%%%%%%%%%%%%%%%%%%%%%%%%%%%%%%%
\section*{Proofs from Section~\ref{sec:algorithm}}

%%%%%%%%%%%%%%%%%%%%%%%%%%%%%%%%%%%%%%%%%%%%%%%%%%%%%%%%%%%%%%%%%%%%
\subsection*{Proof of Lemma~\ref{lem:dec-path}}

For any f-tree $\T$ and relation $R$ labelling a leaf of $\T$, denote by $\textrm{Path}_{\T}(R)$ the set of ancestor nodes of $R$ in $\T$ (thus emphasising the role of the tree $\T$ in our previous notation $\textrm{Path}(R)$), and similarly for $\textrm{Non-relevant}_{\T}(R)$. We show that for any two f-trees $\T_1$ and $\T_2$ for a query $Q$, if $\textrm{Path}_{\T_1}(R) \subseteq \textrm{Path}_{\T_2}(R)$ for any relation $R$ of $Q$, then $f(\T_1) \leq f(\T_2)$.

For any relation $R$ of $Q$, if $\textrm{Path}_{\T_1}(R) \subseteq \textrm{Path}_{\T_2}(R)$, then also\\ $\textrm{Non-relevant}_{\T_1}(R) \subseteq \textrm{Non-relevant}_{\T_2}(R)$. If we let $Q_R^{\T_1}$ be the query induced by $\textrm{Non-relevant}_{\T_1}(R)$, and $Q_R^{\T_2}$ the query induced by $\textrm{Non-relevant}_{\T_2}(R)$, $Q_R^{\T_1}$ is an induced subquery of $Q_R^{\T_2}$, i.e.\ the hypergraph of $Q_R^{\T_1}$ is an induced subhypergraph of $Q_R^{\T_2}$.

If we denote by $L_1$ the fractional-cover linear program for $Q_R^{\T_1}$, as defined in Definition~\ref{def:fractional-cover-number}, and by $L_2$ the fractional-cover linear program for $Q_R^{\T_2}$, then the variables of $L_1$ are just a subset of variables of $L_2$, and the linear conditions of $L_1$ are respective restrictions of the conditions of $L_2$. Thus, any optimal solution of $L_2$ can be restricted to a feasible solution of $L_1$. The cost of such a restricted solution in $L_1$ is always at most the cost of the original solution in $L_2$, which implies that $\rho^*(Q_R^{\T_1}) \leq \rho^*(Q_R^{\T_2})$. By minimising over $R$, we obtain $f(\T_1) \leq f(\T_2)$.

%%%%%%%%%%%%%%%%%%%%%%%%%%%%%%%%%%%%%%%%%%%%%%%%%%%%%%%%%%%%%%%%%%%%
\subsection*{Proof of Lemma~\ref{lem:swap}}

Let $\T$ be an f-tree. For two nodes $A^*$ and $B^*$, we show that if $r(B) \subset r(A)$ and $B^*$ is an ancestor of $A^*$, then by swapping them we do not violate the condition ${\cal C}$ and do not increase $f(\T)$.

For any relation $R \notin r(A)$, the positions of nodes from $\mbox{Relevant}(R)$ remain unchanged. For any relation $R \in r(A)$, the leaf labelled by $R$ is under $A^*$ and hence by swapping $A^*$ and $B^*$, all nodes relevant to $R$ stay on the path from $R$ to the root. Therefore, the condition $\mathcal{C}$ remains satisfied.

It remains to prove that by this swap, the parameter $f(\T)$ does not increase. The only relations $R$ for which the set $\textrm{Path}(R)$ changes (and thus $\rho^*(Q_R)$ can change), are those lying in the subtree under $B^*$ but not in the subtree under $A^*$. For such $R$, we replace the node $B^*$ in $\textrm{Path}(R)$ by the node $A^*$. Consider the fractional-cover linear program for $Q_R$, defined as
\begin{align*}
\textrm{minimise}\qquad&\textstyle\sum_i x_i \\
\textrm{subject to}\qquad&\textstyle\sum_{i : R_i \in r(A)} x_i \geq 1 \qquad\textrm{for all attributes $A$, and}\\
& x_i \geq 0 \qquad\qquad\qquad\:\:\:\,\textrm{for all $i$.}
\end{align*}
in Definition~\ref{def:fractional-cover-number}. By replacing $B^*$ with $A^*$, the only change to this program is the strenghtening of the condition $\sum_{i : R_i \in r(B)} x_i \geq 1$ to $\sum_{i : R_i \in r(A)} x_i \geq 1$. Therefore, the cost $\rho^*(Q_R)$ of the optimal solution can only decrease. By minimising over all relations $R$ of $Q$, we conclude that 
$f(\T)$ can also only decrease.

%%%%%%%%%%%%%%%%%%%%%%%%%%%%%%%%%%%%%%%%%%%%%%%%%%%%%%%%%%%%%%%%%%%%
\subsection*{Proof of Proposition~\ref{prop:algo-hierarchical}}

We show that for a hierarchical query $Q$, the algorithm
\textbf{iter-pruned} has exactly one choice at each recursive call,
and outputs a single reduced f-tree in polynomial time.

The standard algorithm for recognising hierarchical queries (described
in~\cite{DS2007b}, though in the language of conjunctive queries) is
as follows.

\begin{compactitem}
\item Find the connected components of the query, in the sense that two relation symbols are connected if some of their attributes are equated by the query.
\item For each connected component, there must exist an attribute class with attributes in each relation in the component. If not, the query is not hierarchical. Create a node labelled by this attribute class, make it the root of an f-tree, and recurse on the rest of the component to produce its children subtrees. 
\item Output the disjoint union of the trees produced for each component.
\end{compactitem}

The connected components of the query correspond to the finest partition $P_1,\dots,P_n$ of the attribute classes such that each relation only has attributes from one $P_i$. If the considered query is hierarchical, for each such $P_i$ there exists an attribute class with attributes in each relation of $P_i$. That is, there exists at least one $A^* \in P_i$ such that for other classes $B^* \in P_i$, $r(A) \supseteq r(B)$. The lexicographically greatest such $A^*$ will be the maximum element in the $>$-order. The algorithm \textbf{iter-pruned} will therefore only consider this $A^*$ for the root of the subtree formed by $P_i$.

We have thus shown that for hierarchical queries, \textbf{iter-pruned} essentially follows the recognising algorithm given above, never branching when picking the root node, and hence outputting a single reduced f-tree. This also means that there are at most linearly many recursive calls of \textbf{iter-pruned}. Since each call takes polynomial time, the total running time is also polynomial (in the size of the query).

%%%%%%%%%%%%%%%%%%%%%%%%%%%%%%%%%%%%%%%%%%%%%%%%%%%%%%
\section*{Proofs from Section~\ref{sec:falgorithm}}

%%%%%%%%%%%%%%%%%%%%%%%%%%%%%%%%%%%%%%%%%%%%%%%%%%%%%%%%%%%%%%%%%%%%
\subsection*{Proof of Lemma~\ref{lem:line1}}

We show that the total amount of time taken by line (1) of ${\rm \bf gen2}$ when computing the $\T$-factorisation of $Q(\D)$ is $O(|Q|\cdot|\D|^{f(\T)+1})$.

Let $A^*$ be any node in $\T$, let $\mathcal{U}$ be the subtree of $\T$ rooted at $A^*$ and let $\mathrm{Path}(A)$ be the set of ancestor nodes of $A^*$. Consider any call $\textbf{gen2}(\mathcal{U},\mathcal{R})$, where $\mathcal{R}$ is a collection of ranges of tuples in $\D$. For each such call, the tuples in $\mathcal{R}$ agree on the values of attributes from $\mathrm{Path}(A)$, moreover, the ranges $\mathcal{R}$ contain all tuples of $\D$ with these values. Denote by $\gamma$ the condition on the attributes from $\mathrm{Path}(A)$ with the values given by tuples in $\mathcal{R}$. For each call $\textbf{gen2}(\mathcal{U},\mathcal{R})$, the ranges $\mathcal{R}$ are different and hence this condition $\gamma$ is different. Conversely, for any $\gamma$ such that the corresponding ranges in the relations of $\D$ are all nonempty, $\textbf{gen2}$ will be called with these ranges in the second parameter and $\mathcal{U}$ in the first parameter.

We will now calculate the total amount of time taken by line (1) in all calls of $\textbf{gen2}(\mathcal{U},\mathcal{R})$ for a fixed $\mathcal{U}$, rooted at $A^*$. We have argued before the statement of the Lemma that the amount of time taken by line (1) in any single call $\textbf{gen2}(\mathcal{U},\mathcal{R})$ is linear in the number of tuples in the ranges~$\mathcal{R}$. Instead of summing the number of tuples in $\mathcal{R}$ for each such call, we will fix a tuple $\tuple{t}$ and find the number of calls for which $\mathcal{R}$ contains this tuple. Equivalently, we will find the number of the corresponding conditions~$\gamma$ satisfied by $\tuple{t}$.

For a condition $\gamma$, the ranges $\mathcal{R}$ corresponding to $\gamma$ in $\D$ are nonempty iff $\big(\sigma_\gamma(R_1 \times \dots \times R_n)\big)(\D)$ is nonempty. Furthermore, the corresponding ranges are nonempty \emph{and} $\gamma$ is satisfied by $\tuple{t}$ iff $\big(\sigma_{\mathcal{S}(R)=\tuple{t}}(\sigma_\gamma(R_1 \times \dots \times R_n))\big)(\D)$ is nonempty. Equivalently, this is true iff $\big(\pi_{\mathrm{Path}(A)}(\sigma_{\mathcal{S}(R)=\tuple{t}}(\sigma_\gamma(R_1\times \dots \times R_n)))\big)(\D)$ is nonempty, but moreover, in such case that set contains precisely one element, which uniquely corresponds to the condition $\gamma$. Therefore, the total number of conditions $\gamma$ on the attributes of $\mathrm{Path}(A)$, for which the corresponding ranges are nonempty, and which are satisfied by $\tuple{t}$, is
\begin{align*}
&\phantom{=} \textstyle\sum_{\gamma} ||\big(\pi_{\mathrm{Path}(A)}(\sigma_{\mathcal{S}(R)=\tuple{t}}(\sigma_\gamma(R_1 \times \dots \times R_n)))
\big)(\D)|| \\
&= ||\textstyle\bigcup_{\gamma} \pi_{\mathrm{Path}(A)}(\sigma_{\mathcal{S}(R)=\tuple{t}}(\sigma_\gamma(R_1 \times \dots \times R_n)))\big)(\D)|| \\
&= ||\big(\pi_{\mathrm{Path}(A)}(\sigma_{\mathcal{S}(R)=\tuple{t}}(\sigma_\alpha(R_1 \times \dots \times R_n)))\big)(\D)||,
\end{align*}
where $\sum_\gamma$ and $\bigcup_\gamma$ range over all possible conditions $\gamma$ assigning values from $\D$ to attribute classes of $\mathrm{Path}(A)$, and $\alpha$ expresses the equality of attributes in each attribute class of $\mathrm{Path}(A)$, without assigning them particular values. However, if we let $NA = \mathrm{Path}(A) \setminus \mathrm{Relevant}(R)$, we get 
\begin{align*}
&\phantom{=} ||\big(\pi_{\mathrm{Path}(A)}(\sigma_{\mathcal{S}(R)=\tuple{t}}(\sigma_\alpha(R_1 \times \dots \times R_n)))\big)(\D)||\\
&= ||\big(\pi_{NA}(\sigma_{\mathcal{S}(R)=\tuple{t}}(\sigma_\alpha(R_1 \times \dots \times R_n)))\big)(\D)||\\
&\leq ||\big(\pi_{NA}(\sigma_\alpha(R_1 \times \dots \times R_n))\big)(\D)||\\
&\leq ||\big(\sigma_{\phi_{NA}}(\pi_{NA}(R_1 \times \dots \times R_n))
\big)(\D)||\\
&= ||Q_{NA}(\D_{NA})||,
\end{align*}
where $Q_{NA}$ and $\D_{NA}$ are defined analogously to $Q_R$ and
$\D_R$. By Lemma~\ref{lem:fcover-upper}, this number is at most
$|\D_{NA}|^{\rho^*(Q_{NA})} = |\D|^{\rho^*(Q_{NA})}$. However, since
$Q_{NA}$ is an induced subquery of $Q_R$, we have $\rho^*(Q_{NA}) \leq
\rho^*(Q_R)$, which is in turn at most $f(\T)$. We can thus conclude
that for a fixed tuple $\tuple{t}$ from a relation $R \in r(A)$, the
total number of conditions $\gamma$ on the attributes of
$\mathrm{Path}(A)$, for which the corresponding ranges are nonempty,
and which are satisfied by $\tuple{t}$, is at most $|\D|^{f(\T)}$.

There are at most $|\D|$ tuples in the relations of $r(A)$, so the total amount of time taken by line (1) in all calls of $\textbf{gen2}(\mathcal{U},\mathcal{R})$, for $\mathcal{U}$ rooted at a fixed node $A^*$, is linear in $|\D|^{f(\T)+1}$. Since there are at most $|Q|$ different nodes $A^*$, so the total time taken by line (1) is linear in $|Q|\cdot|\mathbf{D}|^{f(\mathcal{T})+1}$.

%%%%%%%%%%%%%%%%%%%%%%%%%%%%%%%%%%%%%%%%%%%%%%%%%%%%%%
\section*{Proofs from Section~\ref{sec:constants}}

%%%%%%%%%%%%%%%%%%%%%%%%%%%%%%%%%%%%%%%%%%%%%%%%%%%%%%%%%%%%%%%%%%%%
\subsection*{Proof of Corollary~\ref{cor:const-upper}}

We show that for any query $Q$ with constants and any database $\D$, the readability of $Q(\D)$ is at most $M\cdot |\D|^{f(Q)}$.

Recall that $M$ is the maximal number of relations of $Q$ mapping to the same name, and is the same for $Q$ as for $Q'$. By Corollary~\ref{cor:upperbound}, the readability of $Q(\D) = Q'(\sigma_{\phi_{\mathcal{C}}}(\D))$ is at most $M\cdot |\sigma_{\phi_{\mathcal{C}}}(\D)|^{f(Q')}$. Since $f(Q') = f(Q) \geq 0$ and $|\sigma_{\phi_{\mathcal{C}}}(\D)| \leq |\D|$, this is at most $M \cdot |\D|^{f(Q')}$.

%%%%%%%%%%%%%%%%%%%%%%%%%%%%%%%%%%%%%%%%%%%%%%%%%%%%%%%%%%%%%%%%%%%%
\subsection*{Proof of Corollary~\ref{cor:const-lower}}

We show that for any query $Q$ with constants and any f-tree $\T$ of $Q$, there exist arbitrarily large databases $\D$ for which the $\T$-factorisation $Q(\D)$ is at least read-$(|\D|/|Q|)^{f(Q)}$.

The attributes in $\C$ do not appear in any equalities in $Q'$, so each attribute is only relevant to one relation. In any f-tree of $Q'$, we can move these attributes downwards towards their respective relations, thus only decreasing the non-relevant sets for other relations, and hence not increasing $f(\T)$. It follows that there exists an f-tree $\T$ with $f(\T) = f(Q')$, such that for any relation $R$, $\mbox{Non-relevant}(R)$ does not contain any attributes from $\C$.

Now by Corollary~\ref{cor:lowerbound}, there exists arbitrarily large databases $\D$ for which the $\T$-factorisation $Q'(\D)$ is at least read-$(|\D|/|Q'|)^{f(Q')}$. Moreover, from the proof of Lemma~\ref{lem:lower} it follows that $\D$ can be constructed in such a way that for some relation $R$, all attributes not in $\mbox{Non-relevant}(R)$ have domain of size one. In particular, all attributes from $\C$ have domain of size one. By renaming the values of these attributes to the respective constants from $\phi_\C$ we can arrange that $\sigma_{\phi_{\mathcal{C}}}(\D) = \D$. Since $|Q| = |Q'|$ and $f(Q) = f(Q')$, it follows that the $\T$-factorisation $Q(\D) = Q'(\sigma_{\phi_\C}(\D))$ is at least read-$(|\D|/|Q|)^{f(Q)}$.

%%%%%%%%%%%%%%%%%%%%%%%%%%%%%%%%%%%%%%%%%%%%%%%%%%%%%%%%%%%%%%%%%%%%
\subsection*{Proof of Corollary~\ref{cor:const-hierarchical}}

Let $Q$ be a non-repeating query with constants. We show that if $Q'$ is hierarchical, the readability of $Q(\D)$ is 1 for any database $\D$, and if $Q'$ is non-hierarchical, there exist arbitrarily large databases $\D$ such that the readability of $Q(\D)$ is $\Omega(\sqrt{|\D|})$.

For hierarchical queries we have $f(Q) = 0$ and the result follows from Corollary~\ref{cor:const-upper}. For non-hierarchical queries, by Theorem~\ref{th:unbounded} there exist arbitrarily large databases $\D$ such that the readability of $Q'(\D)$ is $\Omega(\sqrt{|\D|})$. Moreover, from the proof of Theorem~\ref{th:unbounded} it follows that apart from two attribute classes $A^*$ and $B^*$ such that $r(A) \not\subseteq r(B)$, $r(B) \not\subseteq r(A)$ and $r(A) \cap r(B) \neq \emptyset$, we can arrange that all attributes have domains of size one. We cannot have $A\in\C$ or $B\in\C$, since each attribute in $\C$ is only relevant to one relation, so we can in fact arrange that all attributes from $\C$ have domains of size one. Again by simple renaming of values, we obtain $\D = \sigma_{\phi_\C}(\D)$, and hence the readability of $Q(\D) = Q'(\sigma_{\phi_\C}(\D))$ is $\Omega(\sqrt{|\D|})$.

\end{document}